\def\rsymbol{\large\hbox{\textbf{r}}}
\def\Msymbol{\hbox{\textbf{M}}}
\def\Asymbol{\hbox{\textbf{A}}}
\def\rsymbol{\hbox{\textbf{r}}}
\DeclarePairedDelimiter{\ceil}{\lceil}{\rceil}
\newcounter{noitem}
\newenvironment{definition}[1][]{\refstepcounter{noitem}\par\smallskip
   \noindent \textbf{Definition~\thenoitem:\,#1} \rmfamily}{\smallskip}
\newenvironment{theorem}[1][]{\refstepcounter{noitem}\par\smallskip
   \noindent \textbf{Theorem~\thenoitem:\,#1} \rmfamily}{\smallskip}
\newenvironment{lemma}[1][]{\refstepcounter{noitem}\par\smallskip
   \noindent \textbf{Lemma~\thenoitem: #1} \rmfamily}{\smallskip}
 \renewenvironment{proof}[1][]{\par\smallskip
   \noindent \textbf{Proof:\,#1  }  \rmfamily }{\hfill $\square$ \smallskip}
   \newenvironment{example}[1][]{\refstepcounter{noitem}\par\smallskip
   \noindent \textbf{Example~\thenoitem:\,#1} \rmfamily}{\smallskip}
      \newenvironment{problem}[1][]{\refstepcounter{noitem}\par\smallskip
   \noindent \textbf{Problem~\thenoitem: #1} \rmfamily}{\smallskip}
  \newenvironment{proposition}[1][]{\refstepcounter{noitem}\par\smallskip
   \noindent \textbf{Proposition~\thenoitem:\,#1} \rmfamily}{\smallskip}
     \newenvironment{corollary}[1][]{\refstepcounter{noitem}\par\smallskip
   \noindent \textbf{Corollary~\thenoitem:\,#1} \rmfamily}{\smallskip}
\colorlet{shadecolor}{gray!80}
 \newcommand\dout{\!\!\!
        \bgroup
        \markoverwith{%
        \textcolor{red}%
        {
            \rule[0.2ex]{0.1pt}{1.0pt}%
            \hskip-0.1pt
            \rule[0.8ex]{0.1pt}{1.0pt}%
        }}
        \ULon
     }
\begin{document}
\setstcolor{red}

\title{Generalized Interference of Fermions and Bosons}

\author{Dylan Spivak}
\address{ Department of Mathematical Sciences, Lakehead University, Thunder Bay, Ontario P7B 5E1, Canada}

\author{Murphy Yuezhen Niu}
\address{ Department of Physics, Massachusetts Institute of Technology, Cambridge, Massachusetts 02139, USA }

\author{Barry C.\ Sanders}
\address{ Institute for Quantum Science and Technology,
        University of Calgary, Alberta T3A~0E1, Canada}
        
\author{Hubert de~Guise}
\address{Department of Physics, Lakehead University, Thunder Bay, Ontario P7B 5E1, Canada}


\begin{abstract} Using tools from representation theory, we derive expressions for the coincidence rate of partially-distinguishable particles in an interferometry experiment. Our expressions are valid for either bosons or fermions, and for any number of particles. In an experiment with $n$ particles the expressions we derive contain a term for each partition of the integer $n$; Gamas's theorem is used to determine which of these terms are automatically zero based on the pairwise level of distinguishability between particles. Most sampling schemes (such as Boson Sampling) are limited to completely indistinguishable particles; our work aids in the understanding of systems where an arbitrary level of distinguishability is permitted. As an application of our work we introduce a sampling scheme with partially-distinguishable fermions, which we call Generalized Fermion Sampling.
\end{abstract}

\maketitle

\section{Introduction}\label{intro}

In this paper we investigate coincidence rates for $n$ particles arriving not necessarily simultaneously at $n$ detectors located at the output of an $m\times m$ unitary linear interferometer. We extend previous work on coincidence rates \cite{TGdGS13,dGTPS14,TTS+15,dGS17,KSSdG18} for partially distinguishable particles, either bosons or fermions; in doing this we improve on the understanding of the Hong-Ou-Mandel effect~\cite{HongOuMandel} for many-particle systems~\cite{GenHOM}. Our work is motivated by the problem of \textsc{BosonSampling}~\cite{AA11}, which has thrown new bridges 
between computational complexity and linear optics but initially dealt with indistinguishable, simultaneous bosons. The boson-sampling computational challenge is to ascertain, computationally and experimentally, the distribution of coincidence rates as the location of the detectors is changed.
The excitement of boson sampling and its associated computational problem 
is that coincidence rates for indistinguishable bosons are given by permanents~\cite{marcus1965permanents} of (nearly) Gaussian-random $n\times n$ complex matrices;
whereas these permanents are hard to compute, they are easily accessible experimentally~\cite{KHHK19}.

Boson sampling has been generalized to non-simultaneous arrival times~\cite{TTS+15} and Gaussian states~\cite{LLR+14}. A more recent refinement of the original boson sampling proposal is the rapidly developing study and implementation of Gaussian BosonSampling ~\cite{hamilton2017gaussian,huh2017vibronic,arrazola2018quantum,kruse2019detailed,zhong2021phase}, which leverages the transformation properties of squeezed states to produce an equally hard computational task via computation of a hafnian rather than a permanent.
Originally regarded as the quickest way to establish quantum advantage~\cite{Pre12},
some argue that efficiently solving \textsc{BosonSampling} could yield a practical benefit~\cite{huh2015boson,nikolopoulos2016decision, chakhmakhchyan2017quantum, nikolopoulos2019cryptographic, banchi2020molecular}.

Whereas the interference of light is a venerable centuries-old topic~\cite{You02},
fermionic interference is relatively recent,
dating back to electron diffraction~\cite{DG28}.
The notion of fermion interferometry,
as a concept for one or more fermions to face one or more paths,
controlled by analogues of beam splitters and phase shifters,
extends the concept of fermion interference to controlled interferometric processes~\cite{Yur88}
such as electronic analogues to Young's double-slit experiment~\cite{BPLB13}
and fermion anti-bunching evident in current fluctuations of partitioned electrons~\cite{OKLY99}.
The term fermion interferometry has been used in nuclear physics~\cite{GC00},
based on a Hanbury Brown and Twiss type of two-nucleon correlation measurement~\cite{Koo77,PT87}
akin to fermion antibunching~\cite{OKLY99},
and not fermion interferometry as we study here. Some recent advances in the coherent control of electrons, including applications
to fermionic interferometry, are reviewed in \cite{bauerle2018coherent}.

It is convenient to think of the permanent, which is often
introduced as an ``unsigned determinant",  as a group
function on a $GL_n(\mathbb{C})$ matrix associated with the fully symmetric representation of the permutation group $\mathfrak{S}_n$ of $n$ objects. From this perspective, the determinant is also a group function but associated with the alternating representation of the permutation group
$\mathfrak{S}_n$.

It is well established that the permanent and the determinant are required to evaluate rates for simultaneous bosons and fermions, respectively.  Here, we will show as a first result of this paper that, for nonsimultaneous arrival times, additional group functions beyond the permanent and determinant and associated with other representations of $\mathfrak{S}_n$,  contribute to the 
coincidence rate.  The rate can also be expressed in terms of immanants~\cite{littlewood1934group},
which generalize matrix determinants and permanents and in fact interpolate between them. A second important result of our work follows as a corollary to Gamas's theorem \cite{gamas,BergetGamas}: we will show how the mutual partial distinguishability of the particles determines which group functions will have non-zero contributions. To our knowledge, this is the first time Gamas's theorem has been applied to a physical system.
Our third result states that, if at most $\ceil*{n/2}$ fermions are indistinguishable, then the exact coincidence rate for fermions contains computationally expensive group functions which require a number of operations that grows exponentially in the number of fermions. As a fourth result, we
show that, for uniformly random arrival times, the probability of needing to evaluate these computationally expensive group functions differs from
$1$ by a quantity that goes to $0$ exponentially fast in the number of fermions.
This last result naturally leads us to introduce \textsc{GenFermionSampling} as the computational problem of sampling the distribution of coincidence rates generated by fermions arriving nonsimultaneously at the detectors. 

Before we proceed we need to clarify issues of semantics. 
By ``arrival time'', 
we are referring to particles whose temporal profile (wave-packet shape) is effectively localized in time; we define the arrival times as the delay between particle creation at the source and the time taken for the wavepacket to travel from the source to the interferometer \emph{output} port.
Operationally,
our arrival times are effectively controllable delays in the interferometer input channels.  If two or
more arrival times are the same, the relevant particles ``arrive simultaneously'' at the detectors; if arrival times
are different, the particles are ``nonsimultaneous''.
Even if the particles 
do not arrive simultaneously at the detectors, 
we will speak of coincidence rate in the sense of particles detected 
during a single run
by detectors at selected positions.  We imagine an operator triggering the release 
of particles and opening a detection-time window of some fixed time interval, long 
enough for particles to arrive  at the interferometer, to be 
scattered in the interferometer and  counted at the detectors.  
The detection-time window
then closes and all the detections during this window are described as coincident 
detections.  The rate at which
they are detected in coincidence depends on the interferometer, the times of arrival
at the interferometer, and positions of the detectors; this rate is called the 
\emph{coincidence rate} throughout this paper.

\section{Notation and Mathematical Preliminaries}
\label{subsec:mathscattering}

\subsection{The scattering matrix and its submatrices}\label{subsec:interferometry}

We envisage an $m$-channel passive interferometer~\cite{yurke19862} that receives single particles at $n$ ($n\le m$) of its input ports and no particles at the rest of the input ports.
Mathematically,
this interferometer transformation is described by a unitary linear transformation $U$,
which is an $m\times m$ unitary matrix. We assume that $U$ is a Haar-random matrix
to avoid any symmetries in the matrix entries that would inadvertently simplify the rate calculations.

For an $n$-particle input state,
with one particle per input port,
we label without loss of generality this input by $1$ to $n$ as port labels,
or, in the Fock basis,
as $\ket{1}_n=\ket{1,1,1, \ldots, 0, 0, 0}$. Detectors are placed at the output ports, and,
although either one or zero particles enter each input,
more than one particle can be registered at an output detector.
For the generic case, we have $s_i$ particles detected at the $i^\text{th}$ channel, and the total number of particles is conserved such that $\sum_i s_i = n$.

Let $s\in \Phi_{m,n}$, where $\Phi_{m,n}$ is the set of $m$-tuples $(s_1,\ldots, s_m)$
so that $s_1+\ldots + s_m=n$.
Mathematically we first construct the $m \times n$ submatrix~$\mathcal{A}$
by keeping only the first $n$ columns of $U$.
Next, we construct the $n\times n$ submatrix $A(s)$ for each $s\in \Phi_{m,n}$, 
by replicating side-by-side $s_i$ times row $i$ 
of $\mathcal{A}$, and by deleting all rows for which~$s_i=0$. 

In the remainder of this paper
we focus on and restrict our presentation
and arguments to those strings $s$ with $s_i=0$ or $1$ for all $i$.
Practically, this can be achieved by diluting the particles so that 
$m\gg n$, in which case the probability that
$s$ will contain any $s_i>1$ is small. Alternatively, one may
post-select those events where the above condition holds, ignoring
events where some of the $s_i>1$.  
As a consequence, all denominator factors of $s_1!s_2! \ldots s_m!$ 
that would appear in a rate are $1$, and every submatrix $A(s)\in GL_n(\mathbb{C})$.

\subsection{Enter the permutation group}

As mentioned above we envisage a linear lossless interferometer modelled as a unitary $m\times m$ matrix. Suppose that~$n$ ($n< m$) particles enter in a subset of the possible input ports, exactly one particle per input port.  Without loss of generality we label these input ports from $1$ to $n$. The effect of the interferometer is to scatter each single particle so it eventually 
{reaches one of the detectors; each detector is located at a different output port, 
and we assume that the $n$ input particles reach $n$ different detectors with exactly one particle per detector.}
Given a string $s \in \Phi_{m,n}$ so that $s_i=0$ or $1$, the 
positions of the $n$ detectors registering one particle are now 
simply the list of those $i$'s for which $s_i=1$.  This list of 
detector positions 
we denote by $S$, so that
$S_k$ is the position of the $k^{th}$ detector that registers a particle. For example, if $s=(0,1,1,0,1)$, then we have $S=(2,3,5)$ with $S_1=2, S_2=3, S_3=5$
as shown in the example of Fig.~\ref{fig:scatterexample}.

Given $s \in \Phi_{m,n}$ and $S$, 
the $n\times n$ matrix $A(s)$ thus has elements
\begin{align}
A(s)_{\beta,\alpha}= U_{S_\beta,\alpha}\,  ,\qquad 
\alpha,\beta=1,\ldots,n\, .
\end{align}
With $s$ as above, for instance, we have, using $A(235):=A(01101)$
\begin{align}
    A(235)=
    \left(\begin{array}{ccc}
    A(235)_{11}&A(235)_{12}&A(235)_{13}\\
    A(235)_{21}&A(235)_{22}&A(235)_{23}\\
    A(235)_{31}&A(235)_{32}&A(235)_{33}
    \end{array}\right)= \left(\begin{array}{ccc}
    U_{21}&U_{22}&U_{23}\\
    U_{31}&U_{32}&U_{33}\\
    U_{51}&U_{52}&U_{53}
    \end{array}\right)\, .
\end{align}
The matrix $A(s)$ is in general not unitary but rather $A(s)\in GL_n(\mathbb{C})$.
(Note that, if we allowed more than one particle per output port, 
$A(s)$ is no longer in $GL_n(\mathbb{C})$.)

{During a run of the experiment,} the effect of the interferometer is to 
scatter a single particle state from $i$ to any \emph{one} output 
port $S_k$. {The data of the $n$ detectors that have registered a particle is then collected.} Fig.~\ref{fig:scatterexample}
also shows that
particle $1$ ends up in detector $2$, located at output port 
$3$; particle $2$ reaches detector $1$ at port $2$; and particle 
$3$ reaches detector $3$ at port $5$ .  
 The amplitude for this process is proportional to the monomial
$ A(s)_{21}A(s)_{12}A(s)_{33}$.
There are clearly six possible ways for 
particles to exit in the three detectors, with exactly one particle per detector, and each of these six possible ways is a permutation of $(S_1,S_2,S_3)$.  The net amplitude is thus a sum of $6$ terms of the form 
$A(s)_{k1}A(s)_{j2}A(s)_{\ell3}f(\bar\tau_1)g(\bar\tau_2)h(\bar\tau_3) $, 
where $(k,j,\ell)$ is a permutation of $(S_1,S_2,S_3)$
and $f,g,h$ are functions of the individual arrival times at the interferometer.

\begin{figure}[h!]
    \centering
    \includegraphics[scale=0.5]{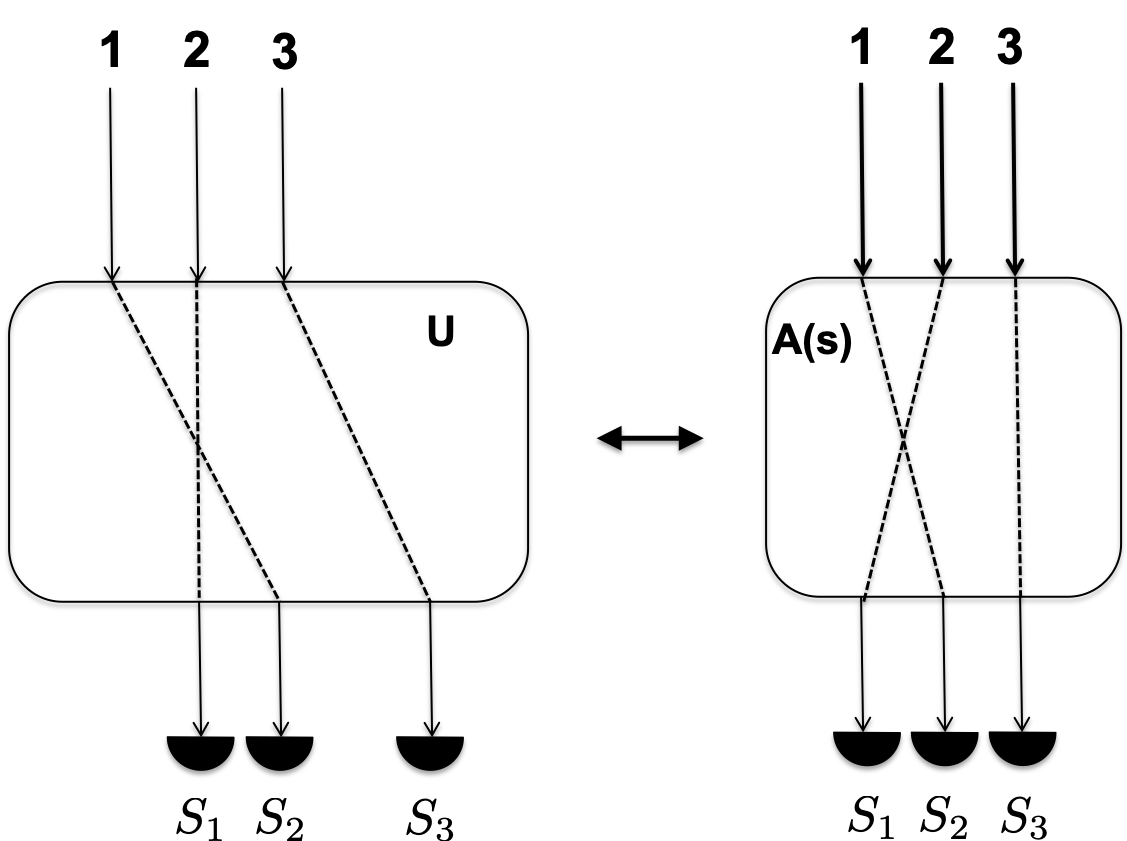}
    \caption{An example of an interferometer with $m=5$ and $3$ particles, illustrating
    the labelling and notation for the matrices $U$ and $A(s)$.}
    \label{fig:scatterexample}
\end{figure}

More generally, as we restrict ourselves to situations where the~$n$ particles are detected in different output ports, the effect of the interferometer is to shuffle the input $(1,\ldots,n)$ so that particles can exit in $n!$ possible ways at the output, with each of these indexed by a permutation of $(1,\ldots,n)$.  Each of these $n!$ ways is one term to be added to obtain the net scattering amplitude for this fixed set of detector positions. We see that the permutation (or symmetric) group $\mathfrak{S}_n$ ~\cite{james2009representation,tung1985group} of~$n$ particles enters into the problem of computing coincidence rates in a natural way. 

\subsection{Partitions, Immanants, and group functions}
\label{subsec:partitions}

Here we explain matrix immanants which - like the permanent or the determinant - are polynomials in the entries of a matrix, albeit neither fully symmetric nor antisymmetric, and are thus naturally well suited 
to understand the structure of coincidence rates for partially distinguishable particles.  Immanants are closely related to Young diagrams, Young tableaux, and general irreducible representations of $\mathfrak{S}_n$; the references~\cite{lichtenberg2012unitary,rowe2010fundamentals,Yon07} are especially helpful for a 
longer discussion of these concepts.  Additional details on this background can be also found in \cite{spivak2020immanants}.

The study of the permutation group is closely tied to the partitions of $n$, as these partitions label the irreducible representations of $\mathfrak{S}_n$.
A partition of a positive integer~$n$ is a $k$-tuple $\lambda = (\lambda_1, \lambda_2, \dots, \lambda_k)$ such that $\lambda_1 \geq \lambda_2 \geq \dots \geq \lambda_k \geq 1$ and $\sum_{i=1}^k \lambda_i=n$; we call $k$ the number of parts of a partition, $\lambda_1$ the width, and~$n$ the size. The notation $\lambda \vdash n$ means that $\lambda$ is a partition of $n$. A Young diagram is a visual way to write a partition using boxes.
Given a partition $\lambda$, we construct the associated Young diagram by placing a row of $\lambda_1$ boxes at the top, then we add a left-justified row of $\lambda_2$ boxes is added directly below. This process is repeated for each $\lambda_i$. The partition $\lambda$ is referred to as the shape of the Young diagram. 

We  also need the partition $\lambda^*$ \emph{conjugate} to the partition $\lambda$, 
which is obtained from $\lambda$ by exchanging rows and columns of $\lambda$.  This is equivalent 
to reflecting the diagram $\lambda$ about the main diagonal.
\begin{example}
Conjugate partitions for $\mathfrak{S}_4$:
\begin{align}
\Yvcentermath1\Yboxdim{8pt}\yng(4)\hspace{4mm} &\hbox{is conjugate to} \hspace{4mm} \Yvcentermath1\Yboxdim{8pt}\yng(1,1,1,1) \nonumber \\
\Yvcentermath1\Yboxdim{8pt}\yng(3,1)\hspace{4mm} &\hbox{is conjugate to} \hspace{4mm} \Yvcentermath1\Yboxdim{8pt}\yng(2,1,1)  \nonumber \\
\Yvcentermath1\Yboxdim{8pt}\yng(2,2)\hspace{4mm} &\hbox{is conjugate to itself}\ \nonumber
\end{align}
\end{example}

A Young tableau (plural tableaux) on a Young diagram of size~$n$ is a numbering of the boxes using entries from the set $\{1,2,\dots,n\}$. A standard tableau has its entries strictly increasing across rows and strictly increasing down the columns, as a result each integer from $1$ to~$n$ appears exactly once in the tableau. The conditions for the semi-standard tableaux are relaxed: entries weakly increase along rows, but still must increase strictly down columns. We use the notation 
$\text{s}_\lambda$ and $\text{d}_\lambda$ for the number of standard and semi-standard Young tableaux of shape $\lambda$, respectively. 

For an arbitrary $n \times n$ matrix $B$, 
there is an immanant defined for every partition of $n$. The $\lambda$-immanant of $B$ is 
\begin{align}
\operatorname{imm}^\lambda{(B)}:=\sum_{\sigma \in \mathfrak{S}_n} \chi_\lambda(\sigma) \prod_{i=1}^n B_{\sigma(i),i}.
\end{align}
The notation $\chi_\lambda(\sigma)$ refers to the 
character of element~$\sigma$ in the $\lambda$-representation of $\mathfrak{S}_n$. The permanent 
and determinant are special cases of immanants that correspond to 
the trivial representation and the alternating representation, 
respectively. The simplest non-trivial example is the 
$\Yvcentermath1\Yboxdim{4pt}\yng(2,1)$-immanant of a $3 \times 3$ 
matrix 
\begin{align}
\operatorname{imm}^{\Yvcentermath1\Yboxdim{4pt}\yng(2,1)}{(B)} = 2 B_{11}B_{22}B_{33} - B_{21}B_{32}B_{13} - B_{31}B_{12}B_{23}.   
\end{align}

Kostant provides a connection between certain group functions of $GL_n(\mathbb{C})$ and immanants~\cite{kostant}.
Let $B\in GL_n(\mathbb{C})$ and consider the representation $D_\lambda$ of  $GL_n(\mathbb{C})$,
namely,
\begin{align}
D_\lambda: GL_n(\mathbb{C}) &\rightarrow GL_{\text{d}_\lambda}(\mathbb{C})\\
            B &\rightarrow D_\lambda(B). \nonumber
\end{align}
A $\mathcal{D}$-function is a function of the form 
\begin{align}
\mathcal{D}^\lambda_{i,j}(B) = \bra{\psi^\lambda_i} D_\lambda(B) \ket{\psi^\lambda_j},  \label{eq: innerprod}
\end{align}
where $\ket{\psi_i^{\lambda}}$ is a basis vector of the representation space of 
the $D_\lambda$ representation of $GL_n(\mathbb{C})$. Define 
$\mathcal{D}^{\lambda}(B)$ to be the matrix with $\mathcal{D}^{\lambda}_{i,j}(B)$ 
as its $ij^\text{th}$ entry. We are particularly interested in the states 
$\ket{\psi_j^\lambda}$ that live in the $(1,1,1,...,1)$-weight subspace of 
$D_\lambda$ (if $B\in SU(n)$ this is referred to as the $0$-weight subspace).
Recall that a vector $v\in \mathbb{C}^n$ has weight $w=(w_1,w_2,\ldots,w_n)$ when
\begin{align}
        \hbox{diag}(t_1,t_2,\ldots,t_n)v= t_1^{w_1} t_2^{w_2}\ldots t_n^{w_n}v.
\end{align}
The following theorem of Kostant shows a deep relationship between immanants and $\mathcal{D}$-functions.\\
\begin{theorem}
\label{theorem:Kostant}
(Kostant~\cite{kostant}): Given $B \in GL_n(\mathbb{C})$, it follows that
\begin{align}
\operatorname{imm}^\lambda(B) = \sum_{i} \mathcal{D}^\lambda_{i,i}(B),   \label{eq:immassumofD} 
\end{align}
where the sum is over all states of weight $(1,1,1,...,1)$.
\end{theorem}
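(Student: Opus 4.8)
The plan is to realize both sides of the identity inside the $n$-fold tensor power $V := (\mathbb{C}^n)^{\otimes n}$ and then compute a single trace in two different ways. Let $B^{\otimes n}$ act on $V$ diagonally, let $\mathfrak{S}_n$ act by permuting tensor factors (fix the convention $\sigma\cdot(w_1\otimes\cdots\otimes w_n) = w_{\sigma^{-1}(1)}\otimes\cdots\otimes w_{\sigma^{-1}(n)}$), and recall that these two actions commute. By Schur--Weyl duality $V \cong \bigoplus_{\mu\vdash n} V_\mu\otimes S^\mu$, where $S^\mu$ is the irreducible $\mathfrak{S}_n$-module labelled by $\mu$, on which $B^{\otimes n}$ acts as $\bigoplus_\mu D_\mu(B)\otimes\mathrm{Id}_{S^\mu}$ and on which the central idempotent $e_\lambda := \tfrac{\mathrm{s}_\lambda}{n!}\sum_{\sigma\in\mathfrak{S}_n}\chi_\lambda(\sigma)\,\sigma$ projects onto the summand $\mu=\lambda$. (Since every partition of $n$ has at most $n$ parts, every $\lambda\vdash n$ indexes a genuine nonzero $GL_n(\mathbb{C})$-module, so no term is exceptional.)

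The first step is to identify the $(1,1,\dots,1)$-weight subspace $W\subseteq V$: a tensor of standard basis vectors has weight $(1,\dots,1)$ precisely when each of $e_1,\dots,e_n$ occurs exactly once, so $W$ has the orthonormal basis $\ket{e_\tau} := e_{\tau(1)}\otimes\cdots\otimes e_{\tau(n)}$, $\tau\in\mathfrak{S}_n$. Under the factor-permutation action $W$ is (isomorphic to) the regular representation of $\mathfrak{S}_n$, hence $W\cong\bigoplus_\lambda \mathrm{s}_\lambda\,S^\lambda$; comparing this with the weight-$(1,\dots,1)$ part of the Schur--Weyl decomposition, $W\cong\bigoplus_\lambda V_\lambda[1^n]\otimes S^\lambda$ (the torus acts only on the $V_\mu$ factors), shows $\dim V_\lambda[1^n] = \mathrm{s}_\lambda$, which is exactly the number of states summed over in $\sum_i\mathcal{D}^\lambda_{i,i}$.

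Now set $T := \operatorname{tr}_V\!\big(\Pi_W\,B^{\otimes n}\,e_\lambda\big)$, where $\Pi_W$ is the orthogonal projection onto $W$. On one hand, $e_\lambda$ is a self-adjoint idempotent that commutes with both $B^{\otimes n}$ and $\Pi_W$ (the latter because $\mathfrak{S}_n$ commutes with the torus) and whose image is $V_\lambda\otimes S^\lambda$; restricting everything to that summand, where $B^{\otimes n}=D_\lambda(B)\otimes\mathrm{Id}$ and $\Pi_W=\Pi^{(\lambda)}_{1^n}\otimes\mathrm{Id}$ with $\Pi^{(\lambda)}_{1^n}=\sum_i\ket{\psi^\lambda_i}\bra{\psi^\lambda_i}$, gives $T = \operatorname{tr}\big(\Pi^{(\lambda)}_{1^n}D_\lambda(B)\big)\cdot\operatorname{tr}(\mathrm{Id}_{S^\lambda}) = \mathrm{s}_\lambda\sum_i\mathcal{D}^\lambda_{i,i}(B)$. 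On the other hand, evaluating $T$ in the basis $\{\ket{e_\tau}\}$ of $W$ and using $\bra{e_\rho}B^{\otimes n}\ket{e_\tau} = \prod_i B_{\rho(i),\tau(i)}$ and $\bra{e_\rho}\sigma\ket{e_\tau} = \delta_{\rho,\tau\sigma^{-1}}$ — so $\bra{e_\rho}e_\lambda\ket{e_\tau} = \tfrac{\mathrm{s}_\lambda}{n!}\chi_\lambda(\rho^{-1}\tau)$ — the substitution $\pi := \tau\rho^{-1}$ makes the summand depend on $\rho$ only through $\pi$ (because $\chi_\lambda$ is a class function and $\prod_i B_{(\pi\rho)(i),\rho(i)}=\prod_j B_{\pi(j),j}$), collapsing the $\rho$-sum and giving $T = \tfrac{\mathrm{s}_\lambda}{n!}\cdot n!\sum_\pi \chi_\lambda(\pi)\prod_j B_{\pi(j),j} = \mathrm{s}_\lambda\operatorname{imm}^\lambda(B)$. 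Equating the two evaluations and cancelling $\mathrm{s}_\lambda\neq 0$ yields the theorem.

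The conceptual core is just Schur--Weyl duality; the real work is bookkeeping, and that is where the main obstacle lies: one must fix a single convention for the $\mathfrak{S}_n$-action and carry it consistently through $\bra{e_\rho}\sigma\ket{e_\tau}$ and the substitution, and one must recognize that $\sum_i\mathcal{D}^\lambda_{i,i}(B)$ is literally the trace of $D_\lambda(B)$ compressed to the $(1,\dots,1)$-weight subspace of $V_\lambda$ — an identity that requires the $\ket{\psi^\lambda_i}$ to be an \emph{orthonormal} basis of that subspace, so that the compressed trace is basis-independent. An essentially equivalent route avoids traces over $V$: apply $e_\lambda$ to $v_0 := e_1\otimes\cdots\otimes e_n$, note $e_\lambda v_0\in V_\lambda[1^n]\otimes S^\lambda$, and expand $\tfrac{n!}{\mathrm{s}_\lambda}\bra{v_0}e_\lambda B^{\otimes n}\ket{v_0}$ two ways; I expect the trace formulation above to be the cleanest to write out in full.
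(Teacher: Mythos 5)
Your proof is correct. Note first that the paper itself does not prove this statement at all: it is quoted as Theorem~\ref{theorem:Kostant} with a citation to Kostant, so there is no in-paper argument to compare against. Your Schur--Weyl trace computation is a legitimate self-contained derivation: the identification of the $(1,\dots,1)$-weight subspace $W$ with the regular representation of $\mathfrak{S}_n$, the two evaluations of $\operatorname{tr}_V\bigl(\Pi_W\,B^{\otimes n}\,e_\lambda\bigr)$, and the substitution $\pi=\tau\rho^{-1}$ all check out with your stated conventions ($\sigma\ket{e_\tau}=\ket{e_{\tau\sigma^{-1}}}$ gives $\bra{e_\rho}e_\lambda\ket{e_\tau}=\tfrac{\mathrm{s}_\lambda}{n!}\chi_\lambda(\rho^{-1}\tau)$, and the class-function property collapses the $\rho$-sum to a factor of $n!$, yielding $\mathrm{s}_\lambda\operatorname{imm}^\lambda(B)$ against $\mathrm{s}_\lambda\sum_i\mathcal{D}^\lambda_{i,i}(B)$ from the isotypic-component evaluation). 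You also correctly flag the one hypothesis the paper leaves implicit, namely that the $\ket{\psi^\lambda_i}$ must form an orthonormal basis of the $(1,\dots,1)$-weight subspace so that $\sum_i\mathcal{D}^\lambda_{i,i}(B)$ is the basis-independent compressed trace $\operatorname{tr}\bigl(\Pi^{(\lambda)}_{1^n}D_\lambda(B)\bigr)$; this holds for the Gelfand--Tsetlin states the paper uses. What your write-up buys over the paper's bare citation is a transparent reason the identity is true and an explicit accounting of where the multiplicity $\mathrm{s}_\lambda=\dim V_\lambda[1^n]$ enters, which is exactly the fact the paper later relies on when counting weight-$(1,\dots,1)$ states.
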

\noindent
This connection is important to establish the complexity of evaluating the coincidence rate equations.

\section{Rates for continuous time}\label{sec:approach}

We analyze coincidence rates of an experiment where~$n$ particles are modelled by a Gaussian wave packet arriving in an $m\times m$  interferometer at times $\bar\tau_1,\ldots,\bar\tau_n$,
and obtain a compact form for the bosonic and fermonic coincidence rate equations.
The difference between the boson and fermion cases is a sign of some entries of an $n!\times n!$ symmetric matrix $R(\bar{\bm\tau};\text{p})$ 
containing information about the level of distinguishability through the arrival times $\bar\tau$. The species label $\text{p}$ can be either $\text{b}$ for bosons or $\text{f}$ for fermions. The permutation group $\mathfrak{S}_n$ of~$n$ particles is an important tool throughout this discussion. If two or more elements of $\mathfrak{S}_n$ need to be specified, we do so with a subscript \emph{v.g.} $\sigma_i$ or $\gamma_j$ \emph{etc}. 

\subsection{Sources, input state, arrival-time vector, detectors}
\label{subsec:source}

We model a source for input channel $k$ as producing 
a temporal Gaussian wave packet in channel $k$; the wavepacket is described by the one-particle state
\begin{align}
\ket{k(\bar\tau_k)}:= \hat{A}^\dagger_k(\bar{\tau}_k)\ket0  = \int\text{d}\omega_k\,\phi(\omega_k) \text{e}^{-\text{i}\omega_k \bar{\tau}_k} \hat{a}_k^\dagger(\omega_k)\ket0,
\end{align}
where the spectral profile
\begin{align}
\phi(\omega_k) = \frac1{\sqrt{\Delta \omega\sqrt{2 \pi}}} \text{e}^{-\frac{(\omega_k - \omega_0)^2}{4(\Delta\omega)^2}} \label{eq:spectralprofile}
\end{align}
is normalized so that $\int\text{d}\omega_k \left \vert \phi(\omega_k) \right \vert^2=1$.
The frequency spread of the source $\Delta \omega$ is sufficiently narrow and assumed constant, 
so that one can replace
the energy $\hbar \omega$ of each Fourier component by the mean excitation energy of the photon \cite{fearn1989theory}. 
Here, 
operators such as $\hat{a}_i^\dagger(\omega_k)$ ---they can be either fermionic or bosonic---create an excitation of frequency $\omega_k$ in input channel $i$. 

We assume~$n$ independent but identical sources, each delivering one wave packet 
to a separate input channel,
with~$n$ smaller than or equal to the total number~$m$ of channels.
Let
\begin{align}
\ket0 = \ket0_1 \otimes \ket0_2 \otimes \ket0_3 \otimes  \dots \otimes \ket0_n,
\end{align}
denote the $n$-particle vacuum state; the $n$-particle input state is 
then the $n$-fold product state~\cite{fearn1989theory}
\begin{align}
\ket{\psi_{in}(\bar{\tau}_1,\ldots,\bar{\tau}_n)}:= 
\hat{A}^\dagger_1(\bar{\tau}_1)  \hat{A}^\dagger_2(\bar{\tau}_2)  \dots \hat{A}^\dagger_n(\bar{\tau}_n)\ket0\, .
\label{eq:continuousinput}
\end{align}
The times of arrival
\begin{equation}
\label{eq:arrivaltimevector}
\bar{\bm\tau}:=[\bar{\tau}_1,\ldots,\bar{\tau}_n]
\end{equation}
that appear in Eq.~(\ref{eq:continuousinput}) are continuous real numbers in some finite time interval of duration $\mathcal{T}$ and 
are measured from some common but otherwise arbitrary reference time.  
The detectors {are Fourier-limited \cite{santori2002indistinguishable}.  They} integrate with frequency response {$\Phi(\Delta\Omega)$} over the entire interval $\mathcal{T}$ 
{so that long integration times lead to small width $\Delta \Omega$}; they are modelled as projection operators 
\begin{align}
    \hat \Pi_k=&\int\text{d}\Omega_k \hat{a}_k^\dagger(\Omega_k)\ket0\bra0 \hat{a}_k(\Omega_k) 
    {\Phi(\Delta\Omega_k)}\, , \\
    \hat \Pi=&\hat \Pi_1\otimes \hat \Pi_2\otimes \cdots \otimes \hat \Pi_n. 
    \label{eq:detectorprojector}
\end{align}
For simplicity we choose the frequency response of all detectors to be identical Gaussians of
the form
\begin{align}
    \Phi(\Delta\Omega)=\frac1{\Delta \Omega\sqrt{2 \pi}} \text{e}^{-\frac{(\Omega - \omega_0)^2}{2(\Delta\Omega)^2}} \label{eq:frequencyresponse}
\end{align}
with fixed width $\Delta\Omega$.  A detector does not register arrival times, but registers only if a
particle has entered the detector at any time during one run of the experiment.
The probability of detection at a particular detector 
depends on $\bar{\bm\tau}$ and on the submatrix $A(s)$.


\subsection{Exact coincidence rate equations, continuous time}\label{sec:exactrates}

In this subsection, we obtain a compact form for the bosonic and fermionic
coincidence rate equations, and show how they are closely related. 
We also show that the entries of  an $n!\times n!$ symmetric matrix $R(\bar{\bm\tau};\text{p})$ are monomials in the entries of the so-called
delay matrix $r(\bar{\bm\tau})$, defined in Eq.~(\ref{eq:ratematrix}).
This observation is crucial to our simplifications. In \S\ref{sec:results}, we
use Gamas's theorem to deduce the conditions under which the $\lambda$-immanant of $r(\bar{\bm\tau})$ is
zero based on the set of arrival times $\bar{\bm\tau}$.    

The effect of the interferometer is to scatter each single particle state to  
 \begin{align}
      \ket{k(\bar\tau_i)}\to A(s)\ket{k(\bar\tau_i)} = \sum_{q} \ket{q(\bar\tau_i)} A(s)_{qk}\, ,
 \end{align}
\emph{i.e.} the scattering matrix produces superposition of states from the initial state but does not affect the times of arrival.  The scattered state can be written in expanded form as
 \begin{align}
 \ket{\hbox{out}(\bar{\bm{\tau}})}=&
   \sum_{\sigma\in \mathfrak{S}_n} A(s)_{\sigma(1)1}A(s)_{\sigma(2)2}\ldots A(s)_{\sigma(n)n} \nonumber \\
   &\times 
\int\text{d}\omega_1 \text{d}\omega_2 \cdots \text{d}\omega_n \phi(\omega_1)\phi(\omega_2)\cdots \phi(\omega_n)
\text{e}^{-i\omega_1\bar\tau_1}\text{e}^{-i\omega_2\bar\tau_2}\cdots \text{e}^{-i\omega_n\bar\tau_n} \nonumber \\
&\qquad\qquad \qquad\qquad   \times 
\hat a^\dagger_{\sigma(1)}(\omega_1)\hat a^\dagger_{\sigma(2)}(\omega_2)
\cdots \hat a^\dagger_{\sigma(n)}(\omega_n)\ket0\, .
\end{align}
To simplify the expressions for the rates, choose some ordering for the elements
$\gamma_i$ of $\mathfrak{S}_n$ and introduce the following shorthand notation for
monomials. Let $M$ be some $n \times n$ matrix, and write for the product
\begin{align}
M_{\gamma(1),1}M_{\gamma(2),2} \dots M_{\gamma(n),n}
:= \Msymbol_{\gamma}\, . \label{eq:shorthandnotation}
\end{align}
To highlight and contrast the boson and fermion cases, we treat them in quick
succession.  In the bosonic case, the evaluation of the coincidence rate
\begin{align}
\operatorname{rate}(\bar{\bm\tau},s;\text{b})&:=\bra{\hbox{out}}\hat \Pi\ket{\hbox{out}}\, ,
\label{eq:barerates}
\end{align}
is best understood when written in the form
\begin{align}
\bra{\hbox{out}} \hat \Pi\ket{\hbox{out}} &= 
{{\cal N}^{n!}}\sum_{\sigma,\gamma \in \mathfrak{S}_n} \Asymbol(s)^*_{\gamma} 
[R(\bar{\bm\tau};\text{b})]_{\gamma,\sigma}
\Asymbol(s)_{\sigma} 
\label{eq: brate1} \, ,\\
{\cal N}&={\frac{1}{\sqrt{2\pi (\Delta\omega^2+\Delta \Omega^2)}}},
\end{align}
where {${\cal N}$ is a normalization factor}, $\Asymbol(s)_{\sigma}$ is the monomial defined in Eq.~(\ref{eq:shorthandnotation}), and
where the entries of the $n!\times n!$ matrix $R(\bar{\bm\tau};\text{b})$ 
(which we call the (bosonic) rate matrix) are evaluated by taking the operators $\hat a^\dagger_k(\omega_j)$ and $\hat a_i(\omega_\ell)$
to satisfy the usual boson commutation relations.  
Evaluation of the element $(\gamma,\sigma)$ of the matrix $R(\bar{\bm\tau};\text{b})$ yields
 \begin{align}
[R(\bar{\bm\tau};\text{b})]_{\gamma,\sigma} &=
\exp\left({-\frac{{\delta\omega^2}(\bar{\tau}_{\gamma^{-1}(1)}-\bar{\tau}_{\sigma^{-1}(1)})^2}{2}}\right)
 \exp\left({-\frac{{\delta\omega^2}(\bar{\tau}_{\gamma^{-1}(2)}-\bar{\tau}_{\sigma^{-1}(2)})^2}{2}} \right)  \nonumber \\
&\qquad \times  \dots \times \exp\left({-\frac{{\delta\omega^2}(\bar{\tau}_{\gamma^{-1}(n)}-\bar{\tau}_{\sigma^{-1}(n)})^2}{2}}\right)\, ,\\
{\frac1{\delta\omega^2}}&{=\frac1{\Delta \Omega^2}+\frac1{\Delta\omega^2}}.
 \end{align}
For fermions the operators $\hat a^\dagger_k(\omega_j)$ and
$\hat a_i(\omega_\ell)$ satisfy anticommutation relations so an additional negative sign, which comes from an odd number 
of permutations of fermionic operators, multiplies the $[R(\bar{\bm\tau};\text{b})]_{\gamma,\sigma}$ term
when the permutation $\sigma \gamma$ is odd.

{Ignoring the unimportant normalization constant ${\cal N}$ for convenience,} the coincidence rate is expressed as
\begin{align}
\operatorname{rate}(\bar{\bm\tau},s;\text{f})=& \sum_{\sigma,\gamma \in \mathfrak{S}_n} \Asymbol(s)_{\gamma}^* [R(\bar{\bm\tau};\text{f})]_{\gamma,\sigma}\Asymbol(s)_{\sigma}  \label{eq: frate1}  \\
[R(\bar{\bm\tau};\text{f})]_{\gamma,\sigma}=& \operatorname{sgn}(\sigma \gamma) 
[R(\bar{\bm\tau};\text{b})]_{\gamma,\sigma} =\chi^{(1^n)}(\sigma\gamma) [R(\bar{\bm\tau};\text{b})]_{\gamma,\sigma} \label{eq:fermionrate}
\end{align}
where $\chi^{(1^n)}(\sigma\gamma)$ is the character of element $\sigma\gamma$ in
the one-dimensional fully-antisymmetric (alternating) representation of
$\mathfrak{S}_n$.  Eq.~(\ref{eq:fermionrate}) shows explicitly how a sign
difference between the fermion and boson case can arise.  

We next analyze the rate matrices $R(\bar{\bm\tau};\text{b})$ and 
$R(\bar{\bm\tau};\text{f})$ by introducing the delay matrix~$r(\bar{\bm\tau})$, which is an $n \times n$ symmetric matrix that keeps track of the relative 
overlaps between pulses. The $(i,j)\text{th}$ entry of $r(\bar{\bm\tau})$ is
\begin{align}
r_{ij} := \text{e}^{\frac{-{\delta \omega^2}(\bar\tau_i-\bar\tau_j)^2}{2}}, 
\label{eq:delaymatrix}
\end{align}
The entries of this matrix {are bounded} from $0$ to $1$, and depend on the level of distinguishability between 
the particles. {In particular, $r_{ij}=1$ when the arrival times $\tau_i=\tau_j$}. 

We observe that the delay matrix 
$r(\bar{\bm\tau})$ is in fact a Gram 
matrix by considering the basis functions 
\begin{align}
f_k(\omega;\bar{\tau}_k) = \text{e}^{-\text{i}(\omega-\omega_0) \bar{\tau}_k}, \qquad  k=1,2,\dots,n  
\end{align}
with the symmetric inner product 
\begin{align}
r_{ij}(\bar{\bm\tau})=\bra{f_i} f_j \rangle = {\frac{1}{\cal N}}\int\text{d}\omega {\,|\phi(\omega)|^2 
\,\Phi(\Delta\omega)}
f_i^*(\omega;\bar{\tau}_i) f_j(\omega;\bar{\tau}_k), 
\label{eq:ratematrix}
\end{align}
with $\phi$ and $\Phi$ respectively given in Eqs.~(\ref{eq:spectralprofile}) and~(\ref{eq:frequencyresponse}).
Using the same shorthand notation for the monomials in $r$ as done
in Eq.~(\ref{eq:shorthandnotation}), we find that the $ij^\text{th}$ entry of the bosonic rate matrix $R(\bar{\bm\tau},\text{b})$  is
\begin{align}
R(\bar{\bm\tau};\text{b})_{ij}:= \rsymbol_{\gamma_j^{-1}\gamma_i}(\bar{\bm\tau}) = r_{\gamma_j^{-1}\gamma_i(1),1}(\bar{\bm\tau})\, r_{\gamma_j^{-1}\gamma_i(2),2}(\bar{\bm\tau}) \dots r_{\gamma_j^{-1}\gamma_i(n),n}(\bar{\bm\tau}).
\label{eq:ratematrixentry}
\end{align}
In the fermionic case, the $(i,j)\text{th}$ entry of the rate matrix is
\begin{align}
R(\bar{\bm\tau};\text{f})_{ij}  =  \operatorname{sgn}(\gamma_i \gamma_j)\rsymbol_{\gamma_j^{-1}\gamma_i}(\bar{\bm\tau}).
\end{align}

\noindent For some fixed ordering $\{\gamma_1, \gamma_2, \dots \gamma_{n!}\}$ of the elements of 
$\mathfrak{S}_n$, we 
{conveniently introduce}
the vector $v(s)$ 
\begin{align}
v(s)= (\Asymbol(s)_{\gamma_1}, \Asymbol(s)_{\gamma_2}, \dots , \Asymbol(s)_{\gamma_{n!}})^{\top}\,.
\end{align}
Eqs.~(\ref{eq: brate1}) and  (\ref{eq: frate1}) can respectively be
expressed as 
\begin{align}
\operatorname{rate}(\bar{\bm\tau},s;\text{b})
=& \sum_{\sigma,\gamma \in \mathfrak{S}_n} \Asymbol(s)_{\gamma}^* 
\rsymbol_{\gamma^{-1} \sigma}(\bar{\bm\tau}) \Asymbol(s)_{\sigma},  \label{eq:rateboson}\\
&= v^\dagger(s) R(\bar{\bm\tau};\text{b}) 
v(s) \, , \label{eq:rateuRu} 
\end{align}
and
\begin{align}
\operatorname{rate}(\bar{\bm\tau},s;\text{f})=
& \sum_{\sigma,\gamma \in \mathfrak{S}_n} 
\operatorname{sgn}(\gamma \sigma) \Asymbol(s)_{\gamma}^* \rsymbol_{\gamma^{-1} \sigma}(\bar{\bm\tau}) \Asymbol(s)_{\sigma}\, , \label{eq:ratefermion} \\
=& v^\dagger(s) R(\bar{\bm\tau};\text{f}) 
v(s). \label{eq:fermionrateuRu} 
\end{align}
We will refer to Eqs.~(\ref{eq:rateuRu}) and~(\ref{eq:fermionrateuRu}) as the 
coincidence rate equations for bosons and fermions, respectively. 

Consider for instance the case $n=3$.  The scattering matrix $A$ 
(dropping $s$ for the moment to avoid clutter) is 
\begin{align}
    A=\begin{pmatrix}
    A_{11}&A_{12}&A_{13}\\
     A_{21}&A_{22}&A_{23}\\
      A_{31}&A_{32}&A_{33}
    \end{pmatrix}.
\end{align}
We choose the ordering of $\mathfrak{S}_3$ elements
\begin{align}
    \{e, (12), (13), (23), (123), (132) \}
\end{align}
and 
the vector $v(s)$ takes the form 
\begin{align}
    v(s)=&
    \bigg(A_{11}A_{22}A_{33},\,A_{21}A_{12}A_{33},\,A_{31}A_{22}A_{13},\, A_{11}A_{32}A_{23},\,
    A_{21}A_{32}A_{13},\, A_{31}A_{12}A_{23} \bigg)^\top\, ,\nonumber \\
    =&\left(\Asymbol_e,
    \Asymbol_{(12)},
    \Asymbol_{(13)},
     \Asymbol_{(23)},
    \Asymbol_{(123)},
    \Asymbol_{(132)}\right)^\top\, .  \label{eq:Umonomials}
\end{align}
with $s$ implied in the argument of each $A_{ij}$.

The bosonic rate matrix $R(\bar{\bm\tau};\text{b})$ takes the form of a Schur power matrix \cite{powermatrix}:
\begin{align}
&R(\boldsymbol{\bar{\bm\tau};\text{b})}
=\left(
\begin{array}{cccccc}
 \rsymbol_e & \rsymbol_{(12)} & \rsymbol_{(13)} & \rsymbol_{(23)} & \rsymbol_{(123)} & \rsymbol_{(132)} \\
 \rsymbol_{(12)} & \rsymbol_e & \rsymbol_{(132)} & \rsymbol_{(123)} & \rsymbol_{(23)} & \rsymbol_{(13)} \\
 \rsymbol_{(13)} & \rsymbol_{(123)} & \rsymbol_e & \rsymbol_{(132)} & \rsymbol_{(12)} & \rsymbol_{(23)} \\
 \rsymbol_{(23)} & \rsymbol_{(132)} & \rsymbol_{(123)} & \rsymbol_e & \rsymbol_{(13)} & \rsymbol_{(12)} \\
 \rsymbol_{(132)} & \rsymbol_{(23)} & \rsymbol_{(12)} & \rsymbol_{(13)} & \rsymbol_e & \rsymbol_{(123)} \\
 \rsymbol_{(123)} & \rsymbol_{(13)} & \rsymbol_{(23)} & \rsymbol_{(12)} & \rsymbol_{(132)} & \rsymbol_e \\
\end{array}
\right)\, ,\\  \nonumber 
\end{align}
with the arrival-time vector $\bar{\bm\tau}$ implicit
in $\rsymbol$. For the fermionic case, we add a negative sign to the entries when $\gamma_i \gamma_j$ is an odd permutation; the result is 
\begin{align}
R(\boldsymbol{\bar{\bm\tau};\text{f})}
=&\left(
\begin{array}{cccccc}
 \rsymbol_e & -\rsymbol_{(12)} & -\rsymbol_{(13)} & -\rsymbol_{(23)} & \rsymbol_{(123)} & \rsymbol_{(132)} \\
 -\rsymbol_{(12)} & \rsymbol_e & \rsymbol_{(132)} & \rsymbol_{(123)} & -\rsymbol_{(23)} & -\rsymbol_{(13)} \\
 -\rsymbol_{(13)} & \rsymbol_{(123)} & \rsymbol_e & \rsymbol_{(132)} & -\rsymbol_{(12)} & -\rsymbol_{(23)} \\
 -\rsymbol_{(23)} & \rsymbol_{(132)} & \rsymbol_{(123)} & \rsymbol_e & -\rsymbol_{(13)} & -\rsymbol_{(12)} \\
 \rsymbol_{(132)} & -\rsymbol_{(23)} & -\rsymbol_{(12)} & -\rsymbol_{(13)} & \rsymbol_e & \rsymbol_{(123)} \\
 \rsymbol_{(123)} & -\rsymbol_{(13)} & -\rsymbol_{(23)} & -\rsymbol_{(12)} & \rsymbol_{(132)} & \rsymbol_e \\
\end{array}
\right).
\end{align}

Note that, as $r_{ij}(\bar{\bm\tau})=r_{ji}(\bar{\bm\tau})$, 
for any permutation~$\sigma$ we have that
$\textbf{r}_{\sigma}(\bar{\bm\tau})
=\textbf{r}_{\sigma^{-1}}(\bar{\bm\tau})$, 
which in turn means that the rate
matrix is symmetric with $r_{ii}=1$.
In Table \ref{table:rmonomials} each of the 
$\rsymbol_{\sigma}(\bar{\bm\tau})$
terms is expressed as a monomial in the entries of the delay matrix. 
\begin{table}[h!]
    \centering
        \begin{tabular}{|c|c|c|}
        \hline
        $\rsymbol_{\sigma}$ & Monomial in $r_{ij}$ entries & Simplified Monomial \\
           \hline
             $\rsymbol_{e}$ & $r_{11}r_{22}r_{33}$ & 1 \rule{0pt}{3ex} \\
            \hline
              $\rsymbol_{(12)}$ & $r_{21}r_{12}r_{33}$ & $r_{12}^2$ \rule{0pt}{3ex} \\
            \hline
             $\rsymbol_{(13)}$ & $r_{31}r_{22}r_{13}$ & $r_{13}^2$ \rule{0pt}{3ex} \\
   \hline
             $\rsymbol_{(23)}$  & $r_{11}r_{32}r_{23}$ & $r_{23}^2$ \rule{0pt}{3ex} \\
   \hline
    $\rsymbol_{(123)}$ & $r_{21}r_{32}r_{13}$ & $r_{12}r_{23}r_{13}$ \rule{0pt}{3ex} \\
   \hline
   $\rsymbol_{(132)}$ & $r_{31}r_{12}r_{23}$  & $r_{12}r_{23}r_{13}$ \rule{0pt}{3ex} \\
   \hline
        \end{tabular}
        \caption{Entries of the coincidence rate matrices. 
        The arrive-time vector $\bar{\bm\tau}$ is implicit.}
     \label{table:rmonomials}
    \end{table}

The expression of Eq.~(\ref{eq:rateuRu}) clearly shows that, in general, the exact expression for the rate involves the multiplication of $1\times n!\times n!\times n!\times n!\times 1$ quantities.  Our job is to cut down on this in two steps. Anticipating results of \S\ref{sec:results}, we make the crucial observation that, {if none of the $r_{ij}$ terms are $0$,} every permutation occurs exactly once in each row and each column of the rate matrix $R(\bar{\bm\tau};\text{p})$, hence $R(\bar{\bm\tau};\text{p})$ carries the regular representation $\rho_\text{reg}$ of the permutation group. (Although we are here working under the assumption
that $s_i=0$ or $1$, this observation on the structure of the rate matrix holds even if some $s_i>1$.)  It is well known that in the decomposition of the regular representation, each irrep of $\mathfrak{S}_n$ appears as many times as its dimension 
\begin{align}
\rho_\text{reg} = \bigoplus_{\lambda \vdash n} \bigoplus^{\text{s}_\lambda} \rho_\lambda.
\end{align}
It follows there exists a linear transformation $T$,
{determined by the representation theory of $\mathfrak{S}_n$,} that brings the rate matrix to its block diagonal form. In \S\ref{subsec:block}, we use this property to simplify the coincidence rate equations. Secondly, we 
show in \S\ref{subsec:Gamas} that, for a given set of arrival times $\bar{\bm\tau}$, certain blocks of the diagonalized rate matrix are automatically zero as a consequence of Gamas's theorem.

\section{Block diagonalization and continuous time}
\label{sec:results}
 
We proceed by leveraging the permutation symmetries of the matrix 
$R(\bar{\boldsymbol{\tau}};\text{p})$ and decompose this matrix into irreps of $\mathfrak{S}_n$, \emph{i.e.} we now show how it is possible to transform the expression of the coincidence rate equations
to one where the rate matrix $R(\bar{\bm{\tau}};\text{b})$ or 
$R(\bar{\bm{\tau}};\text{f)}$ has a block-diagonal
form. These symmetries in turn stem from the structure of the matrix $r(\bar{\boldsymbol{\tau}})$. We assume in \S\ref{subsec:SW}, \S\ref{subsec:block} and \S\ref{subsec:immanants}, that 
for particles $i$ and $j$ with $i\ne j$, no  $r_{ij}=0$, \emph{i.e.} no particle is fully distinguishable from any of the others. When some $r_{ij}=0$, the matrix $R(\bar{\bm{\tau}};\text{f)}$ will have $0$'s and 
may carry representations of $\mathfrak{S}_{n-p}$, so the problem is effectively one of $n-p$ partially distinguishable particles. We will discuss the case where some 
{of the} particles are fully distinguishable from others in  \S\ref{subsec:distinguishable}.

An essential point is that, when all $r_{ij}\ne 0$, the block-diagonalization procedure is independent of the arrival times and independent of the numerical values of entries in the submatrix $A(s)$: it depends \emph{only} on the 
action of the permutation group $\mathfrak{S}_n$ on monomials $\textbf{A}(s)_\gamma$ and $\mathbf{r}_\sigma(\bar{\boldsymbol{\tau}})$.

\subsection{Schur-Weyl duality and the decomposition of $\otimes^n\mathbb{C}^n$}
\label{subsec:SW}

We first show that for any string $s$ with $s_i=0$ or $1$, an input state of 
the type Eq.~(\ref{eq:continuousinput}) can be decomposed into pieces which
transform nicely under the action of the permutation group. 
For fixed $\bar{\bm\tau}$, the states
$\{\hat{A}^\dagger_j(\bar{\bm\tau})\ket0, j=1,\ldots,n\}$ form a basis
in $\mathbb{C}^n$ for the $n$-dimensional defining representation
of $GL_n(\mathbb{C})$:
    \begin{align}
        \hat{A}^\dagger_1(\bar{\bm\tau})\ket0\to \left(\begin{array}{c}1\\0 \\ \vdots\\0\end{array}\right)\, ,\quad 
         \hat{A}^\dagger_2(\bar{\bm\tau})\ket0\to \left(\begin{array}{c}0\\1 \\ \vdots\\0\end{array}\right)\, ,\quad 
         \ldots \quad 
          \hat{A}^\dagger_n(\bar{\bm\tau})\ket0\to \left(\begin{array}{c}0 \\ 0\\\vdots\\1\end{array}\right)\, .
    \end{align}
 Clearly $\hat{A}^\dagger_j(\bar{\bm\tau})\ket0$ has weight $(0,0,\ldots,1_j\ldots,0)$, with $0$ everywhere
    except $1$ in the $j^\text{th}$ entry. Thus, the product state of Eq.~(\ref{eq:continuousinput}) is a state of weight 
$(1,1,\ldots,1)$ in the $n$-fold
tensor product of defining representation of $GL_n(\mathbb{C})$.  
Scattering by a $GL_n(\mathbb{C})$ matrix and subsequent arrivals at different detectors map this product state to a linear combination of states in the same space of states with weight $(1,1,\ldots,1)$.

Now, it is well known that this $n$-fold 
tensor product decomposes as 
\begin{align}
    \otimes^n \mathbb{C}^n
        =\sum_{\lambda \vdash n} \rho_\lambda \otimes D_\lambda 
\end{align}
where $\rho_\lambda$ and $D_\lambda$ are irreducible representations of $\mathfrak{S}_n$ and $GL_n(\mathbb{C})$, respectively.  For economy we henceforth write the partition $\lambda$ for the irrep 
$\rho_\lambda$ of $\mathfrak{S}_n$.
When restricted to the $(1,1,\ldots, 1)$ subspace, the $n$-fold product $\otimes^n\mathbb{C}^n$ is a carrier space
for the regular representation of $\mathfrak{S}_n$~\cite{kostant2}.
In other words, it is possible to decompose the input and output states into
linear combinations of states that transform 
irreducibly under $\mathfrak{S}_n\otimes GL_n(\mathbb{C})$.

\subsection{Block diagonalization of the coincidence rate equations}
\label{subsec:block}

Since the rate matrix carries 
the regular representation of $\mathfrak{S}_n$, there exists a linear transformation $T$ that block diagonalizes it:
\begin{align}
\operatorname{rate}(\bm\tau,s;\text{p}) =&v^\dagger(s) R(\bar{\bm\tau};\text{p})\,v(s)
\nonumber \\
 =& v^\dagger(s) T^{-1} \left(T R(\bar{\bm\tau};\text{p})) T^{-1} \right) Tv(s) \nonumber \\
 =& \left(Tv(s) \right)^\dagger \left(T R(\bar{\bm\tau};\text{p}) T^{-1} \right) 
 \left( Tv(s) \right), \label{eq:rateTvs}
\end{align}
where the species label $\text{p}$ stands for either boson ($\text{b}$) or fermion (\text{f}).

 Fraktur letters are used to shorten notation: 
 \begin{align}
 \mathfrak{v}(s)=Tv(s) \qquad \hbox{and}\qquad 
 \mathfrak{R}(\bar{\bm\tau};\text{p})\equiv TR(\bar{\bm\tau};\text{p}) 
 T^{-1}\, .   
 \end{align}
 We write $\mathfrak{R}^\lambda(\bar{\bm\tau})$ to denote the block 
 that corresponds to irrep $\lambda$ 
 which appears
 $\text{s}_\lambda$ times in the block diagonalization. 
The set of matrices $\mathfrak{R}^\lambda(\bar{\bm\tau})$ are the same for both bosons and fermions; however, their placement in the block-diagonalized rate matrix $\mathfrak{R}(\bar{\bm\tau};\text{p})$ depends on the species label $\text{p}$ and the choice of linear transformation $T$.  
 
 In the bosonic
 case, when the multiplication $\mathfrak{R}(\bar{\bm\tau};\text{b}) \mathfrak{v}(s)$ is
 preformed, the $i^\text{th}$ copy of the matrix
 $\mathfrak{R}^{\lambda}(\bar{\bm\tau})$ sees $\text{s}_\lambda$ entries of the
 vector $\mathfrak{v}(s)$; we take these entries and construct the
 vector $\mathfrak{v}_{\lambda;i}(s)$. The notation
 $\mathfrak{v}_{\lambda;i}^j(s)$ 
 refers to the $j^\text{th}$ entry of the
 vector $\mathfrak{v}_{\lambda;i}(s)$.  
 The rate is then written as 
\begin{align}
&\operatorname{rate}(\bar{\bm\tau},s;\text{b})=
 \mathfrak{v}^\dagger(s)\mathfrak{R}(\bar{\bm{\tau}};\text{b}) \mathfrak{v}(s) \\
                          =& \begin{bmatrix}
                          \phantom{\Yvcentermath1\Yboxdim{7pt}\yng(1,1,1)} \mathfrak{v}_{\Yvcentermath1\Yboxdim{3pt}\yng(3);1}(s) \\
                           \phantom{\Yvcentermath1\Yboxdim{7pt}\yng(1,1,1)} \mathfrak{v}_{\Yvcentermath1\Yboxdim{3pt}\yng(1,1,1);1}(s) \\
                           \phantom{\Yvcentermath1\Yboxdim{7pt}\yng(1,1,1)} \mathfrak{v}_{\Yvcentermath1\Yboxdim{3pt}\yng(2,1);1}^1 (s)\\
                \phantom{\Yvcentermath1\Yboxdim{7pt}\yng(1,1,1)}\mathfrak{v}_{\Yvcentermath1\Yboxdim{3pt}\yng(2,1);1}^2(s)  \\
                          \phantom{\Yvcentermath1\Yboxdim{7pt}\yng(1,1,1)} \mathfrak{v}_{\Yvcentermath1\Yboxdim{3pt}\yng(2,1);2}^1(s) \\
                            \phantom{\Yvcentermath1\Yboxdim{7pt}\yng(1,1,1)} \mathfrak{v}_{\Yvcentermath1\Yboxdim{3pt}\yng(2,1);2}^2(s) 
                           \end{bmatrix}^\dagger \begin{bmatrix}
                         \phantom{\Yvcentermath1\Yboxdim{7pt}\yng(1,1,1)} \mathfrak{R}^{\Yvcentermath1\Yboxdim{3pt}\yng(3)}
                         (\bar{\bm\tau})& 0 & 0 & 0 & 0 & 0  \\
                         \phantom{\Yvcentermath1\Yboxdim{7pt}\yng(1,1,1)} 0 & \mathfrak{R}^{\Yvcentermath1\Yboxdim{3pt}\yng(1,1,1)} (\bar{\bm\tau}) & 0 & 0 & 0 & 0 \\
                        \phantom{\Yvcentermath1\Yboxdim{7pt}\yng(1,1,1)}  0 & 0 & \mathfrak{R}^{\Yvcentermath1\Yboxdim{3pt}\yng(2,1)}_{1,1} (\bar{\bm\tau})& \mathfrak{R}^{\Yvcentermath1\Yboxdim{3pt}\yng(2,1)}_{1,2}(\bar{\bm\tau}) & 0 & 0 \\
                         \phantom{\Yvcentermath1\Yboxdim{7pt}\yng(1,1,1)} 0 & 0 & \mathfrak{R}^{\Yvcentermath1\Yboxdim{3pt}\yng(2,1)}_{2,1} (\bar{\bm\tau})& \mathfrak{R}^{\Yvcentermath1\Yboxdim{3pt}\yng(2,1)}_{2,2}(\bar{\bm\tau}) & 0 & 0 \\
                         \phantom{\Yvcentermath1\Yboxdim{7pt}\yng(1,1,1)} 0 & 0 & 0 & 0 & \mathfrak{R}^{\Yvcentermath1\Yboxdim{3pt}\yng(2,1)}_{1,1}(\bar{\bm\tau}) & \mathfrak{R}^{\Yvcentermath1\Yboxdim{3pt}\yng(2,1)}_{1,2}(\bar{\bm\tau}) \\
                         \phantom{\Yvcentermath1\Yboxdim{7pt}\yng(1,1,1)}  0 & 0 & 0 & 0 & \mathfrak{R}^{\Yvcentermath1\Yboxdim{3pt}\yng(2,1)}_{2,1}(\bar{\bm\tau}) & \mathfrak{R}^{\Yvcentermath1\Yboxdim{3pt}\yng(2,1)}_{2,2}(\bar{\bm\tau})
                           \end{bmatrix} \begin{bmatrix}
                          \phantom{\Yvcentermath1\Yboxdim{7pt}\yng(1,1,1)} \mathfrak{v}_{\Yvcentermath1\Yboxdim{3pt}\yng(3);1} (s)\\
                           \phantom{\Yvcentermath1\Yboxdim{7pt}\yng(1,1,1)} \mathfrak{v}_{\Yvcentermath1\Yboxdim{3pt}\yng(1,1,1);1}(s) \\
                           \phantom{\Yvcentermath1\Yboxdim{7pt}\yng(1,1,1)} \mathfrak{v}_{\Yvcentermath1\Yboxdim{3pt}\yng(2,1);1}^1(s) \\
                \phantom{\Yvcentermath1\Yboxdim{7pt}\yng(1,1,1)}\mathfrak{v}_{\Yvcentermath1\Yboxdim{3pt}\yng(2,1);1}^2(s)  \\
                          \phantom{\Yvcentermath1\Yboxdim{7pt}\yng(1,1,1)} \mathfrak{v}_{\Yvcentermath1\Yboxdim{3pt}\yng(2,1);2}^1(s) \\
                            \phantom{\Yvcentermath1\Yboxdim{7pt}\yng(1,1,1)} \mathfrak{v}_{\Yvcentermath1\Yboxdim{3pt}\yng(2,1);2}^2(s) 
                           \end{bmatrix} \nonumber
\end{align}

The fermion case has one key difference from the boson case.
Recall that characters of conjugate representations of $\mathfrak{S}_n$ only
differ by a sign in their odd permutations, and that the bosonic
and fermionic rate matrices have this symmetry. As a result when
the fermionic rate matrix is block-diagonalized into irreducible
representations by $T$, 
\emph{every  matrix representation  $\mathfrak{R}^{\lambda}$ that
appears in the block-diagonalization for boson is replaced by its conjugate
$\mathfrak{R}^{\lambda^*}$ for fermions}. Thus, the coincidence rate equations for the $3$-fermion case has
the blocks $\mathfrak{R}^{\Yvcentermath1\Yboxdim{3pt}\yng(1,1,1)
}(\bar{\bm\tau})$ and $\mathfrak{R}^{\Yvcentermath1\Yboxdim{3pt}\yng(3)}(\bar{\bm\tau})$
interchanged:
\begin{align}
\label{eq: fermionmatrix}
&\operatorname{rate}(\bar{\bm\tau},s;\text{f})= 
\mathfrak{v}^\dagger(s)\mathfrak{R}(\bar{\bm{\tau}};\text{f}) \mathfrak{v}(s) \\
                          =& \begin{bmatrix}
                          \phantom{\Yvcentermath1\Yboxdim{7pt}\yng(1,1,1)} \mathfrak{v}_{\Yvcentermath1\Yboxdim{3pt}\yng(3);1} (s)\\
                           \phantom{\Yvcentermath1\Yboxdim{7pt}\yng(1,1,1)} \mathfrak{v}_{\Yvcentermath1\Yboxdim{3pt}\yng(1,1,1);1}(s) \\
                           \phantom{\Yvcentermath1\Yboxdim{7pt}\yng(1,1,1)} \mathfrak{v}_{\Yvcentermath1\Yboxdim{3pt}\yng(2,1);1}^1(s) \\
                \phantom{\Yvcentermath1\Yboxdim{7pt}\yng(1,1,1)}
                \mathfrak{v}_{\Yvcentermath1\Yboxdim{3pt}\yng(2,1);1}^2 (s) \\
                \phantom{\Yvcentermath1\Yboxdim{7pt}\yng(1,1,1)}
                \mathfrak{v}_{\Yvcentermath1\Yboxdim{3pt}\yng(2,1);2}^1 (s)\\
                \phantom{\Yvcentermath1\Yboxdim{7pt}\yng(1,1,1)} \mathfrak{v}_{\Yvcentermath1\Yboxdim{3pt}\yng(2,1);2}^2 (s)
                           \end{bmatrix}^\dagger 
        \begin{bmatrix}
        \phantom{\Yvcentermath1\Yboxdim{7pt}\yng(1,1,1)}
        \mathfrak{R}^{\Yvcentermath1\Yboxdim{3pt}\yng(1,1,1)}(\bar{\bm{\tau}}) & 0 & 0 & 0 & 0 & 0  \\
        \phantom{\Yvcentermath1\Yboxdim{7pt}\yng(1,1,1)} 0 & \mathfrak{R}^{\Yvcentermath1\Yboxdim{3pt}\yng(3)}(\bar{\bm{\tau}}) & 0 & 0 & 0 & 0 \\
        \phantom{\Yvcentermath1\Yboxdim{7pt}\yng(1,1,1)}  0 & 0 & \mathfrak{R}^{\Yvcentermath1\Yboxdim{3pt}\yng(2,1)}_{1,1}(\bar{\bm{\tau}}) & \mathfrak{R}^{\Yvcentermath1\Yboxdim{3pt}\yng(2,1)}_{1,2}(\bar{\bm{\tau}}) & 0 & 0 \\
        \phantom{\Yvcentermath1\Yboxdim{7pt}\yng(1,1,1)}  0 & 0 & \mathfrak{R}^{\Yvcentermath1\Yboxdim{3pt}\yng(2,1)}_{2,1}(\bar{\bm{\tau}}) & \mathfrak{R}^{\Yvcentermath1\Yboxdim{3pt}\yng(2,1)}_{2,2}(\bar{\bm{\tau}}) & 0 & 0 \\
        \phantom{\Yvcentermath1\Yboxdim{7pt}\yng(1,1,1)}  0 & 0 & 0 & 0 & \mathfrak{R}^{\Yvcentermath1\Yboxdim{3pt}\yng(2,1)}_{1,1} (\bar{\bm{\tau}})& \mathfrak{R}^{\Yvcentermath1\Yboxdim{3pt}\yng(2,1)}_{1,2}(\bar{\bm{\tau}}) \\
        \phantom{\Yvcentermath1\Yboxdim{7pt}\yng(1,1,1)}  0 & 0 & 0 & 0 & \mathfrak{R}^{\Yvcentermath1\Yboxdim{3pt}\yng(2,1)}_{2,1}(\bar{\bm{\tau}}) & \mathfrak{R}^{\Yvcentermath1\Yboxdim{3pt}\yng(2,1)}_{2,2}(\bar{\bm{\tau}})
        \end{bmatrix} 
        \begin{bmatrix}
        \phantom{\Yvcentermath1\Yboxdim{7pt}\yng(1,1,1)} \mathfrak{v}_{\Yvcentermath1\Yboxdim{3pt}\yng(3);1}(s) \\
        \phantom{\Yvcentermath1\Yboxdim{7pt}\yng(1,1,1)} \mathfrak{v}_{\Yvcentermath1\Yboxdim{3pt}\yng(1,1,1);1}(s) \\
        \phantom{\Yvcentermath1\Yboxdim{7pt}\yng(1,1,1)} \mathfrak{v}_{\Yvcentermath1\Yboxdim{3pt}\yng(2,1);1}^1(s) \\
        \phantom{\Yvcentermath1\Yboxdim{7pt}\yng(1,1,1)}
        \mathfrak{v}_{\Yvcentermath1\Yboxdim{3pt}\yng(2,1);1}^2(s)  \\
        \phantom{\Yvcentermath1\Yboxdim{7pt}\yng(1,1,1)} \mathfrak{v}_{\Yvcentermath1\Yboxdim{3pt}\yng(2,1);2}^1(s) \\
        \phantom{\Yvcentermath1\Yboxdim{7pt}\yng(1,1,1)} \mathfrak{v}_{\Yvcentermath1\Yboxdim{3pt}\yng(2,1);2}^2(s) 
        \end{bmatrix}. \nonumber
\end{align}

When each $s_i=1$ or $0$, the $\mathfrak{v}_{\lambda,i}(s)$ are
group functions of weight $(1,1,1,\ldots, 1)$ for the the irrep $\lambda$.  Thus
for any value of $n$, the respective coincidence rate equations for bosons and fermions take the general form: 
\begin{align}
&\operatorname{rate}(\bar{\bm\tau},s;\text{b})= \sum_{\lambda \vdash n} \sum_{i=1}^{\text{s}_\lambda} \mathfrak{v}_{\lambda;i}^\dagger(s)\, 
\mathfrak{R}^\lambda(\bar{\bm{\tau}})\,  \mathfrak{v}_{\lambda;i}(s),\label{eq:rbn} \\
&\qquad = \vert\text{per}(A(s))\vert^2\text{per}(r(\bar{\bm\tau}))+ [\hbox{other $\mathfrak{S}_n$ irreps}]+  \vert\hbox{det}(A(s))\vert^2 \hbox{det}(r(\bar{\bm\tau})) \, , \nonumber \\
&\operatorname{rate}(\bar{\bm\tau},s;\text{f})= \sum_{\lambda \vdash n} \sum_{i=1}^{\text{s}_\lambda} \mathfrak{v}_{\lambda;i}^\dagger(s)\, \mathfrak{R}^{\lambda^*}(\bar{\bm\tau})\,  \mathfrak{v}_{\lambda;i}(s), \label{eq:rfn} \\
& \qquad = \vert\hbox{det}(A(s))\vert^2\text{per}(r(\bar{\bm\tau}))
+ [\hbox{other $\mathfrak{S}_n$ irreps}] + \vert\text{per}(A(s))\vert^2 \hbox{det}(r(\bar{\bm\tau}))   \,. \nonumber
\end{align}

The notation $\lambda \vdash n$ means that we sum over all partitions of $n$, and $\lambda^*$ is the conjugate partition of $\lambda$. In particular, when all relative arrival times are identical, \emph{i.e.} when all particles are exactly indistinguishable, $\text{per}(r(\bar{\bm\tau}))=n!$ and 
all other terms in the sums of Eqs.~(\ref{eq:rbn}) and~(\ref{eq:rfn}) are $0$;
in the case of bosons,
Eq.~(\ref{eq:rbn}) is then truncated to the modulus square of the permanent 
of $A(s)$, and we recover the original
\textsc{BosonSampling} result; in the fermion case, only
$\vert\hbox{det}(A(s))\vert^2$ survives.

\subsection{Immanants and $\mathcal{D}$-functions for $\mathfrak{R}^\lambda(\bar{\bm\tau})$ 
and $\mathfrak{v}_\lambda$}
\label{subsec:immanants}

In this subsection we again assume $s$ so that $s_i=0$ or $1$. 
The result holds for any such $s$ and the string label $s$ is implicit
throughout. We show that every entry of the matrix $\mathfrak{R}^\lambda(\bar{\bm\tau})$ is a linear
combination of permuted $\lambda$-immanants of the delay matrix $r(\bar{\bm\tau})$; and
each $\mathfrak{v}_{\lambda;i}$ is a vector where each entry is a linear
combination of permuted $\lambda$-immanants of the scattering matrix $A\in GL_n(\mathbb{C})$. 
The elements of the matrix $\mathfrak{R}^\lambda(\bar{\bm\tau})$ and the vector $\mathfrak{v}$ are
also $GL_n(\mathbb{C})$ group functions, also referred to in the physics
literature as Wigner $\mathcal{D}$-functions~\cite{chacon1966representations,rowe1999representations,wigner1959group}. 

Let $\{\vert \psi_i^\lambda \rangle, i=1,\ldots,\text{s}_\lambda\}$ denote the set of $(1,1,\ldots,1)$ basis states in the $GL_n(\mathbb{C})$ irrep $\lambda$.  For $g\in GL_n(\mathbb{C})$ the overlap of Eq.~(\ref{eq: innerprod}) specializes to
\begin{align}
{\cal D}^\lambda_{ij}(g):=\langle \psi_i^\lambda \vert D_\lambda(g) \vert \psi_j^\lambda \rangle
\label{eq:D111def}
\end{align}
where $D_\lambda(g)$ is the representation of $g$ by the matrix $D_\lambda$ in the irrep $\lambda$. 

\begin{example}
For an arbitrary matrix $Z\in GL_3(\mathbb{C})$, the
$\mathcal{D}_{ij}$ functions are given in Table \ref{table:Ds} where:
\begin{align}
\biggl\vert \begin{array}{ccccc}
    3& &0& &0\\
     &2& &0&\\
     &&1&&\end{array}\biggr\rangle &\to \vert \psi^{\Yvcentermath1\Yboxdim{3pt}\yng(3)}_1\rangle, \\
    \biggl\vert \begin{array}{ccccc}
    2& &1& &0\\
     &2& &0&\\
     &&1&&\end{array}\biggr\rangle &\to \vert \psi^{\Yvcentermath1\Yboxdim{3pt}\yng(2,1)}_1\rangle\qquad
    \biggl\vert \begin{array}{ccccc}
    2& &1& &0\\
     &1& &1&\\
     &&1&&\end{array}\biggr\rangle \to \vert \psi^{\Yvcentermath1\Yboxdim{3pt}\yng(2,1)}_2\rangle,\\
     \biggl\vert \begin{array}{ccccc}
    1& &1& &1\\
     &1& &1&\\
     &&1&&\end{array}\biggr\rangle &\to \vert \psi^{\Yvcentermath1\Yboxdim{3pt}\yng(1,1,1)}_1\rangle. 
\end{align}

\noindent are the Gelfand-Tsetlin patterns~\cite{alex2011numerical,barut1986theory,gelfand1950matrix} 
{for the} $(1,1,1)$-states in the $\Yvcentermath1\Yboxdim{4pt}\yng(3), \yng(2,1),$ and $\Yvcentermath1\Yboxdim{4pt} \yng(1,1,1)$ {subspaces}, respectively.

\begin{table}[h!]
    \centering
    {\renewcommand{\arraystretch}{1.95}
        \begin{tabular}{|c|c|}
        \hline
        $\mathcal{D}$-function Notation & Polynomial in $Z$ entries \\
           \hline
             ${\cal D}^{\Yvcentermath1\Yboxdim{3pt}\yng(1,1,1)}_{1,1}(Z)$ & $\begin{array}{l}
             Z_{11} Z_{22} Z_{33} - Z_{12} Z_{21} Z_{33} - Z_{13} Z_{22} Z_{31}  \\ \qquad -Z_{11}Z_{23}Z_{32} + Z_{12}Z_{23}Z_{31} + Z_{13}Z_{21}Z_{32}\end{array}$ \\
            \hline
             ${\cal D}^{\Yvcentermath1\Yboxdim{3pt}\yng(3)}_{1,1}(Z)$ & 
             $\begin{array}{l}Z_{11} Z_{22} Z_{33} + Z_{12} Z_{21} Z_{33} + Z_{13} Z_{22} Z_{31}  \\ \qquad +Z_{11}Z_{23}Z_{32} + Z_{12}Z_{23}Z_{31} + Z_{13}Z_{21}Z_{32}\end{array}$ \\
            \hline
            ${\cal D}^{\Yvcentermath1\Yboxdim{3pt}\yng(2,1)}_{1,1}(Z)$ &
            $\begin{array}{l}\frac{1}{2}\left(-Z_{13} Z_{22} Z_{31}-Z_{12} Z_{23} Z_{31}-Z_{13} Z_{21} Z_{32}\right.\\ \qquad \left. -Z_{11} Z_{23} Z_{32}+2 Z_{12}
   Z_{21} Z_{33}+2 Z_{11} Z_{22} Z_{33}\right)\end{array}$ \\
   \hline
            ${\cal D}^{\Yvcentermath1\Yboxdim{3pt}\yng(2,1)}_{1,2}(Z)$ &$\begin{array}{l}\frac{\sqrt{3}}{2}
            \left (-Z_{13} Z_{22} Z_{31}-Z_{12} Z_{23} Z_{31} \right.\\
            \qquad\qquad  \left. +Z_{13} Z_{21} Z_{32}+Z_{11} Z_{23}Z_{32}\right)\end{array}$\\ 
            \hline 
   ${\cal D}^{\Yvcentermath1\Yboxdim{3pt}\yng(2,1)}_{2,1}(Z)$ & $\begin{array}{l}\frac{\sqrt{3}}{2}(-Z_{13} Z_{22} Z_{31}+Z_{12} Z_{23} Z_{31} \\ 
   \qquad \qquad -Z_{13} Z_{21} Z_{32}+Z_{11} Z_{23}Z_{32} )\end{array}$ \\
   \hline
   ${\cal D}^{\Yvcentermath1\Yboxdim{3pt}\yng(2,1)}_{2,2}(Z)$ & $
   \begin{array}{l}\frac{1}{2}(Z_{13} Z_{22} Z_{31}-Z_{12} Z_{23} Z_{31}-Z_{13} Z_{21} Z_{32} 
   \\ \qquad Z_{11} Z_{23} Z_{32}-2 Z_{12}
   Z_{21} Z_{33}+2 Z_{11} Z_{22} Z_{33})\end{array}$ \\
   \hline
        \end{tabular}
        }
        \caption{Group functions ${\cal D}^\lambda _{ij}(Z)$ of Eq.~(\ref{eq:D111def}) connecting $(1,1,1)$- states
        in $GL_3(\mathbb{C})$.}
     \label{table:Ds}
    \end{table}
\end{example}

We use Theorem \ref{theorem:Kostant} to prove two results about $\mathcal{D}$-functions. Firstly: 
\begin{lemma}
\label{lemma:everyelement}
Permuted immanants are linear combinations of $\mathcal{D}$-functions (and vice versa). 
\end{lemma}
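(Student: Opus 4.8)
The plan is to derive the statement from Kostant's theorem (Theorem~\ref{theorem:Kostant}) together with Burnside's theorem on the linear span of an irreducible representation. First I would fix the meaning of a \emph{permuted $\lambda$-immanant} of an $n\times n$ matrix $M$: for $\pi\in\mathfrak{S}_n$ with permutation matrix $P_\pi$ (so $(P_\pi)_{b,i}=\delta_{b,\pi(i)}$) one has $(MP_\pi)_{a,i}=M_{a,\pi(i)}$, hence
\begin{align}
\operatorname{imm}^\lambda(MP_\pi)=\sum_{\sigma\in\mathfrak{S}_n}\chi_\lambda(\sigma)\prod_{i=1}^n M_{\sigma(i),\pi(i)}=\sum_{\sigma\in\mathfrak{S}_n}\chi_\lambda(\sigma\pi)\,\Msymbol_\sigma ,
\end{align}
i.e.\ the $\lambda$-immanant evaluated on permuted monomials; right-multiplication by $P_\pi$ instead gives the analogue with the two permutation labels exchanged. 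The goal is then to show that each $\operatorname{imm}^\lambda(MP_\pi)$ is a fixed linear combination of the $\mathcal{D}$-functions $\mathcal{D}^\lambda_{ij}(M)$ of Eq.~(\ref{eq:D111def}), and conversely that every $\mathcal{D}^\lambda_{ij}(M)$ is a fixed linear combination of permuted immanants.

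The bridge is that a permutation matrix $P_\pi\in GL_n(\mathbb{C})$ preserves the $(1,1,\dots,1)$-weight subspace of the $GL_n(\mathbb{C})$ irrep $D_\lambda$ (it merely permutes weight labels, fixing $(1,\dots,1)$) and acts on that subspace as the $\mathfrak{S}_n$-irrep $\rho_\lambda$. This last fact is exactly the Schur--Weyl content of \S\ref{subsec:SW}: the $(1,\dots,1)$-weight subspace of $\otimes^n\mathbb{C}^n$ carries the regular representation of $\mathfrak{S}_n$, whose $\lambda$-isotypic piece is $\rho_\lambda^{\oplus\text{s}_\lambda}$, which forces the restriction of $D_\lambda(P_\pi)$ to the zero-weight subspace of $D_\lambda$ to be equivalent to $\rho_\lambda(\pi)$. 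In the weight basis $\{\ket{\psi^\lambda_k}\}$ this reads $D_\lambda(P_\pi)\ket{\psi^\lambda_j}=\sum_k\ket{\psi^\lambda_k}[\rho_\lambda(\pi)]_{kj}$, hence $\mathcal{D}^\lambda(MP_\pi)=\mathcal{D}^\lambda(M)\rho_\lambda(\pi)$ as $\text{s}_\lambda\times\text{s}_\lambda$ matrices, and Kostant's theorem gives
\begin{align}
\operatorname{imm}^\lambda(MP_\pi)=\sum_i\mathcal{D}^\lambda_{ii}(MP_\pi)=\operatorname{tr}\!\bigl(\mathcal{D}^\lambda(M)\rho_\lambda(\pi)\bigr).
\end{align}
Both sides are polynomials in the entries of $M$, so even though Kostant's theorem is stated for $M\in GL_n(\mathbb{C})$ the identity persists for every $M\in M_n(\mathbb{C})$, which is what is needed when $M=r(\bar{\bm\tau})$ is singular. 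The right side is a fixed linear combination of the $\mathcal{D}^\lambda_{ij}(M)$, which proves the forward direction and, as a bonus, shows that only $\mathcal{D}$-functions of the \emph{same} shape $\lambda$ occur.

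For the converse I would consider the linear map $\Psi:M_{\text{s}_\lambda}(\mathbb{C})\to\{\text{polynomial functions of }M\}$, $\Psi(N):M\mapsto\operatorname{tr}(\mathcal{D}^\lambda(M)N)$. On matrix units it gives $\Psi(E_{ji})=\mathcal{D}^\lambda_{ij}$, so the image of $\Psi$ is exactly the span of the shape-$\lambda$ $\mathcal{D}$-functions; on $N=\rho_\lambda(\pi)$ it gives $\Psi(\rho_\lambda(\pi))=\operatorname{imm}^\lambda(MP_\pi)$. By Burnside's theorem the matrices $\{\rho_\lambda(\pi):\pi\in\mathfrak{S}_n\}$ span all of $M_{\text{s}_\lambda}(\mathbb{C})$ because $\rho_\lambda$ is irreducible, so $\operatorname{span}_\pi\{\operatorname{imm}^\lambda(MP_\pi)\}=\Psi(M_{\text{s}_\lambda}(\mathbb{C}))=\operatorname{span}_{ij}\{\mathcal{D}^\lambda_{ij}(M)\}$; summing over $\lambda\vdash n$ (the isotypic blocks being linearly independent) gives the two-sided claim, with every coefficient a universal constant fixed by the representation theory of $\mathfrak{S}_n$. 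The same computation with $\mathcal{D}^\lambda(P_\rho MP_\pi)=\rho_\lambda(\rho)\mathcal{D}^\lambda(M)\rho_\lambda(\pi)$ covers immanants permuted on both sides.

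The step I expect to be the main obstacle is verifying, in this paper's conventions, that $D_\lambda(P_\pi)$ restricted to the $(1,\dots,1)$-weight subspace is $\rho_\lambda(\pi)$ itself rather than a sign-twisted or conjugate version---i.e.\ getting the $GL_n(\mathbb{C})$/$\mathfrak{S}_n$ dictionary of \S\ref{subsec:SW} exactly right; everything after that is Kostant plus Burnside plus bookkeeping. As a sanity check and alternative packaging, one may observe that the monomials $\Msymbol_\sigma$ are the entries of the Schur power matrix $R$ while the $\mathcal{D}^\lambda_{ij}(M)$ are the entries of its block diagonalization $TRT^{-1}$ with $T$ independent of $M$, making the monomial$\leftrightarrow\mathcal{D}$ passage immediate; Burnside is then precisely what certifies that the permuted immanants alone already span all the $\mathcal{D}$-functions.
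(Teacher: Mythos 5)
Your proof is correct and follows essentially the same route as the paper: apply Kostant's theorem to a permuted matrix and use the homomorphism property of $D_\lambda$ to get $\operatorname{imm}^\lambda$ of the permuted matrix as $\operatorname{tr}\bigl(\rho_\lambda(\pi)\,\mathcal{D}^\lambda(M)\bigr)$, a fixed linear combination of same-shape $\mathcal{D}$-functions. The one place you go beyond the paper is the converse direction, where the paper simply asserts that the $\mathcal{D}^\lambda_{ij}(M)$ are recoverable from permuted immanants while you justify it by Burnside's theorem (the $\rho_\lambda(\pi)$ span $M_{\text{s}_\lambda}(\mathbb{C})$); this, together with your remark that the identity extends from $GL_n(\mathbb{C})$ to singular $M$ by polynomiality, closes gaps the paper leaves implicit.
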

\begin{proof}
Let $P_{\sigma}$ be the permutation matrix corresponding to the permutation $\sigma$. Depending on whether $P_{\sigma}$ acts on the left or the right of a matrix 
$M$, either the rows or columns of $M$ get permuted. 
In this proof, $P_{\sigma}$ acts on the left, which permutes the rows. To prove this lemma,
Theorem \ref{theorem:Kostant} is applied to a row-permuted matrix $P_{\sigma} M$: 
\begin{align}
\operatorname{imm}^\lambda\left(P_{\sigma} M \right) =& \sum_{i} \mathcal{D}_{i,i}^\lambda(P_{\sigma} M) \\
   =& \sum_i \bra{\psi_i^\lambda} D_\lambda(P_{\sigma})  D_\lambda(M)  \ket{\psi_i^\lambda} 
   =\sum_{ij}{\cal D}^\lambda_{i,j}(P_\sigma){\cal D}_{j,i}^\lambda (M)\, . 
\end{align}
Note that the coefficients ${\cal D}^\lambda_{ij}(P_\sigma) =
\bra{\psi_i^\lambda} D_\lambda(P_{\sigma}) \ket{\psi_j^\lambda}$ are the
standard (Yamanouchi) entries \cite{chen1989group} for the matrix
representation of $P_\sigma$ in the representation $\lambda$ of
$\mathfrak{S}_n$ . From the above computation, we see that an immanant of
$P_{\sigma} M$ is a linear combination of $\mathcal{D}$-functions, which in
turn means that all the entries of $\mathcal{D}^\lambda(M)$ are linear
combinations of permuted $\lambda$-immanants. In particular, for an
arbitrary matrix, this corollary shows that of the total $n!$ possible
permuted immanants, there are exactly $\text{s}_\lambda^2$ linearly independent
permuted immanants of shape $\lambda$, as there are that many
$\mathcal{D}$-functions.
\end{proof}

We give an example for the $3$-fermion case, where the above lemma implies
that every element of $\mathfrak{R}^{\lambda}(\bar{\bm\tau})$ is a linear combination of
permuted immanants of the delay matrix $r(\bar{\bm\tau})$. To simplify notation,
$\lambda_{\sigma}^M$ is used to denote the $\lambda$-immanant of the matrix
$M$ whose rows are permuted by $\sigma$, that is, row $i$ is replaced by row $\sigma(i)$. There are $3!=6$ permuted
immanants, but as we have shown, only a linear combination of $\text{s}_\lambda^2$
of them are needed to write an entry of $\mathfrak{R}^\lambda(\bar{\bm\tau})$; we choose
the permutations $e$, $(12)$, $(23)$, and $(132)$. The entries of
$\mathfrak{R}(\bar{\bm\tau};f)$ that appear in Eq.~(\ref{eq: fermionmatrix}) as both a
linear combination of permuted immanants and as a $\mathcal{D}$-function
are given in Table \ref{fig:immanantsandDfunctions}.

 \begin{table}[h!]
    \centering
    {\renewcommand{\arraystretch}{1.85}
        \begin{tabular}{|c|c|c|c|}
        \hline
           \shortstack{Entry\\ in $\mathfrak{R}(\bar{\bm\tau};\text{p})$}   & 
           \shortstack{$\mathcal{D}$-function \\ Notation}
            & Sum of Permuted Immanants & Polynomial in $r(\bar{\bm\tau})$ entries \\
           \hline
            $\mathfrak{R}^{\Yvcentermath1\Yboxdim{3pt}\yng(1,1,1)}(\bar{\bm\tau})$ & ${\cal D}^{\Yvcentermath1\Yboxdim{3pt}\yng(1,1,1)}_{1,1}(r)$ & $\hbox{det}(r)$& 
            $\begin{array}{l} 1 - r_{12}^2 - r_{13}^2  \\ \quad -r_{23}^2 + 2r_{12}r_{23}r_{31}\end{array}$ \\
            \hline
            $\mathfrak{R}^{\Yvcentermath1\Yboxdim{3pt}\yng(3)}(\bar{\bm\tau})$  & ${\cal D}^{\Yvcentermath1\Yboxdim{3pt}\yng(3)}_{1,1}
(r)$ & $\text{per}(r)$ & 
$\begin{array}{l} 1 + r_{12}^2 + r_{13}^2  \\ \quad +r_{23}^2 + 2r_{12}r_{23}r_{31}\end{array}$ \\
            \hline
            $\mathfrak{R}^{\Yvcentermath1\Yboxdim{3pt}\yng(2,1)}_{1,1}
            (\bar{\bm\tau})$  & ${\cal D}^{\Yvcentermath1\Yboxdim{3pt}\yng(2,1)}_{1,1}
(r)$ & $\frac{1}{2}\left( \Yvcentermath1\Yboxdim{4pt}\yng(2,1)_{e}^{\,r} + \Yvcentermath1\Yboxdim{4pt}\yng(2,1)_{(12)}^{\,r} \right)$ & 
$\begin{array}{ll} 1 + r_{12}^2  - \frac{1}{2}r_{13}^2 \\ 
\qquad - \frac{1}{2}r_{23}^2 - r_{12}r_{23}r_{31}\end{array}$ \\
            \hline
            $\mathfrak{R}^{\Yvcentermath1\Yboxdim{3pt}\yng(2,1)}_{1,2}
            (\bar{\bm\tau})$  
            & ${\cal D}^{\Yvcentermath1\Yboxdim{3pt}\yng(2,1)}_{1,2}(r)$ & 
$\begin{array}{l}\sqrt{3}\left( -\frac{1}{6} \Yvcentermath1\Yboxdim{4pt}\yng(2,1)_{e}^{\,r} +\frac{1}{3} \Yvcentermath1\Yboxdim{4pt}\yng(2,1)_{(23)}^{\,r} \right. \\
\left. \qquad +\frac{1}{6} \Yvcentermath1\Yboxdim{4pt}\yng(2,1)_{(12)}^{\,r} - \frac{1}{3} \Yvcentermath1\Yboxdim{4pt}\yng(2,1)_{(132)}^{\,r} \right)\end{array}$ & $\frac{\sqrt{3}}{2}\left(r_{23}^2 - r_{13}^2\right)$\\
            \hline
            $\mathfrak{R}^{\Yvcentermath1\Yboxdim{3pt}\yng(2,1)}_{2,1}
            (\bar{\bm\tau})$  & ${\cal D}^{\Yvcentermath1\Yboxdim{3pt}\yng(2,1)}_{2,1}(r)$ 
            & 
$\begin{array}{l}
\sqrt{3}\left( -\frac{1}{6} \Yvcentermath1\Yboxdim{4pt}\yng(2,1)_{e}^{\,r} +\frac{1}{3} \Yvcentermath1\Yboxdim{4pt}\yng(2,1)_{(23)}^{\,r} \right. \\
\qquad \left. +\frac{1}{6} \Yvcentermath1\Yboxdim{4pt}\yng(2,1)_{(12)}^{\,r} - \frac{1}{3} \Yvcentermath1\Yboxdim{4pt}\yng(2,1)_{(132)}^{\,r} \right)\end{array}$ & $\frac{\sqrt{3}}{2}\left(r_{23}^2 - r_{13}^2\right)$ \\
            \hline
            $\mathfrak{R}^{\Yvcentermath1\Yboxdim{3pt}\yng(2,1)}_{2,2} (\bar{\bm\tau}) $  & ${\cal D}^{\Yvcentermath1\Yboxdim{3pt}\yng(2,1)}_{2,2}
(r)$ & $ \frac{1}{2}\left( \Yvcentermath1\Yboxdim{4pt}\yng(2,1)_{e}^{\,r} - \Yvcentermath1\Yboxdim{4pt}\yng(2,1)_{(12)}^{\,r} \right)$ & 
$\begin{array}{l} 1 - r_{12}^2  + \frac{1}{2}r_{13}^2 \\ \qquad + \frac{1}{2}r_{23}^2 - r_{12}r_{23}r_{31}\end{array}$ \\
            \hline
        \end{tabular}}
        \caption{Explicit expressions of various 
        $\mathfrak{R}^{\lambda}_{ij}(\bar{\bm\tau})$ for $n=3$ particles. The argument $\bar{\bm\tau}$ is implicit
        in $r$ or $r_{ij}$.}
     \label{fig:immanantsandDfunctions}
    \end{table}

In Table \ref{fig:immanantsandDfunctions}, the equality $r_{ij}(\bar{\bm\tau})=r_{ji}(\bar{\bm\tau})$ has
been used to simplify the polynomials. Notice how $\mathfrak{R}^{\Yvcentermath1\Yboxdim{3pt}\yng(2,1)}_{1,2}(\bar{\bm\tau})
=\mathfrak{R}^{\Yvcentermath1\Yboxdim{3pt}\yng(2,1)}_{2,1}(\bar{\bm\tau})$ and that there are $5$
distinct functions that appear in the in the block-diagonalized rate matrix. In the general case
the rate matrix is a real symmetric matrix, so its block-diagonal form is also be symmetric. To
determine the number of distinct functions we simply need to count the number of entries on or
above the main diagonal in $\mathfrak{R}^{\lambda}(\bar{\bm\tau})$ for each $\lambda \vdash n$. Each
$\mathfrak{R}^{\lambda}(\bar{\bm\tau})$ matrix is of size $\text{s}_\lambda$, 
so by summing over all $\lambda
\vdash n$ we find that the number of distinct functions in the block-diagonalized rate matrix is 
\begin{align}
\label{eq:exactnumber}
\sum_{\lambda \vdash n} \frac{\text{s}_\lambda^2 + \text{s}_\lambda}{2}\sim \frac{n!}{2},
\end{align}
as $\sum_{\lambda \vdash n} \text{s}_\lambda^2=n!$.

The same linear transformation $T$ that block-diagonalizes the rate matrix
also acts on the scattering matrix $A(s)$. Similarly, we see that
$\mathcal{D}^\lambda(A(s))$ is the matrix whose $i^\text{th}$ column is
$\mathfrak{v}_{\lambda;i}(s)$. For the column vector $\mathfrak{v}(s)$, one also
easily expresses the entries in terms of permuted immanants or ${\cal
D}$-functions:
\begin{align}
\begin{bmatrix}
                          \phantom{\Yvcentermath1\Yboxdim{7pt}\yng(1,1,1)} \mathfrak{v}_{\Yvcentermath1\Yboxdim{3pt}\yng(3);1} (s)\\
                           \phantom{\Yvcentermath1\Yboxdim{7pt}\yng(1,1,1)} \mathfrak{v}_{\Yvcentermath1\Yboxdim{3pt}\yng(1,1,1);1}(s) \\
                           \phantom{\Yvcentermath1\Yboxdim{7pt}\yng(1,1,1)} \mathfrak{v}_{\Yvcentermath1\Yboxdim{3pt}\yng(2,1);1}^1(s) \\
                \phantom{\Yvcentermath1\Yboxdim{7pt}\yng(1,1,1)}
                \mathfrak{v}_{\Yvcentermath1\Yboxdim{3pt}\yng(2,1);1}^2 (s) \\
                \phantom{\Yvcentermath1\Yboxdim{7pt}\yng(1,1,1)}
                \mathfrak{v}_{\Yvcentermath1\Yboxdim{3pt}\yng(2,1);2}^1 (s)\\
                \phantom{\Yvcentermath1\Yboxdim{7pt}\yng(1,1,1)} \mathfrak{v}_{\Yvcentermath1\Yboxdim{3pt}\yng(2,1);2}^2 (s)
                           \end{bmatrix} = \begin{bmatrix}
\phantom{\Yvcentermath1\Yboxdim{4pt}\yng(1,1,1,1,1,1)} \frac{1}{\sqrt{6}}\Yvcentermath1\Yboxdim{4pt}\yng(3)^{A(s)} \\
\phantom{\Yvcentermath1\Yboxdim{4pt}\yng(1,1,1,1,1,1)} \frac{1}{\sqrt{6}} \Yvcentermath1\Yboxdim{4pt}\yng(1,1,1)^{A(s)} \\
\phantom{\Yvcentermath1\Yboxdim{4pt}\yng(1,1,1,1,1,1)}\frac{1}{2 \sqrt{3}} \left(\Yvcentermath1\Yboxdim{4pt}\yng(2,1)_e^{A(s)} + \Yvcentermath1\Yboxdim{4pt}\yng(2,1)_{(12)}^{A(s)}  \right) \\
\phantom{\Yvcentermath1\Yboxdim{4pt}\yng(1,1,1,1,1,1)}-\frac{1}{6} \Yvcentermath1\Yboxdim{4pt}\yng(2,1)_{e}^{A(s)} +\frac{1}{3} \Yvcentermath1\Yboxdim{4pt}\yng(2,1)_{(23)}^{A(s)} +\frac{1}{6} \Yvcentermath1\Yboxdim{4pt}\yng(2,1)_{(12)}^{A(s)} - \frac{1}{3} \Yvcentermath1\Yboxdim{4pt}\yng(2,1)_{(132)}^{A(s)} \\
\phantom{\Yvcentermath1\Yboxdim{4pt}\yng(1,1,1,1,1,1)}\frac{1}{6} \Yvcentermath1\Yboxdim{4pt}\yng(2,1)_{e}^{A(s)} + \frac{1}{3} \Yvcentermath1\Yboxdim{4pt}\yng(2,1)_{(23)}^{A(s)} + \frac{1}{6} \Yvcentermath1\Yboxdim{4pt}\yng(2,1)_{(12)}^{A(s)} + \frac{1}{3} \Yvcentermath1\Yboxdim{4pt}\yng(2,1)_{(132)}^{A(s)} \\
\phantom{\Yvcentermath1\Yboxdim{4pt}\yng(1,1,1,1,1,1)}\frac{1}{2 \sqrt{3}} \left( \Yvcentermath1\Yboxdim{4pt}\yng(2,1)_e^{A(s)} - \Yvcentermath1\Yboxdim{4pt}\yng(2,1)_{(12)}^{A(s)}  \right)
\end{bmatrix} 
= 
{\renewcommand{\arraystretch}{1.85}\begin{bmatrix}
\frac{1}{\sqrt{6}}{\cal D}^{\Yvcentermath1\Yboxdim{4pt}\yng(3)}(A(s)) \\
\frac{1}{\sqrt{6}}{\cal D}^{\Yvcentermath1\Yboxdim{4pt}\yng(1,1,1)}(A(s)) \\
\frac{1}{\sqrt{3}}{\cal D}_{1,1}^{\Yvcentermath1\Yboxdim{4pt}\yng(2,1)}(A(s)) \\
\frac{1}{\sqrt{3}}{\cal D}_{1,2}^{\Yvcentermath1\Yboxdim{4pt}\yng(2,1)}(A(s)) \\
\frac{1}{\sqrt{3}}{\cal D}_{2,1}^{\Yvcentermath1\Yboxdim{4pt}\yng(2,1)}(A(s)) \\
\frac{1}{\sqrt{3}}{\cal D}_{2,2}^{\Yvcentermath1\Yboxdim{4pt}\yng(2,1)}(A(s))
\end{bmatrix}
}
\end{align}

In general, the entries of $\mathfrak{v}_{\lambda;i}(s)$ are group functions multiplied by the scaling factor $\sqrt{\text{s}_\lambda/n!}$, where 
$\text{s}_\lambda=\hbox{dim}(\lambda)$ is the dimension 
of the irrep $\lambda$ of $\mathfrak{S}_n$.

\begin{lemma}
\label{lemma:zeroentries}
Assume $s$ with $s_i=0$ or $1$, and
let $r(\bar{\bm\tau})$ be the $n \times n$ delay matrix and $\mathfrak{R}^\lambda(\bar{\bm\tau})$ be a 
block that
carries the $\lambda$-representation of $\mathfrak{S}_n$ in the block diagonalization of the rate
matrix. If $\operatorname{imm}^\lambda(r(\bar{\bm\tau}))=0$, then every entry of the matrix
$\mathfrak{R}^\lambda(\bar{\bm\tau})$ is also $0$.
\end{lemma}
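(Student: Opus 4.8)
The plan is to combine Kostant's theorem (Theorem~\ref{theorem:Kostant}) with the fact that the delay matrix $r(\bar{\bm\tau})$ is positive semidefinite, which holds because it is a Gram matrix. By Lemma~\ref{lemma:everyelement}, every entry of $\mathfrak{R}^\lambda(\bar{\bm\tau})$ is a linear combination of the $\mathcal{D}$-functions $\mathcal{D}^\lambda_{ij}(r(\bar{\bm\tau}))$, where $i,j$ run over the $(1,1,\dots,1)$-weight states of the $GL_n(\mathbb{C})$ irrep $\lambda$. It therefore suffices to show that $\operatorname{imm}^\lambda(r(\bar{\bm\tau}))=0$ forces \emph{all} of these $\mathcal{D}$-functions to vanish. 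The point is that $\operatorname{imm}^\lambda(r(\bar{\bm\tau}))$ is, by Kostant, the trace of a positive-semidefinite matrix, and a positive-semidefinite matrix of vanishing trace is the zero matrix.

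First I would make the positivity explicit. Write $r(\bar{\bm\tau})=C^\dagger C$ with $C=\sqrt{r(\bar{\bm\tau})}$, which exists since $r(\bar{\bm\tau})$ is Hermitian and positive semidefinite. Realizing the holomorphic irrep $D_\lambda$ of $GL_n(\mathbb{C})$ in an orthonormal (Gelfand--Tsetlin) basis, in which $D_\lambda(g^\dagger)=D_\lambda(g)^\dagger$ for all $g\in GL_n(\mathbb{C})$, one obtains
\begin{align}
\mathcal{D}^\lambda_{ij}(r(\bar{\bm\tau}))
=\bra{\psi_i^\lambda}D_\lambda(C)^\dagger D_\lambda(C)\ket{\psi_j^\lambda}
=\langle D_\lambda(C)\psi_i^\lambda \,|\, D_\lambda(C)\psi_j^\lambda\rangle\, .
\end{align}
Thus $\mathcal{D}^\lambda(r(\bar{\bm\tau}))$ is the Gram matrix of the vectors $D_\lambda(C)\ket{\psi_1^\lambda},\dots,D_\lambda(C)\ket{\psi_{\text{s}_\lambda}^\lambda}$; in particular it is positive semidefinite and its diagonal entries are $\mathcal{D}^\lambda_{ii}(r(\bar{\bm\tau}))=\|D_\lambda(C)\psi_i^\lambda\|^2\ge 0$.

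Now Theorem~\ref{theorem:Kostant} gives $\operatorname{imm}^\lambda(r(\bar{\bm\tau}))=\sum_i\mathcal{D}^\lambda_{ii}(r(\bar{\bm\tau}))=\sum_i\|D_\lambda(C)\psi_i^\lambda\|^2$. If this sum vanishes, then $D_\lambda(C)\psi_i^\lambda=0$ for every $i$, hence $\mathcal{D}^\lambda_{ij}(r(\bar{\bm\tau}))=\langle D_\lambda(C)\psi_i^\lambda\,|\,D_\lambda(C)\psi_j^\lambda\rangle=0$ for all $i,j$, and by Lemma~\ref{lemma:everyelement} every entry of $\mathfrak{R}^\lambda(\bar{\bm\tau})$ vanishes. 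There is essentially no hard step; the one input to be careful about is that $D_\lambda$ may be taken $*$-compatible (i.e.\ $D_\lambda(g^\dagger)=D_\lambda(g)^\dagger$ in an orthonormal weight basis, which is standard for finite-dimensional $GL_n(\mathbb{C})$ irreps and is the normalization already implicit in the tables above), together with the positive semidefiniteness of $r(\bar{\bm\tau})$. An equivalent route avoids $\mathcal{D}$-functions altogether: the bosonic rate matrix $R(\bar{\bm\tau};\text{b})$ is itself the Gram matrix of the tensor vectors $\bigotimes_k f_{\gamma^{-1}(k)}$ indexed by $\gamma\in\mathfrak{S}_n$, so $R(\bar{\bm\tau};\text{b})\succeq 0$; block-diagonalizing by the orthonormal $T$ keeps every block positive semidefinite, and a positive-semidefinite block $\mathfrak{R}^\lambda(\bar{\bm\tau})$ whose trace equals $\operatorname{imm}^\lambda(r(\bar{\bm\tau}))=0$ must be zero.
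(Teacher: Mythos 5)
Your proposal is correct and is essentially the paper's own argument: the paper likewise combines Kostant's theorem (giving $\operatorname{tr}\mathfrak{R}^\lambda(\bar{\bm\tau})=\operatorname{imm}^\lambda(r(\bar{\bm\tau}))=0$) with the observation that the rate matrix is a Gram matrix whose blocks remain positive semidefinite after conjugation by $T$, and then concludes that a symmetric positive-semidefinite matrix of zero trace vanishes. Your ``equivalent route'' at the end is exactly this; your primary route merely relocates the Gram/positivity step to $\mathcal{D}^\lambda(r(\bar{\bm\tau}))$ via the square root $C=\sqrt{r(\bar{\bm\tau})}$, which is a harmless variant of the same idea.
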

\begin{proof}
Begin by assuming that $\operatorname{imm}^\lambda(r(\bar{\bm\tau}))=0$. Applying
Theorem~\ref{theorem:Kostant} to the delay
matrix, it follows that the trace of $\mathfrak{R}^\lambda(\bar{\bm\tau})$ is $0$. 
From Eqs.~(\ref{eq:ratematrix}) and~(\ref{eq:ratematrixentry}), the rate matrix 
$R(\bar{\bm\tau};p)$ is clearly a Gram matrix, and remains as such under
the change of basis generated by 
the similarity transformation $T$ that brings it to block diagonal form.
Thus
$\mathfrak{R}^\lambda(\bar{\bm\tau})$ is a Gram matrix and has non-negative eigenvalues. Since the trace of a
matrix is the sum of its eigenvalues, the trace of $\mathfrak{R}^\lambda(\bar{\bm\tau})$ is $0$, and
$\mathfrak{R}^\lambda(\bar{\bm\tau})$ has all non-negative eigenvalues, it follows that all the eigenvalues of
$\mathfrak{R}^\lambda(\bar{\bm\tau})$ are zero. Since all the eigenvalues of 
$\mathfrak{R}^\lambda(\bar{\bm\tau})$ are zero, it
must be nilpotent; however, $\mathfrak{R}^\lambda(\bar{\bm\tau})$ is also symmetric and it's well-known that the only matrix that is both symmetric and nilpotent is the zero matrix. 
\end{proof}\\
{Lemma~\ref{lemma:zeroentries} thus establishes a simple test to determine which 
$\mathfrak{R}^\lambda(\boldsymbol{\bar\tau})$ are $0$, thus truncating the sum in Eqs.~(\ref{eq:rbn}) or (\ref{eq:rfn}).}

\subsection{Fully distinguishable particles}\label{subsec:distinguishable}

{There remains an important case to examine:} the situation where some particles are fully distinguishable.  This occurs when $r_{ij}(\bar{\boldsymbol{\tau}})=r_{ji}(\bar{\boldsymbol{\tau}})=0$ in the matrix $r(\bar{\boldsymbol{\tau}})$. {In this section we first show how this limit
can modify our block diagonalization algorithm to provide significant simplifications, and discuss next if this limit is realizable within our time-overlap model to compute
$r_{ij}$ for $i\ne j$.}

{We first investigate the modification to our scheme by supposing}
one particle is fully distinguishable from all the rest.  For definiteness, take this to the particle number $k$.  Then the entries $r_{ik}(\bar{\boldsymbol{\tau}})$ and $r_{ki}(\bar{\boldsymbol{\tau}})$ of the matrix $r(\bar{\boldsymbol{\tau}})$ will be $0$ except for $r_{kk}(\bar{\boldsymbol{\tau}})=1$; {the matrix $r(\boldsymbol{\tau})$ is thus block diagonal}.  The matrix $R(\bar{\boldsymbol{\tau}})$ will {also be block diagonal so that} after suitable permutations of rows and columns, it can be brought to a form with $n$ identical blocks {explicitly} repeated.   Our previous procedure then amounts to 
{further} block diagonalization inside 
{each of the repeated blocks} using representations of $\mathfrak{S}_{n-1}$ rather than $\mathfrak{S}_n$.

If two particles are fully distinguishable - say particle number $k$ and particle number $q$ - one simply applies the same observation as previously to particle number $q$,
thereby reducing the problem to multiple copies of the $\mathfrak{S}_{n-2}$ problem. Obviously, as the number of fully distinguishable particles increases, the size of each non-trivial block in the matrix $R$ decreases, until eventually one reaches the case where all particles are fully distinguishable: this is {a} ``classical'' limit. In this case, the entries $r_{ij}(\bar{\boldsymbol{\tau}})$ of the matrix $r(\bar{\boldsymbol{\tau}})$ are $\delta_{ij}$, and the rate matrix $R(\bar{\boldsymbol{\tau}})$ is already diagonal: there is thus no need to block diagonalize $R(\bar{\boldsymbol{\tau}})$. 

For instance, if we have $n=3$ fermions and the last particle is fully distinguishable, then
\begin{align}
    r(\bar{\boldsymbol{\tau}})=
    \left(\begin{array}{ccc}
    1&r_{12}(\bar{\boldsymbol{\tau}})&0\\
    r_{21}(\bar{\boldsymbol{\tau}})&1&0\\
    0&0&1\end{array}\right)\, ,
\end{align}
and, after {suitable rearrangement of} rows and columns, we reach
\begin{align}
    R(\bar{\boldsymbol{\tau}};\text{f})=
    \left(\begin{array}{cccccc}
    1& -r_{12}^2(\bar{\boldsymbol{\tau}})&0&0&0&0\\
     -r_{12}^2(\bar{\boldsymbol{\tau}})&1&0&0&0&0\\
     0&0&1& -r_{12}^2(\bar{\boldsymbol{\tau}})&0&0\\
     0&0&-r_{12}^2(\bar{\boldsymbol{\tau}})&1&0&0\\
     0&0&0&0&1& -r_{12}^2(\bar{\boldsymbol{\tau}})\\
     0&0&0&0& -r_{12}^2(\bar{\boldsymbol{\tau}})&1
    \end{array}\right)\, , \label{eq:3distinguishable}
\end{align}
where $r_{21}(\bar{\boldsymbol{\tau}})=r^*_{12}(\bar{\boldsymbol{\tau}})$ has been used.  Eq.~(\ref{eq:3distinguishable}) is clearly $3$ copies of the problem
with $2$ partially distinguishable fermions.

If now all particles are fully distinguishable, $r_{ij}(\bar{\boldsymbol{\tau}})
=\delta_{ij}$ so that $r(\bar{\boldsymbol{\tau}})$ and $R(\bar{\boldsymbol{\tau}};
\text{p})$ are both unit matrices: {there is no} need for block diagonalization. In this case the rate collapses to 
\begin{align}
    \sum_{i} \vert v_i(s)\vert^2
    =\sum_{\gamma\in\mathfrak{S}_3} \vert \Asymbol_{(\gamma)}(s)\vert^2=
    \sum_{\gamma\in \mathfrak{S}_3}\vert A_{1\gamma(1)}(s)\vert^2 \vert A_{2\gamma(2)}(s)\vert^2 \vert A_{3\gamma(3)}(s)\vert^2
\end{align}
which can be seen to be the permanent of the non-negative $n\times n$ matrix with entries $\vert A_{ij}(s)\vert^2$.  In this case, one can use an efficient algorithm \cite{PermAlg} to evaluate the permanent of this type of matrix as long as a small amount of error is allowed.

{We now ask if it is possible to obtain $r_{ij}=\delta_{ij}$ for all particles in our model when computing $r_{ij}$ as the temporal overlap 
$\text{e}^{-\delta\omega^2\,(\tau_i-\tau_j)^2/2}$.} {For simplicity, we imagine each pulse separated from the one ahead of it by 
some fixed interval $\Delta T=\tau_{i+1}-\tau_i$, independent of $i$.  Thus, under this assumption, we have $(\tau_i-\tau_j)^2= (i-j)^2\Delta T^2$
We note that, for $n$  particles, the total integration time is ${\cal T}\sim n \Delta T$,
so using ${\cal T}\Delta \Omega\approx 2\pi$ for Fourier-limited detectors yields
\begin{align}
  (\tau_i-\tau_j)^2&=\frac{(i-j)^2{\cal T}^2}{n^2}
  \sim  \frac{4\pi ^2 (i-j)^2 }{n^2 (\Delta\Omega)^2}\, ,\\
  \delta\omega^2 (\tau_i-\tau_j)^2 &\sim  \frac{\Delta\omega^2}{\Delta\omega^2+\Delta\Omega^2}
    \frac{4\pi ^2 (i-j)^2 }{n^2 }\, . \label{eq:exponentterm}
\end{align}

It will be convenient to use $1/\Delta\omega$, the inverse frequency width of the pulse, as the timescale for our problem.    
For long integration times ${\cal T}\sim 2\pi/\Delta \Omega\gg 1/\Delta \omega$, or $\Delta\omega\gg \Delta \Omega/2\pi$, and Eq.(\ref{eq:exponentterm}) becomes 
\begin{align}
   \delta\omega^2 (\tau_i-\tau_j)^2 &\approx
    \frac{4\pi ^2 (i-j)^2 }{n^2 }\, , \qquad (\hbox{for } {\cal T}\gg 1/\Delta\omega)\, ,
\end{align}
which is appreciable only if $\vert i-j\vert /n\approx 1$, \emph{i.e.} for ``very distant'' pulses in a sequence of $n\gg 1$ pulses.  However, for ``nearly neighbouring pulses'' and sufficiently large number of fermions, so that $\vert i-j\vert \ll n$, we have
$\frac{4\pi ^2 (i-j)^2 }{n^2 }\ll 1$ and hence
\begin{align}
    e^{-\delta\omega^2 (\tau_i-\tau_j)^2}\approx 1-\frac{4\pi^2(i-j)^2}{n^2}\, , \qquad 
    (\hbox{for} \vert i-j\vert \ll n)\, .
\end{align}
Thus we conclude that we cannot reach the limit where all $r_{ij}=\delta_{ij}$ for long integration times even if the pulses are well separated in time.

Next consider short integration times, for which ${\cal T}\ll 1/\Delta\omega$, or 
$\Delta\Omega\gg 2\pi \Delta \omega$.  In this case
\begin{align}
     \delta\omega^2 (\tau_i-\tau_j)^2 &\sim  \frac{(2\pi \Delta\omega)^2}{\Delta\Omega^2}
    \frac{(i-j)^2 }{n^2 } \ll 1 , \qquad (\hbox{for } {\cal T}\ll 1/\Delta\omega, 
    \hbox{ any } \vert i-j\vert)\, ,
\end{align}
always.  Here again, it is not possible reach the limit where $r_{ij}=\delta_{ij}$.

Hence, we conclude that, for Fourier-limited detectors, we cannot reach a regime where all particles are considered fully distinguishable.
Instead we reach this regime by sending particles one by one at times $t_n=n\Delta T$ but detecting after a time $\approx (n+\frac{1}{2})\Delta T$; in this case the particles are
detected one by one, and never detected in coincidence.

One could instead imagine an alternate model where the $r_{ij}$'s are not measured by a temporal overlap, but by a polarization overlap, \emph{i.e.} working with polarized fermions.  One then easily shows that $r_{ij}=\cos^2(\theta_{ij})$, where $\theta_{ij}$ is the relative angle of the polarization vectors for particles $i$ and $j$.  In such a scenario, it is clearly not possible to have more that two particles which are pairwise fully distinguishable, and it is not possible to reach
the limit where all $r_{ij}=\delta_{ij}$ either.
}

\section{Discretizing time and Gamas's theorem}
\label{subsec:Gamas}
Although arrival times are continuous in principle,
and hence range over any real value 
in some interval,
in practice arrival times are discrete due to resolution limits of delay lines.
In this section,
we exploit discretization to write arrival times as a tuple of integers, with each integer in the tuple denoting the arrival time of the corresponding 
particle at the detector.
Absolute arrival times are not needed and we use {relative} arrival times to introduce the delay partition.
Next we explain how these partitions can be partially ordered,
which sets the stage for introducing and using Gamas's theorem.
Finally, we use these rules for partitioning to restrict sums used for computing the coincidence rate equations. Here again the discussion applies any
$s$ with $s_i=0$ or $1$, and we assume that no 
one particle is fully distinguishable from any of the others.

\subsection{Discretizing time and the delay partition }
\label{subsec:timebin}

In the remainder of the paper, we  use the combinatorics of Young diagrams to show that when some of the 
$\vert \bar{\tau}_i-\bar{\tau}_j\vert \approx 0$, 
some of the terms in Eqs.~(\ref{eq:rbn}) and~(\ref{eq:rfn}) are in turn automatically $0$, thus further simplifying the actual calculation by truncating the sum.

If the arrival times are controlled to arbitrary precision, the exact rate is given by Eq.~(\ref{eq:rateuRu}). In practice, we envisage a scenario where, if $\vert\bar\tau_i-\bar\tau_j\vert<\epsilon$, the incident particles are considered to arrive simultaneously and thus indistinguishable. It is then convenient to imagine that the source supplies, for each run,~$n$ particles arriving within some finite time interval $\mathcal{T}$.  This total time interval is divided into a number $b\ge 2$ of discrete identical time bins of with $\epsilon=\mathcal{T}/b$, so that a particle arriving at time $\bar \tau_i $ is assigned the discrete arrival time $\tau_c$
\begin{align}
\bar\tau_i \to \tau_c\qquad \hbox{when}\qquad \frac{(c-1) \mathcal{T}}{b}\le \bar\tau_i <
\frac{c\mathcal{T}}{b}\, ,\qquad c=1,\ldots, b\, .
\end{align}

The next step is construct a \emph{delay partition} ~$\mu_{\bm\tau}$ given a vector~$\bm\tau$ containing
discretized arrival times.
This is done by simply tallying
the number of particles in each bin, removing bins with no particles, and ordering the counts in decreasing order.  Thus, 
if {there are} $b=12$ bins, there are
$12$ possible discrete values of $\tau_i$ for each particle. 
If additionally there
are $n=4$ particles, the discrete arrival-time vector is  $\bm\tau=[\tau_1,\tau_4,\tau_1,\tau_9]$, we assign 
to this arrival-time vector $\bm\tau$ the delay partition $\mu_{\bm\tau}=(2,1,1)$.  

The delay partition is determined only by the tally in each bin, irrespective of 
the index of the (discretized) arrival times $\tau_i$ and irrespective of the actual
bin in which $\tau_i$ falls.  The (discrete) arrival-time vectors
\begin{align}
    [\tau_2,\tau_1,\tau_1,\tau_6]\, ,\qquad [\tau_7,\tau_3,\tau_4,\tau_3]\, ,\qquad 
    [\tau_5,\tau_5,\tau_2,\tau_{11}]
\end{align}
for instance all map to the same delay partition $\mu_{\bm\tau}=(2,1,1)$.  In this way a
delay partition is an equivalence class of discrete arrival-time vectors, 
and all the previous partitions are equivalent to the 
arrival-time vector $[\tau_1,\tau_1,\tau_2,\tau_3]$. For $n=4$ and $b\ge 4$, we have the following possible delay partitions:
 \begin{align}
     [\tau_1,\tau_1,\tau_1,\tau_1] &\rightarrow \mu_{\tau}= \Yvcentermath1\Yboxdim{4pt}\yng(4) \\ 
     [\tau_1,\tau_1,\tau_1,\tau_2] &\rightarrow \mu_{\tau}= \Yvcentermath1\Yboxdim{4pt}\yng(3,1) \nonumber \\ 
     [\tau_1,\tau_1,\tau_2,\tau_2] &\rightarrow \mu_{\tau}= \Yvcentermath1\Yboxdim{4pt}\yng(2,2) \nonumber \\ 
     [\tau_1,\tau_1,\tau_2,\tau_3] &\rightarrow \mu_{\tau}= \Yvcentermath1\Yboxdim{4pt}\yng(2,1,1) \nonumber \\
     [\tau_1,\tau_2,\tau_3,\tau_4] &\rightarrow \mu_{\tau}= \Yvcentermath1\Yboxdim{4pt}\yng(1,1,1,1) \nonumber
    \end{align}

\subsection{Dominance Ordering and Gamas's theorem}
\label{subsec:gamas}

For a given delay partition, one can obtain a significant simplification in the evaluation of the coincidence rate equations.  This simplification is a consequence of Gamas's theorem, from which one deduces that some immanants of a Gram matrix are zero based on dominance ordering of partitions.

Dominance ordering is a partial ordering of partitions,  given by $\mu \unlhd \lambda$ when
\begin{align}
\mu_1 + \mu_2 + \dots + \mu_i \leq \lambda_1 + \lambda_2 + \dots + \lambda_i
\label{eq:partialorder}
\end{align}
for all $i \geq 1$. We say that two partitions cannot be compared when neither one dominates the other. The notation $\mu \lhd \lambda$
means that $\lambda$ strictly dominates $\mu$ (a partition dominates itself, but it doesn't strictly dominate itself). An example is given in Fig.~\ref{fig:dominanceorder}, where it can be seen
that $\Yvcentermath1\Yboxdim{4pt}\yng(6)$ dominates all partitions of 6 and that 
$\Yvcentermath1\Yboxdim{4pt}\yng(3,3)$
and  $\Yvcentermath1\Yboxdim{4pt}\yng(4,1,1)$ cannot be compared.

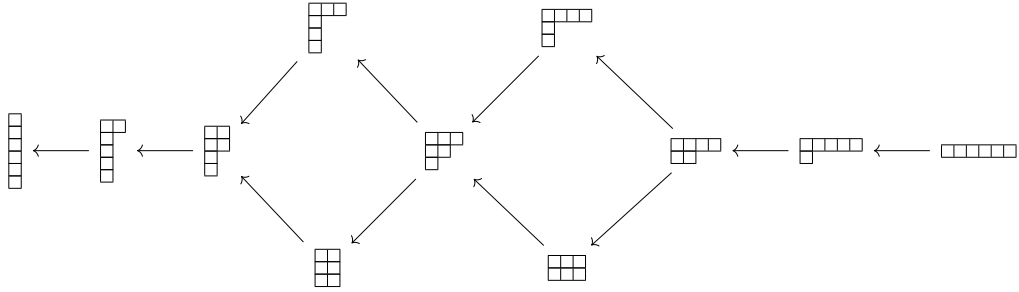
\begin{figure}[hbt!]
 \centering
 \begin{tikzcd}
     &  & & \Yvcentermath1\Yboxdim{5pt}\yng(3,1,1,1) \arrow[ld,]  & & \Yvcentermath1\Yboxdim{5pt}\yng(4,1,1) \arrow[ld]  & & &\\
\Yvcentermath1\Yboxdim{5pt}\yng(1,1,1,1,1,1)  & \Yvcentermath1\Yboxdim{5pt}\yng(2,1,1,1,1) \arrow[l] & \Yvcentermath1\Yboxdim{5pt}\yng(2,2,1,1) \arrow[l]  & & \Yvcentermath1\Yboxdim{5pt}\yng(3,2,1) \arrow[lu] \arrow[ld]  & & \Yvcentermath1\Yboxdim{5pt}\yng(4,2) \arrow[lu] \arrow[ld] & \Yvcentermath1\Yboxdim{5pt}\yng(5,1) \arrow[l] & \Yvcentermath1\Yboxdim{5pt}\yng(6) \arrow[l] \\
                             &  & &  \Yvcentermath1\Yboxdim{5pt}\yng(2,2,2) \arrow[lu]  & & \Yvcentermath1\Yboxdim{5pt}\yng(3,3) \arrow[lu]  & & &
\end{tikzcd}
 \caption{The dominance ordering of the partitions of $6$.}
 \label{fig:dominanceorder}
\end{figure}

\noindent Gamas's theorem tells us whether the $\lambda$-immanant of a Gram matrix is zero by looking at the shape of the partition $\lambda$. The theorem is stated as follows.

\begin{theorem}
(Gamas~\cite{gamas}): Let $V_{ij}=\bra{f_i} f_j \rangle$ be a $n \times n$ Gram matrix formed by the set of basis vectors $\{ f_i \}$, and let $\lambda \vdash n$ be a partition. We have that $\operatorname{imm}{^\lambda(V)}\neq 0$ if and only if the columns of $\lambda$ can partition the set of basis vectors $\{f_i\}$ into linearly independent sets.
\end{theorem}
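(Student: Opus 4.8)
The plan is to reduce the statement to a single fact about Young symmetrizers acting on a decomposable tensor, and then to isolate the one genuinely nontrivial ingredient. Let $H$ be the ambient inner-product space, set $T:=f_1\otimes\cdots\otimes f_n\in H^{\otimes n}$, and let $\mathfrak{S}_n$ act on $H^{\otimes n}$ by permuting tensor factors. Because the characters of $\mathfrak{S}_n$ are real, the operator $\Pi_\lambda:=\tfrac{\text{s}_\lambda}{n!}\sum_{\sigma\in\mathfrak{S}_n}\chi_\lambda(\sigma)\,\sigma$ is the \emph{orthogonal} projector onto the $\lambda$-isotypic subspace of $H^{\otimes n}$, and expanding $\langle T,\Pi_\lambda T\rangle$ against $V_{ij}=\langle f_i, f_j\rangle$ shows that $\operatorname{imm}^\lambda(V)$ equals $\|\Pi_\lambda T\|^2$ up to the positive factor $n!/\text{s}_\lambda$. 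In particular $\operatorname{imm}^\lambda(V)\ge 0$, and the theorem becomes: $\Pi_\lambda T\neq 0$ if and only if the columns of $\lambda$ partition $\{f_i\}$ into linearly independent sets. Schur--Weyl duality, which identifies the $\lambda$-isotypic subspace with copies of the Schur module $\mathbb{S}_\lambda(H)$, already makes the appearance of the column lengths $\lambda^{*}_1,\dots,\lambda^{*}_{\lambda_1}$ natural.

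I would then pass from the central idempotent $\Pi_\lambda$ to a single Young symmetrizer. Since $\Pi_\lambda$ lies in the simple two-sided ideal $I_\lambda\subset\mathbb{C}\mathfrak{S}_n$, and $I_\lambda$ is generated as a two-sided ideal by the Young symmetrizer $c_U$ of any tableau $U$ of shape $\lambda$, writing $\Pi_\lambda=\sum_k a_k\,c_U\,b_k$ and using $\Pi_\lambda T\neq 0$ gives $c_U(\sigma T)\neq 0$ for some $\sigma\in\mathfrak{S}_n$. As $\sigma T$ is just the tensor obtained by reading $f_1,\dots,f_n$ into the cells of $\lambda$ in some other bijective order, the theorem reduces to the lemma: for any bijective filling $U$ of the diagram of $\lambda$ by $f_1,\dots,f_n$, one has $c_U T\neq 0$ if and only if the vectors in each \emph{column} of $U$ are linearly independent. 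In the convention where $c_U$ antisymmetrizes over columns before symmetrizing over rows, the ``only if'' direction is immediate, since antisymmetrizing over a column carrying a linear dependence already annihilates $T$; and a filling with all columns linearly independent is precisely the data of a partition of $\{f_i\}$ into linearly independent sets of sizes $\lambda^{*}_1,\dots,\lambda^{*}_{\lambda_1}$, so this yields the ``only if'' half of the theorem.

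The substantive step is the ``if'' direction of the lemma: column-independence implies $c_U T\neq 0$. The column-antisymmetrization $b_U T$ is (the image of) $\bigotimes_j\big(\bigwedge_{i\in\mathrm{col}\,j}f_i\big)\in\bigotimes_j\Lambda^{\lambda^{*}_j}H$, which is nonzero by hypothesis; the issue is that the subsequent row-symmetrization $a_U$ might kill it. I would handle this by expanding each $f_i$ in a basis of $\operatorname{span}\{f_i\}$, writing $c_U T$ as a combination of symmetrized basis tensors $c_U(e_{U'})$, ordering the columns of $U$ by decreasing height so the tallest carries a basis of the span, and then tracking one distinguished semistandard term whose coefficient is a nonzero product of minors and which --- by dominance of Young diagrams --- receives no cancelling contribution from the other terms. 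This leading-term bookkeeping is the technical heart of the proof (it is the substance of Gamas's argument and of later short proofs, e.g.\ \cite{gamas,BergetGamas}), and it is where I expect all of the difficulty to lie: every step up to this point is formal representation theory, whereas showing that column-independence is \emph{sufficient} for nonvanishing genuinely requires controlling the cancellations in this expansion.
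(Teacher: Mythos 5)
First, a point of reference: the paper does not prove this theorem at all --- it is quoted verbatim from \cite{gamas} (see also \cite{BergetGamas}) --- so your proposal can only be measured against the cited literature. Within that comparison, your reductions are correct and in fact reproduce the architecture of Berget's short proof: (i) the identity $\operatorname{imm}^\lambda(V)=\tfrac{n!}{\text{s}_\lambda}\,\|\Pi_\lambda T\|^2$ with $T=f_1\otimes\cdots\otimes f_n$ is the standard Schur computation and correctly converts the statement into the nonvanishing of the $\lambda$-isotypic projection; (ii) the passage from the central idempotent to a single Young symmetrizer via the two-sided ideal is sound (for the converse implication you also need, note that $c_U\sigma T=c_U\sigma\Pi_\lambda T$ by centrality, so any nonzero $c_U\sigma T$ forces $\Pi_\lambda T\neq 0$; you only state the forward implication, but this half is easy); and (iii) the ``only if'' direction of your single-tableau lemma is indeed immediate from the vanishing of a wedge of linearly dependent vectors.

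The genuine gap is that the one direction carrying all of the content --- that column-independence of a filling $U$ implies $c_U T\neq 0$ --- is described but not proved. Your paragraph about expanding in a basis, isolating a distinguished semistandard term with a coefficient given by a product of minors, and invoking dominance to exclude cancellation is a fair description of what Gamas and Berget actually do (Berget's argument runs through the linear independence of semistandard bideterminants, i.e.\ the straightening law), but as written it is not a proof: you never specify the distinguished term, never show its coefficient is nonzero, and never establish the no-cancellation claim, which is precisely where the theorem lives. At least one detail of the sketch is also wrong as stated: the tallest column of $U$ need not ``carry a basis of the span'' of $\{f_i\}$ (take $\lambda=(2,2)$ with $f_1,\ldots,f_4$ a basis of a four-dimensional space; the first column spans only a two-dimensional subspace), so the leading-term order cannot be anchored the way you propose. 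Everything up to that point being formal representation theory, the proposal should be read as a correct reduction of Gamas's theorem to its essential combinatorial lemma together with an honest but unexecuted plan for that lemma, not as a complete proof.
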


In Table \ref{fig:GamasTable}, a $0$ indicates that the immanant, for
the corresponding set of basis functions, is $0$. Gamas's theorem is a 
bi-conditional statement, so the blank boxes mean the corresponding 
immanant is non-zero. For further reading on Gamas's theorem see~\cite{BergetGamas}.

\begin{table}[hbt!]
 \centering
\begin{tabular}{ |c|c|c|c|c|c|c|c|c|c|c|c| } 
   \hline
  $\Yvcentermath1\Yboxdim{4pt}\yng(1,1,1,1,1,1)$  &  $\Yvcentermath1\Yboxdim{4pt}\yng(2,1,1,1,1)$ & $\Yvcentermath1\Yboxdim{4pt}\yng(2,2,1,1)$ & 
  $\Yvcentermath1\Yboxdim{4pt}\yng(2,2,2)$ & 
  $\Yvcentermath1\Yboxdim{4pt}\yng(3,1,1,1)$ & 
  $\Yvcentermath1\Yboxdim{4pt}\yng(3,2,1)$ & $\Yvcentermath1\Yboxdim{4pt}\yng(3,3)$ & 
  $\Yvcentermath1\Yboxdim{4pt}\yng(4,1,1)$ & 
  $\Yvcentermath1\Yboxdim{4pt}\yng(4,2)$ & 
  $\Yvcentermath1\Yboxdim{4pt}\yng(5,1)$ & 
  $\Yvcentermath1\Yboxdim{4pt}\yng(6)$ & \phantom{ $\Yvcentermath1\Yboxdim{4pt}\yng(1,1,1,1,1,1,1,1)$} \\
    \hline
   & & & & & & & & & & \phantom{$\Yvcentermath1\Yboxdim{5pt}\yng(1,1,1)$} & $f_1$ $f_2$ $f_3$ $f_4$ $f_5$ $f_6$ \\
  \hline
  0 & & & & & & & & & & \phantom{$\Yvcentermath1\Yboxdim{5pt}\yng(1,1,1)$} & $f_1$ $f_1$ $f_2$ $f_3$ $f_4$ $f_5$ \\
   \hline
  0 & 0 & & & & & & & & & \phantom{$\Yvcentermath1\Yboxdim{5pt}\yng(1,1,1)$} & $f_1$ $f_1$ $f_2$ $f_2$ $f_3$ $f_4$ \\
   \hline
   0 & 0 & 0 & & 0 & & & & & & \phantom{$\Yvcentermath1\Yboxdim{5pt}\yng(1,1,1)$} & $f_1$ $f_1$ $f_2$ $f_2$ $f_3$ $f_3$  \\
   \hline
   0 & 0 & 0 & 0 & & & & & & & \phantom{$\Yvcentermath1\Yboxdim{5pt}\yng(1,1,1)$} & $f_1$ $f_1$ $f_1$ $f_2$ $f_3$ $f_4$ \\
   \hline
   0 & 0 & 0 & 0 & 0 & & & & & & \phantom{$\Yvcentermath1\Yboxdim{5pt}\yng(1,1,1)$} & $f_1$ $f_1$ $f_1$ $f_2$ $f_2$ $f_3$ \\
   \hline
     0 & 0 & 0 & 0 & 0 & 0 & & 0 & & & \phantom{$\Yvcentermath1\Yboxdim{5pt}\yng(1,1,1)$} & $f_1$ $f_1$ $f_1$ $f_2$ $f_2$ $f_2$ \\
   \hline
     0 & 0 & 0 & 0 & 0 & 0 & 0 & & & & \phantom{$\Yvcentermath1\Yboxdim{5pt}\yng(1,1,1)$} & $f_1$ $f_1$ $f_1$ $f_1$ $f_2$  $f_3$ \\
   \hline
    0 & 0 & 0 & 0 & 0 & 0 & 0 & 0 & & & \phantom{$\Yvcentermath1\Yboxdim{5pt}\yng(1,1,1)$} & $f_1$ $f_1$ $f_1$ $f_1$ $f_2$ $f_2$ \\
   \hline
    0 & 0 & 0 & 0 & 0 & 0 & 0 & 0 & 0 & & \phantom{$\Yvcentermath1\Yboxdim{5pt}\yng(1,1,1)$} & $f_1$ $f_1$ $f_1$ $f_1$ $f_1$ $f_2$ \\
   \hline 
    0 & 0 & 0 & 0 & 0 & 0 & 0 & 0 & 0 & 0 & \phantom{$\Yvcentermath1\Yboxdim{5pt}\yng(1,1,1)$} & $f_1$ $f_1$ $f_1$ $f_1$ $f_1$ $f_1$ \\
   \hline
\end{tabular} 
 \caption{The vanishing behaviour of the immanants of $6 \times 6$ Gram matrices.}
 \label{fig:GamasTable}
\end{table}

\subsection{Restricting the sum in the coincidence rate equations.}
\label{sec:restricting}

Applying a corollary of Gamas's theorem ~\cite{silva} to Lemma  \ref{lemma:zeroentries}
yields the following: 
\begin{proposition}
\label{coincidencerates}
Assume $s$ with $s_i=0$ or $1$. The matrix 
$\mathfrak{R}^\lambda(\bm\tau)$ is non-zero if and only if $\lambda$ dominates $\mu_{\tau}$ in the dominance ordering of partitions. Thus, for a given set of {discrete} arrival times $\bm\tau$, 
the coincidence rate equations simplify to
\begin{align}
\operatorname{rate}(\bm\tau,s;\text{b})
=&\sum_{\mu_{\tau} \unlhd \lambda}
\sum_{i=1}^{\text{s}_\lambda} \mathfrak{v}^\dagger_{\lambda;i}(s)\, 
\mathfrak{R}^\lambda(\bm\tau)\, \mathfrak{v}_{\lambda;i}(s),\\
\operatorname{rate}(\bm\tau,s;\text{f}) =
&\sum_{\mu_{\tau} \unlhd \lambda} 
\sum_{i=1}^{\text{s}_\lambda} \mathfrak{v}^\dagger_{\lambda;i}(s)\, 
\mathfrak{R}^{\lambda^*}(\bm\tau)\, \mathfrak{v}_{\lambda;i}(s). \label{eq:fermrate}
\end{align}
The sum is over all partitions $\lambda \vdash n$ that dominate $\mu_{\tau}$.
\end{proposition}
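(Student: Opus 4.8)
\medskip
\noindent\textbf{Proof strategy.} The plan is to reduce the vanishing of the block $\mathfrak{R}^\lambda(\bm\tau)$ to the vanishing of a single immanant of the delay matrix, and then to read off from Gamas's theorem exactly which such immanants vanish. First I would upgrade Lemma~\ref{lemma:zeroentries} to a biconditional: $\mathfrak{R}^\lambda(\bm\tau)=0$ precisely when $\operatorname{imm}^\lambda(r(\bm\tau))=0$. One implication is Lemma~\ref{lemma:zeroentries} itself; for the converse, recall that the diagonal entries of $\mathfrak{R}^\lambda(\bm\tau)$ are the functions $\mathcal{D}^\lambda_{i,i}(r(\bm\tau))$ (cf.\ Table~\ref{fig:immanantsandDfunctions}), so Theorem~\ref{theorem:Kostant} gives $\operatorname{imm}^\lambda(r(\bm\tau))=\operatorname{tr}\mathfrak{R}^\lambda(\bm\tau)$, which is $0$ whenever $\mathfrak{R}^\lambda(\bm\tau)=0$. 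This turns the proposition into a statement about immanants of $r(\bm\tau)$.

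Next I would pin down the Gram structure of $r(\bm\tau)$. After discretization, $\tau_i=\tau_j$ holds exactly when particles $i$ and $j$ fall in the same time bin, in which case $f_i=f_j$; for particles in distinct bins the $f_i$ are distinct exponentials, and since the Gaussian kernel $r_{ij}=\text{e}^{-\delta\omega^2(\tau_i-\tau_j)^2/2}$ is strictly positive definite on distinct nodes (which is exactly the standing hypothesis that no $r_{ij}=0$), every collection of such distinct $f_i$'s is linearly independent. Hence $r(\bm\tau)=(\langle f_i\,|\,f_j\rangle)$ is, up to reindexing, the Gram matrix of a multiset in which the $\nu$-th distinct function occurs with multiplicity $(\mu_\tau)_\nu$, and a subset of $\{f_1,\dots,f_n\}$ is linearly independent if and only if it contains no repeated function.

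Now Gamas's theorem applies directly: $\operatorname{imm}^\lambda(r(\bm\tau))\neq 0$ iff $\{f_1,\dots,f_n\}$ can be split into linearly independent blocks whose sizes are the column lengths $\lambda^*_1\ge\lambda^*_2\ge\cdots$ of the Young diagram of $\lambda$, i.e.\ (by the previous paragraph) into blocks each using each distinct function at most once, the $c$-th block having size $\lambda^*_c$. Recording such a partition by the $0$--$1$ matrix whose $(\nu,c)$ entry indicates whether the $\nu$-th distinct function lies in block $c$, this is precisely the existence of a $0$--$1$ matrix with row sums $\mu_\tau$ and column sums $\lambda^*$. By the Gale--Ryser theorem such a matrix exists if and only if $\mu_\tau\unlhd(\lambda^*)^*=\lambda$; this combinatorial equivalence is the corollary of Gamas's theorem due to Silva~\cite{silva} cited in the statement. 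Combining with the first paragraph yields $\mathfrak{R}^\lambda(\bm\tau)\neq 0 \iff \mu_\tau\unlhd\lambda$, and substituting this criterion into Eqs.~(\ref{eq:rbn}) and~(\ref{eq:rfn}) and discarding the vanishing blocks produces the displayed truncated rate equations for bosons and fermions.

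The step I expect to be the main obstacle is precisely the translation of Gamas's ``linearly independent partition'' condition into the dominance inequality: one must first justify that the only linear dependencies among the $f_i$ are the forced equalities within a time bin — this is where positive-definiteness of the Gaussian (equivalently, strict partial distinguishability) is essential and a little care is needed — and then keep the conjugations straight in the Gale--Ryser step so that the criterion emerges as $\mu_\tau\unlhd\lambda$ rather than its conjugate. By contrast, the reduction to a single immanant (Theorem~\ref{theorem:Kostant} together with Lemma~\ref{lemma:zeroentries}) and the final substitution into the rate equations are routine.
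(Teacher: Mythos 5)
Your proposal is correct and follows essentially the same route as the paper, which simply combines Lemma~\ref{lemma:zeroentries} with the corollary of Gamas's theorem cited from~\cite{silva}. The only addition on your side is that you unpack that cited corollary yourself (linear independence of the Gaussian-kernel Gram vectors within distinct time bins, then Gale--Ryser to convert the column-partition condition into $\mu_\tau\unlhd\lambda$) and make the biconditional with $\operatorname{imm}^\lambda(r(\bm\tau))$ explicit via Kostant's theorem --- details the paper delegates to the reference, worked out correctly here.
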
\\
\noindent
We note that if every particle arrives in a different time bin, there is no truncation and we recover the full
sums of Eqs.(\ref{eq:rbn}) and~(\ref{eq:rfn}).

\begin{example}
 Consider the set of {discrete} arrival times $\bm\tau=[\tau_1,\tau_1,\tau_1,\tau_2,\tau_2,\tau_2 ]$, corresponding to the delay partition  $\mu_{\tau}=\Yvcentermath1\Yboxdim{3pt}\yng(3,3)$. It follows that for $\lambda$ equal to $\,\Yvcentermath1\Yboxdim{4pt}\yng(1,1,1,1,1,1),\, \Yvcentermath1\Yboxdim{4pt}\yng(2,1,1,1,1),\, \Yvcentermath1\Yboxdim{4pt}\yng(2,2,1,1),\, 
 \Yvcentermath1\Yboxdim{4pt}\yng(3,1,1,1),\, \Yvcentermath1\Yboxdim{4pt}\yng(2,2,2),\,\Yvcentermath1\Yboxdim{4pt}\yng(3,2,1),$ and
 $\Yvcentermath1\Yboxdim{4pt}\yng(4,1,1)$ the $\mathfrak{R}^{\lambda}(\bm\tau)$ terms are equal to zero since these are the partitions which do not dominate $\Yvcentermath1\Yboxdim{3pt}\yng(3,3)$. 
 The coincidence rate equations are thus
\begin{align}
    \operatorname{rate}(\bm\tau,s;\text{b}) =& \mathfrak{v}^\dagger_{\Yvcentermath1\Yboxdim{3pt}\yng(6);1}(s) \mathfrak{R}^{\Yvcentermath1\Yboxdim{3pt}\yng(6)}(\bm\tau) \mathfrak{v}_{\Yvcentermath1\Yboxdim{3pt}\yng(6);1}(s) + \sum_{i=1}^5 \mathfrak{v}^\dagger_{\Yvcentermath1\Yboxdim{3pt}\yng(5,1);i}(s) \mathfrak{R}^{\Yvcentermath1\Yboxdim{3pt}\yng(5,1)}(\bm\tau) \mathfrak{v}_{\Yvcentermath1\Yboxdim{3pt}\yng(5,1);i}(s) \nonumber \\
    &\qquad + \sum_{i=1}^9 \mathfrak{v}^\dagger_{\Yvcentermath1\Yboxdim{3pt}\yng(4,2);i}(s) \mathfrak{R}^{\Yvcentermath1\Yboxdim{3pt}\yng(4,2)}(\bm\tau) \mathfrak{v}_{\Yvcentermath1\Yboxdim{3pt}\yng(4,2);i}(s)+ \sum_{i=1}^5 \mathfrak{v}^\dagger_{\Yvcentermath1\Yboxdim{3pt}\yng(3,3);i}(s) \mathfrak{R}^{\Yvcentermath1\Yboxdim{3pt}\yng(3,3)}(\bm\tau) \mathfrak{v}_{\Yvcentermath1\Yboxdim{3pt}\yng(3,3);i}(s), \\
     \operatorname{rate}(\bm\tau,s;\text{f}) =& \mathfrak{v}^\dagger_{\Yvcentermath1\Yboxdim{3pt}\yng(1,1,1,1,1,1);1}(s) \mathfrak{R}^{\Yvcentermath1\Yboxdim{3pt}\yng(6)}(\bm\tau) \mathfrak{v}_{\Yvcentermath1\Yboxdim{3pt}\yng(1,1,1,1,1,1);1}(s) + \sum_{i=1}^5 \mathfrak{v}^\dagger_{\Yvcentermath1\Yboxdim{3pt}\yng(2,1,1,1,1);i}(s) \mathfrak{R}^{\Yvcentermath1\Yboxdim{3pt}\yng(5,1)}(\bm\tau) \mathfrak{v}_{\Yvcentermath1\Yboxdim{3pt}\yng(2,1,1,1,1);i}(s) \nonumber \\
    &\qquad + \sum_{i=1}^9 \mathfrak{v}^\dagger_{\Yvcentermath1\Yboxdim{3pt}\yng(2,2,1,1);i}(s) \mathfrak{R}^{\Yvcentermath1\Yboxdim{3pt}\yng(4,2)}(\bm\tau) \mathfrak{v}_{\Yvcentermath1\Yboxdim{3pt}\yng(2,2,1,1);i}(s)+ \sum_{i=1}^5 \mathfrak{v}^\dagger_{\Yvcentermath1\Yboxdim{3pt}\yng(2,2,2);i}(s) \mathfrak{R}^{\Yvcentermath1\Yboxdim{3pt}\yng(3,3)}(\bm\tau) \mathfrak{v}_{\Yvcentermath1\Yboxdim{3pt}\yng(2,2,2);i}(s). 
\end{align}
\end{example}

\section{Fermionic rates and partial distinguishability}

For fully indistinguishable fermions, the coincidence rate is given by the first term in Eq.~(\ref{eq:rfn}) and contains only the modulus squared of the determinant of the matrix $A(s)$; for this case the matrix $r(\bm{\bar{\tau}})$ is the unit matrix and the permanent of $R(\bm{\tau})=n!$.  Since determinants can be evaluated efficiently, the evaluation of 
fermionic rates in this case is simple.

As we increase the distinguishability of the fermions, the 
expression for the coincidence rate lengthens to contain an increasing number of group functions, as per Proposition~\ref{coincidencerates}. Moreover, which group functions occur is determined by Gamas's theorem and the delay partition obtained from the times of arrival.

In this section we show that, once we reach a situation where at most $\ceil*{\frac{n}{2}}$ fermions are 
pairwise indistinguishable, the expression for the coincidence rate will contain specific group functions that are evaluated, using the algorithm of \cite{burgisser1}, in a number of operations that grows exponentially with the number of fermions. We further show that, if the times of arrival are uniformly random, one must evaluate these expensive groups functions with probability differing from $1$ by a number decreasing exponentially with the number of fermions. In this section we suppose again that $s$ is such that $s_i=0$ or $1$.

\subsection{The Witness Partition}\label{subsec:witnesscomplexity}

We begin this section by defining the witness partition and proving a lemma.

\begin{definition}
The witness partition $\mu_w$ is 
\begin{align}
    \mu_w:=\left\{ \begin{array}{lc} 
    \left(\displaystyle\frac{n}{2},\frac{n}{2}\right)&\hbox{for~$n$ even},\\
    \\
    \left(\displaystyle\frac{n+1}{2},\frac{n-1}{2}\right)&\hbox{for~$n$ odd}.
    \end{array}
    \right.
\end{align}
\end{definition}

\noindent We will often refer to the conjugate of the witness partition, which is 

\begin{align}
    \mu^*_w=\left\{ \begin{array}{lc} 
    \left(2,2,\dots,2\right)&\hbox{for~$n$ even},\\
    \\
    \left(2,2,\dots,2,1\right)&\hbox{for~$n$ odd}.
    \end{array}
    \right.
\end{align}

Consider the case of fermionic interferometry. We present the following lemma.  

\begin{lemma}
When $n$ is even, group functions of type $\mu^*_w$ of the matrix $A(s)$ need to be computed to evaluate the fermionic coincidence rate equation if and only if at most $\frac{n}{2}$ of the times of arrival are contained in a single time bin, and we assume that no 
one particle is fully distinguishable from any of the others.

When $n$ is odd, group functions of type $\mu^*_w$ of the matrix $A(s)$ need to be computed to evaluate the fermionic coincidence rate equation if and only if at most $\frac{n+1}{2}$ of the times of arrival are contained in a single time bin, and we assume that no one particle is fully distinguishable from any of the others.
\end{lemma}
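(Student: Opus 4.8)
The plan is to reduce the statement to the dominance-order criterion of Proposition~\ref{coincidencerates} and then to a two-line inequality between partitions; the standing assumption that no one particle is fully distinguishable from any of the others is precisely the hypothesis under which Proposition~\ref{coincidencerates} holds, so it is in force throughout. By Eq.~(\ref{eq:rfn}) (equivalently Eq.~(\ref{eq:fermrate})) together with the analysis of \S\ref{subsec:immanants} and Lemma~\ref{lemma:everyelement}, the fermionic coincidence rate is a sum over partitions $\lambda\vdash n$ of terms $\mathfrak{v}^\dagger_{\lambda;i}(s)\,\mathfrak{R}^{\lambda^*}(\bm\tau)\,\mathfrak{v}_{\lambda;i}(s)$, where the entries of $\mathfrak{v}_{\lambda;i}(s)$ are, up to the scalar $\sqrt{\text{s}_\lambda/n!}$, the $\mathcal{D}^\lambda$-functions of $A(s)$, i.e.\ linear combinations of permuted $\lambda$-immanants of $A(s)$. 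Consequently the group functions of type $\mu_w^*$ of $A(s)$ enter the rate equation only through the single summand with $\lambda=\mu_w^*$, which --- because the fermionic block decomposition pairs the $\lambda$-type group functions of $A(s)$ with the conjugate block $\mathfrak{R}^{\lambda^*}$, and conjugation of partitions is an involution --- carries the block $\mathfrak{R}^{\mu_w}(\bm\tau)$. Hence ``the $\mu_w^*$-type group functions of $A(s)$ need to be computed'' is equivalent to ``$\mathfrak{R}^{\mu_w}(\bm\tau)$ is not the zero matrix'', where one also uses that $U$ is Haar-random so that the $\mathcal{D}^{\mu_w^*}$-functions of $A(s)$ do not vanish generically and cannot kill the surviving term for an accidental reason.

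Next I would apply Proposition~\ref{coincidencerates} with $\lambda=\mu_w$: the block $\mathfrak{R}^{\mu_w}(\bm\tau)$ is nonzero if and only if $\mu_w$ dominates the delay partition $\mu_\tau$, that is $\mu_\tau\unlhd\mu_w$. On the other hand, the hypothesis that at most $k$ of the arrival times fall in a single time bin is exactly the statement that the largest part of the delay partition obeys $(\mu_\tau)_1\le k$, and by construction $(\mu_w)_1=n/2$ for $n$ even and $(\mu_w)_1=(n+1)/2$ for $n$ odd. So the lemma reduces to the claim that $\mu_\tau\unlhd\mu_w$ if and only if $(\mu_\tau)_1\le(\mu_w)_1$.

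For this last claim the forward implication is immediate, since the $i=1$ instance of the dominance condition Eq.~(\ref{eq:partialorder}) reads $(\mu_\tau)_1\le(\mu_w)_1$. For the converse, note that $\mu_w$ has at most two nonzero parts and these sum to $n$; hence for every $i\ge 2$ one has $\sum_{j\le i}(\mu_w)_j=n\ge\sum_{j\le i}(\mu_\tau)_j$, because the parts of $\mu_\tau$ also sum to $n$. Thus every dominance inequality with $i\ge 2$ is automatic, and the single constraint $(\mu_\tau)_1\le(\mu_w)_1$ already yields $\mu_\tau\unlhd\mu_w$. Chaining this with the two preceding paragraphs proves the lemma for both parities of $n$.

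I do not expect a genuine obstacle here: the argument is a direct appeal to Proposition~\ref{coincidencerates} followed by a trivial dominance computation. The only points that demand care are the bookkeeping of the conjugate $\mu_w\leftrightarrow\mu_w^*$ that links the group functions of $A(s)$ to the time-overlap block $\mathfrak{R}$, and the brief genericity remark ensuring that the surviving term is genuinely nonzero for Haar-random $U$.
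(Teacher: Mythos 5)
Your proof is correct and follows essentially the same route as the paper's: reduce via Proposition~\ref{coincidencerates} to the condition $\mu_\tau \unlhd \mu_w$ on the block $\mathfrak{R}^{\mu_w}(\bm\tau)$ paired with the $\mu_w^*$-type group functions of $A(s)$, and then observe that, because $\mu_w$ has only two parts summing to $n$, dominance collapses to the single width constraint $(\mu_\tau)_1 \le (\mu_w)_1$. The only differences are cosmetic --- you spell out the $i\ge 2$ dominance inequalities that the paper asserts as clear, and you add a harmless genericity remark about Haar-random $U$ that the paper omits.
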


\begin{proof}
From Eq.~(\ref{eq:fermrate}), we observe that if $\mu_{\tau} \unlhd \lambda$, then group functions of type $\lambda^*$ of the matrix $A(s)$ appear in the fermionic coincidence rate equation; otherwise, we have that when $\lambda$ \emph{doesn't} dominate $\mu_{\tau}$ (meaning that either $\lambda \lhd \mu_{\tau}$ or the two partitions cannot be compared), then group functions of type $\lambda^*$ of the matrix $A(s)$ \emph{do not} appear in the fermionic coincidence rate equation.

First, consider the case where $n$ is even and assume that group functions of type $\mu^*_w$ appear in the fermionic rate equation. It follows that $\mu_w$ dominates the delay partition: $\mu_{\tau} \unlhd \mu_w$. The partition $\mu_w$ is of the form $\left(\textstyle\frac{n}{2},\frac{n}{2}\right)$, thus $\mu_w$ dominates all partitions of width at most $\frac{n}{2}$. By construction, the partitions of width at most $\frac{n}{2}$ correspond to the instances where at most $\frac{n}{2}$ fermions are in a single time bin. Each implication in our argument is bi-conditional, so the result for even $n$ follows.

In the case where $n$ is odd, $\mu_w$ is now of the form $\left(\textstyle\frac{n+1}{2},\frac{n-1}{2}\right)$. It's clear that $\mu_w$ dominates all of the partitions of width at most $\frac{n+1}{2}$. The rest of the proof is identical to the even case.
\end{proof}

Note that regardless of whether the number of fermions is odd or even, if at most $\ceil*{\frac{n}{2}}$ of the times of arrival are in the same time bin, then group functions of type $\mu^*_w$ of the matrix $A(s)$ need to be evaluated to determine the fermionic rate. 

We proceed by showing that  \emph{some} group functions of type $\mu^*_w$ are computationally expensive to evaluate. B\"{u}rgisser~\cite{burgisser1} shows that the number of arithmetic operations needed to evaluate the immanant of Eq.~(\ref{eq:immassumofD}) in the $\lambda \vdash n$ representation of $GL_n(\mathbb{C})$ is $\mathcal{O}\left([\operatorname{mult}(\lambda)+\operatorname{log(n)}]n^2 s_\lambda\text{d}_\lambda \right).$ The sum in Eq.~(\ref{eq:immassumofD}) contains exactly $s_\lambda$ terms, so at least one of the $\mathcal{D}^{\lambda}_{i,i}(A(s))$ must evaluate in at least
\begin{align}
\label{eq: Dfunctioncomplexity}
\mathcal{O}\left([\operatorname{mult}(\lambda)+\operatorname{log(n)}]n^2 \text{d}_\lambda \right).    
\end{align}
operations. Here, $\text{s}_\lambda$ is both the number of standard Young
tableaux of shape $\lambda$ and the dimension of the weight $(1,1,1,\ldots,1)$
subspace of the $\lambda$-irrep of $GL_n(\mathbb{C})$ and $\text{d}_\lambda$ is both
the number of semi-standard Young tableaux of shape $\lambda$ and the
dimension of the $\lambda$-irrep of $GL_n(\mathbb{C})$,
$\operatorname{mult}(\lambda)$ is the multiplicity of the highest weight
$\lambda$~\cite{burgisser1}. Furthermore, \emph{any} 
arithmetic algorithm that evaluates group functions of type $\lambda$ ``requires at least $\hbox{d}_{\lambda}$ nonscalar operations'' according Theorem $5.1$ of~\cite{burgisser1}.

We now compute $d_{\mu^*_w}$ to get a lower bound for Eq.~(\ref{eq: Dfunctioncomplexity}) .
Suppose~$n$ is even so that $\mu^*_{w}=(2,2,\ldots,2)$. Exponents are used to denote repeated entries, so the partition can be more compactly written as $\mu^*_{w}=(2^{n/2})$.  
Then one easily shows that $s_{(2^{n/2})}$ is given by the Catalan number
\begin{align}
    C_{n/2}=\frac{1}{\frac{n}{2}+1}{n \choose \frac{n}{2}}\sim 
    \frac{2^{n+3/2}}{\sqrt{\pi n^3}}
\end{align}
for large~$n$ so the number of terms to evaluate is growing exponentially with the number of fermions. Moreover, the dimension of the irrep
$d_{(2^{n/2})}$ of $GL_n(\mathbb{C})$
is given by
\begin{align}
    d_{(2^{n/2})}=
    {n \choose \frac{n}{2}} {n+1\choose \frac{n}{2}}\frac{1}{\frac{n}{2}+1}\
    =C_{n/2} {n+1\choose \frac{n}{2}}\, .
\end{align}
The binomial coefficient ${n+1\choose \frac{n}{2}}\sim 2^{n+3/2}e^{-1/(2n+2)}/\sqrt{(n+1)\pi}$ for
large~$n$ so that, altogether:
\begin{align}
    d_{(2^{n/2})}\approx \frac{2^{2n+3}}{n^2\pi}
\end{align}
also scales exponentially with the number of fermions.  Thus, at least one
of the group functions necessary to compute the fermionic 
coincidence rate equation requires a number
of arithmetic operations that scales exponentially with the number of fermions.
A similar expression and argument can be made when~$n$ is odd. Thus, combining this with Theorem 5.1 of \cite{burgisser1} as discussed above, we have thus shown the following.
\begin{proposition}
\label{proposition:hardness}
If at most $\ceil*{\frac{n}{2}}$ of the times of arrival are contained in any one time bin, {and no particle is fully distinguishable from any other,} then our procedure for the exact computation of the fermionic coincidence rate equation requires a number of arithmetic operations that scales exponentially in the
number of fermions.
\end{proposition}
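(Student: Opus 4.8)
The plan is to assemble the facts already established above into a short chain of implications, treating the even and odd cases in parallel. First I would invoke the preceding witness-partition lemma together with the remark following it: under the hypothesis that at most $\ceil*{\frac{n}{2}}$ of the arrival times fall in a single time bin and that no particle is fully distinguishable from any other, the delay partition $\mu_{\tau}$ is dominated by the witness partition $\mu_w$, so by Proposition~\ref{coincidencerates} the block $\mathfrak{R}^{\mu_w}(\bm\tau)$ is nonzero and the vectors $\mathfrak{v}_{\mu_w^*;i}(s)$ contribute to the fermionic rate equation~(\ref{eq:fermrate}). Consequently, carrying out our procedure forces the evaluation of group functions of type $\mu_w^*$ of the scattering matrix $A(s)$, that is, of at least one entry of $\mathcal{D}^{\mu_w^*}(A(s))$.

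Second, I would bound the cost of that evaluation from below. By Theorem~5.1 of~\cite{burgisser1}, any arithmetic algorithm that computes a group function of type $\lambda$ uses at least $\text{d}_\lambda$ nonscalar operations, so---together with the per-immanant estimate~(\ref{eq: Dfunctioncomplexity})---it suffices to show that $\text{d}_{\mu_w^*}$ is exponentially large in $n$. For $n$ even this is the computation already performed: $\mu_w^*=(2^{n/2})$ and $\text{d}_{(2^{n/2})}=C_{n/2}\binom{n+1}{n/2}\approx 2^{2n+3}/(\pi n^2)$. For $n$ odd one has $\mu_w^*=(2^{(n-1)/2},1)$, and the hook--content dimension formula for $GL_n(\mathbb{C})$ again yields a $\text{d}_{\mu_w^*}$ of order $\Theta(4^{n})$ up to polynomial-in-$n$ corrections. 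Since a lower bound on nonscalar operations is a lower bound on the total count of arithmetic operations, this establishes the proposition in both parities.

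The one step requiring genuine care---and what I expect to be the main obstacle to a fully rigorous statement---is that the mere presence of exponentially many summands in the immanant expansion~(\ref{eq:immassumofD}) of a $\mu_w^*$-immanant (there are $\text{s}_{\mu_w^*}=C_{n/2}$ of them) is not by itself a complexity lower bound, since one would still have to exclude a shortcut that never forms the individual $\mathcal{D}^{\mu_w^*}_{i,i}(A(s))$. This gap is closed precisely by Theorem~5.1 of~\cite{burgisser1}, which lower-bounds the cost of \emph{every} type-$\mu_w^*$ group function by $\text{d}_{\mu_w^*}$; equivalently, distributing the total $\mathcal{O}\big([\operatorname{mult}(\mu_w^*)+\log n]\,n^2\,\text{s}_{\mu_w^*}\,\text{d}_{\mu_w^*}\big)$ cost over the $\text{s}_{\mu_w^*}$ immanant terms forces at least one of them to cost $\Omega(\text{d}_{\mu_w^*})$. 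Everything else is then the dimension asymptotics recalled above, together with the observation that the odd-$n$ case proceeds identically to the even-$n$ case.
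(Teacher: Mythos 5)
Your proposal is correct and follows essentially the same route as the paper: first the witness-partition lemma (equivalently, the dominance condition $\mu_\tau \unlhd \mu_w$ from Proposition~\ref{coincidencerates}) forces the evaluation of group functions of type $\mu_w^*$ of $A(s)$, and then Theorem~5.1 of~\cite{burgisser1} together with the asymptotics $\text{d}_{(2^{n/2})}\approx 2^{2n+3}/(\pi n^2)$ gives the exponential lower bound. Your only additions are cosmetic improvements on the paper's treatment --- spelling out the odd-$n$ dimension count that the paper dismisses as ``a similar argument,'' and making explicit that the $\text{s}_{\mu_w^*}$-term count in Kostant's expansion is not itself a lower bound and that B\"urgisser's nonscalar-operation bound is what closes that gap.
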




\subsection{Probability of the witness partition appearing for uniformly-random arrival times}
\label{subsec:p}

In this subsection, we discuss a simple model where
the discretized arrival times are uniformly random over some interval $\mathcal{T}$. Fix some string $s$ where $s_i=0$ or $1$.  Imagine~$n$ possible uniformly random discrete arrival times (one per particle) distributed over a total of $b$ time bins. The arrival times can be repeated. Given a delay partition $\mu_\tau=(\mu_1,\mu_2,\ldots,\mu_k)$ , 
we need to find how many ways we can distribute the~$n$ particles in $b$ bins so that $\mu_1$ are in 
any one bin, $\mu_2$ are in any other bin which is not the first, \emph{etc}. 

For instance, if $n=6$ and $b=8$, then the arrival times could be
$\bm\tau=[1,5,8,5,8,3]$. For clarity we tabulate the results:
\begin{table}[H]
    \centering
    {\renewcommand{\arraystretch}{1.5}
    \begin{tabular}{|c|c|c|c|c|c|c|} \hline 
        particle \# & \,1\,&\,2\,&\,3\,&\,4\, &\,5 \,& \,6\,\\
        \hline 
        bin \# & 1& 5&8 & 5&8&3 \\ 
        \hline 
    \end{tabular}}
    \caption{Tabulated results of a hypothetical experiment.}
    \label{tab:results}
\end{table}

\noindent This arrival time vector $\bm{\tau}$ is associated to the delay partition $\mu_\tau=(2,2,1,1)$. Let $b_i$ denote the number of time bins that
contain exactly $i$ particles. In this example, the fifth and eighth time
bins contain two particles each, so $b_2=2$; the first and third time bins
contain one particle, so $b_1=2$; the second, fourth, sixth, and seventh
time bins contain no particles, so $b_0=4$. In general, the $b_i$ terms
 satisfy the following constraints 
\begin{align}
\sum_{i=0}^n b_i =& b \\
\sum_{i=0}^n i b_i =& n. \nonumber
\end{align}

With $b$ time bins and~$n$ particles there are $b^n$ possible ways for the particles to arrive at the detectors; assuming that each of these possibilities are equally likely, we present an expression for the probability that a random input of particles has a particular delay partition.

\begin{proposition}
\label{proposition:probability}
In an experiment with~$n$ particles and $b$ time bins, the probability that a uniformly-random set of arrival times has delay partition $\mu_{\tau}=(\mu_1, \mu_2, \dots, \mu_k)$ is
\begin{align}
\mathcal{P}(\mu_{\tau};b)=\frac{1}{b^n} \binom{n}{\mu_1, \mu_2, \dots, \mu_k} \binom{b}{b_0, b_1, \dots,  b_n},  \label{eq:delayprob}  
\end{align}
where the $b_i$'s are computed from the delay partition $\mu_\tau$ as described
above.  If $b$ is less than the number of parts $k$ of a delay
partition, then the probability of that delay partition occurring is $0$.
\end{proposition}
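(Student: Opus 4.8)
The plan is a direct double-counting argument on the finite, equiprobable sample space consisting of all $b^n$ arrival-time vectors, where such a vector is just a function assigning to each of the $n$ particles one of the $b$ time bins. I would count exactly how many of these $b^n$ vectors realize the prescribed delay partition $\mu_{\tau}=(\mu_1,\dots,\mu_k)$ and then divide by $b^n$, since all outcomes are equally likely.

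First I would organize the count in two stages. \emph{Stage one} fixes which bins carry which occupancy. Recall $b_i$ is the number of bins holding exactly $i$ particles, so that $\sum_i b_i=b$ and $\sum_i i\,b_i=n$, and the nonzero parts of $\mu_{\tau}$ are precisely the value $i$ repeated $b_i$ times for each $i\ge 1$. An occupancy profile compatible with $\mu_{\tau}$ is an ordered $b$-tuple $(c_1,\dots,c_b)$ of bin counts whose sorted nonzero part equals $\mu_{\tau}$; the number of such tuples is the number of distinct arrangements of the multiset made of $b_0$ zeros, $b_1$ ones, and so on, i.e.\ the multinomial coefficient $\binom{b}{b_0,b_1,\dots,b_n}$. \emph{Stage two} fixes the particles: given a compatible occupancy profile $(c_1,\dots,c_b)$, the number of ways to assign the $n$ labelled particles so that bin $j$ receives exactly $c_j$ of them is $\binom{n}{c_1,\dots,c_b}=n!/\prod_j c_j!$, and since $\prod_j c_j!=\prod_{i\ge 0}(i!)^{b_i}=\prod_{j=1}^{k}\mu_j!$ (using $0!=1!=1$), this equals $\binom{n}{\mu_1,\dots,\mu_k}$, which is independent of the chosen compatible profile.

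I would then note that the map sending a realizing vector to the pair (its induced occupancy profile, the assignment itself) is a bijection onto the set of such pairs counted above, so there is no over- or under-counting; hence the number of arrival-time vectors with delay partition $\mu_{\tau}$ is $\binom{n}{\mu_1,\dots,\mu_k}\binom{b}{b_0,\dots,b_n}$, and dividing by $b^n$ yields Eq.~(\ref{eq:delayprob}). For the final assertion, a delay partition with $k$ parts requires $k$ pairwise distinct nonempty bins, hence at least $k$ bins; equivalently $b_0=b-\sum_{i\ge 1}b_i=b-k$ must be nonnegative, so if $b<k$ no compatible occupancy profile exists, the count is $0$, and the probability is $0$ (the formula reflects this, as $\binom{b}{b_0,\dots,b_n}$ vanishes when $b_0<0$).

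I do not anticipate a real obstacle; the only delicate point is the bookkeeping identity $\prod_j c_j!=\prod_{j=1}^{k}\mu_j!$, which makes the stage-two count the same for every compatible profile and thereby legitimizes multiplying the two stage counts, together with making explicit the bijection that rules out double counting.
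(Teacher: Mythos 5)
Your proposal is correct, and it follows essentially the route the paper intends: the paper only sets up the counting framework (the $b^n$ equally likely arrival vectors and the definition of the $b_i$) and states the proposition without a formal proof, verifying it only on an example. Your two-stage count --- the multinomial $\binom{b}{b_0,\dots,b_n}$ for the occupancy profiles and $\binom{n}{\mu_1,\dots,\mu_k}$ for the particle assignment, together with the observation that $\prod_j c_j!=\prod_{j}\mu_j!$ makes the second factor independent of the chosen profile --- supplies exactly the missing justification, and your treatment of the degenerate case $b<k$ is consistent with the paper's final remark.
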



\begin{example}
Suppose there are $8$ time bins, $5$ particles, and we are computing the
probability of the delay partition $\mu_{\tau}=(2,2,1)$ occurring. Of the
$8$ time bins, there is $1$ bin that contains $1$ particle, $2$ bins that
contain $2$ particles, and $5$ bins that contain $0$ particles; thus,
$b_0=5, b_1=1, b_2=2$. Applying Proposition \ref{proposition:probability}, we get 

\begin{align}
\mathcal{P}((2,2,1);8) =&\frac{1}{8^5} \frac{5!}{(2!)(2!)(1!)}  \frac{8!}{(5!)(1!)(2!)} \\
=& \frac{315}{2048}. \nonumber
\end{align} 
\end{example}

Recall that Proposition \ref{proposition:hardness} states that when at most $\ceil*{\frac{n}{2}}$ of the times of arrival are contained in a single time bin, this means that our method for computing the fermionic coincidence rate equation requires exponentially many arithmetic operations. Let $\mathcal{P}$ denote the probability of obtaining a delay partition $\mu_{\tau}=\left(\mu_1,\mu_2,\dots \right)$ with $\mu_1 \leq \ceil*{\frac{n}{2}}$. We show that $1 - \mathcal{P}$, which is the probability of obtaining a delay partition with $\mu_1 > \ceil*{\frac{n}{2}}$, approaches $0$ exponentially fast with $n$.

We begin with the case where $n$ is even. Let $P(B)$ be the probability of getting a delay partition with first part equal to $\lambda_1$ and the remaining parts are arbitrary, and let $P(A)$ be the probability of obtaining delay partition $\lambda'=(\lambda_2, \lambda_3, \dots, \lambda_k) \vdash n-\lambda_1$ for the remaining $n-\lambda_1$ particles being sorted into the remaining $(b-1)$ bins. We get that the conditional probability $P(A|B)$ is given by

\begin{align}
P(A|B)=\frac{1}{(b-1)^{n-\lambda_1}} \binom{n-\lambda_1}{\lambda_2, \lambda_3, \dots, \lambda_k} \binom{b-1}{b_0, b_1, \dots, b_{\lambda_1-1},{1}, b_{\lambda_1+1}, \dots  b_n},    
\end{align}
{where} $b_{\lambda_1}=1$. The probability $P(A \cap B)$ is simply the probability of obtaining an arbitrary delay partition $\lambda$, which is given in Eq.~(\ref{eq:delayprob}). The probability $P(B)$ is thus given by the following quotient 

\begin{align}
\frac{P(A \cap B)}{P(A|B)} &= 
\frac{\frac{1}{b^n} \binom{n}{\lambda_1, \lambda_2, \dots, \lambda_k} \binom{b}{b_0, b_1, \dots,  b_n}}{\frac{1}{(b-1)^{n-\lambda_1}} \binom{n-\lambda_1}{\lambda_2, \lambda_3, \dots, \lambda_k} \binom{b-1}{b_0, b_1, \dots, b_{\lambda_1-1}, {1},b_{\lambda_1+1}, \dots  b_n}}, \nonumber \\
&= \frac{(b-1)^{n-\lambda_1}}{b^n} \frac{n!}{\lambda_1! (n-\lambda_1)!} \frac{b!}{(b-1)!}, \nonumber \\
&= \binom{n}{\lambda_1} \frac{(b-1)^{n-\lambda_1}}{b^{n-1}}.
\end{align}

\noindent To obtain $1 - \mathcal{P}$ we need to sum over all partitions $\lambda \vdash n$ with $\lambda_1 \geq \frac{n}{2}+1$,

\begin{align}
1 - \mathcal{P} = \frac{1}{b^{n-1}} \sum_{\lambda_1 = \frac{n}{2}+1}^n \binom{n}{\lambda_1} (b-1)^{n-\lambda_1}.
\end{align}

\noindent We factor out $(b-1)^n$ from each term in the sum to get 

\begin{align}
1 - \mathcal{P} = \frac{(b-1)^n}{b^{n-1}} \sum_{\lambda_1 = \frac{n}{2}+1}^n \binom{n}{\lambda_1} \left(\frac{1}{b-1}\right)^{\lambda_1}.
\end{align}

\noindent We assume that $n$ and $b$ are large to be able to truncate the sum. We have 

\begin{align}
1 - \mathcal{P} &= \frac{(b-1)^{n-1}}{b^{n-1}}
    \left({n\choose n/2+1}\left(\frac{1}{b-1}\right)^{n/2}
    + {n\choose n/2+2}\left(\frac{1}{b-1}\right)^{1+n/2}
   +\ldots
    \right)\, , \nonumber \\
       &=\frac{(b-1)^{n-1}}{b^{n-1}}{n\choose n/2+1}
   \left(\frac{1}{b-1}\right)^{n/2}
    \left(1+ \frac{n-2}{n+4}
     \left(\frac{1}{b-1}\right)
   +\ldots \right)\, \nonumber ,\\
   & \approx \frac{(b-1)^{n-1}}{b^{n-1}} \frac{2^{n+\frac{1}{2}}}{\sqrt{n \pi}} \left(\frac{1}{b-1}\right)^{n/2}
    \left(1+ \frac{n-2}{n+4}
     \left(\frac{1}{b-1}\right)
   +\ldots \right).
\end{align}
By assuming {again} that $b-1 \approx b$ and by truncating the sum after the first term, we find that 
\begin{align}
1 - \mathcal{P} \approx \sqrt{\frac{2}{n \pi}} \left( \frac{4}{b}\right)^{\frac{n}{2}}.
\end{align}

\noindent We note that when $n$ is odd there is a near-identical expansion and we get the same estimate for the probability.

\begin{corollary}
In an experiment with~$n$ particles that arrive uniformly randomly in $b$ discrete time bins, the probability $\mathcal{P}$ of at most $\ceil*{\frac{n}{2}}$ particles being in the same time bin is
\begin{align}
\mathcal{P} \approx 1 - \sqrt{\frac{2}{n \pi}} \left( \frac{4}{b}\right)^{\frac{n}{2}}.     
\end{align}
\end{corollary}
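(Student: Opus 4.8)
The plan is to compute $1-\mathcal P$, the probability that the delay partition $\mu_\tau$ has first (largest) part $\mu_1>\ceil*{n/2}$, and show that it decays exponentially in $n$; the Corollary then follows, and via Proposition~\ref{proposition:hardness} it says the expensive group functions of type $\mu_w^*$ must be evaluated with probability tending to $1$ exponentially fast. First I would invoke Proposition~\ref{proposition:probability}, which expresses the probability $\mathcal P(\mu_\tau;b)$ of any fixed delay partition as $b^{-n}$ times a product of two multinomial coefficients. Summing this directly over all partitions with $\mu_1>\ceil*{n/2}$ is awkward because the occupancy counts $b_i$ vary from term to term, so instead I would reorganize the sum by conditioning on the value of the largest part alone.

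The key manoeuvre is the conditioning trick. Fix $\lambda_1$, let $B$ be the event ``$\mu_1=\lambda_1$'' (remaining occupancies arbitrary), and let $A$ be the event that the remaining $n-\lambda_1$ particles dropped into the remaining $b-1$ bins realize one specific partition $\lambda'=(\lambda_2,\dots,\lambda_k)\vdash n-\lambda_1$. Then $P(A\cap B)=\mathcal P(\lambda;b)$ is given by Proposition~\ref{proposition:probability} for the full partition $\lambda=(\lambda_1,\lambda_2,\dots,\lambda_k)$, while $P(A\mid B)$ is the same formula applied to $n-\lambda_1$ particles in $b-1$ bins (with the slot $b_{\lambda_1}=1$ already fixed by $B$). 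Forming $P(B)=P(A\cap B)/P(A\mid B)$, the $b_i$-multinomials telescope and—this is the point I expect to require the most care—the auxiliary choice of $\lambda'$ drops out entirely, leaving the clean expression $P(B)=\binom{n}{\lambda_1}(b-1)^{n-\lambda_1}/b^{\,n-1}$. One must check that the only surviving combinatorial factor is the $\binom b1$ selecting which bin is the heavily occupied one, so that no $\lambda'$-dependence persists; this is the conceptual heart of the argument.

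Next I would sum $P(B)$ over $\lambda_1$ from $n/2+1$ to $n$ (take $n$ even for definiteness), giving $1-\mathcal P=b^{-(n-1)}\sum_{\lambda_1=n/2+1}^{n}\binom{n}{\lambda_1}(b-1)^{n-\lambda_1}$. Factoring out $(b-1)^n$ and rewriting the summand in powers of $1/(b-1)$ exhibits the $\lambda_1=n/2+1$ term as dominant for large $b$: the ratio of consecutive terms is $O(1/b)$, so truncating after the leading term costs only a vanishing relative error (this asymptotic truncation is routine but does need the joint assumption that $n$ and $b$ are large). Finally I would substitute the Stirling estimate $\binom{n}{n/2+1}\sim 2^{n+1/2}/\sqrt{n\pi}$, approximate $b-1\approx b$ in the prefactor, and read off $1-\mathcal P\approx\sqrt{2/(n\pi)}\,(4/b)^{n/2}$, hence $\mathcal P\approx 1-\sqrt{2/(n\pi)}\,(4/b)^{n/2}$. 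The odd-$n$ case is identical, summing $\lambda_1$ from $(n+1)/2+1$ and using the analogous Stirling bound, and yields the same leading asymptotic, so no separate argument is needed.
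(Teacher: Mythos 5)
Your proposal follows the paper's own derivation essentially step for step: the same conditioning trick with $P(B)=P(A\cap B)/P(A\mid B)$ yielding $\binom{n}{\lambda_1}(b-1)^{n-\lambda_1}/b^{n-1}$, the same sum over $\lambda_1$ from $n/2+1$ to $n$, the same truncation to the leading term for large $b$, and the same Stirling estimate of the central binomial coefficient, with the odd-$n$ case deferred to an identical expansion. The argument is correct and matches the paper's proof.
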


\noindent This is also the probability of needing the compute the (expensive) $\mu^*_w$-group functions of $A(s)$. We observe that when $b \geq 5$, then $\mathcal{P}$ approaches $1$ exponentially fast as~$n$ increases.

\section{Applications to Sampling Problems}

\subsection{Generalized Boson and Fermion Sampling}

Quantum computing focuses on technology and algorithms for solving certain computational problems more efficiently than what can be accomplished using non-quantum (i.e., ``classical'') computing,
essentially based on the binary representation of information and Boolean logic~\cite{Knu77}.
Both universal and specialized quantum computing approaches are followed. Gate-based quantum computing~\cite{NC11,San17}
is an example of universal quantum computing.
Both quantum annealing~\cite{HRIT15} and boson sampling~\cite{AA11,BGC+19} are examples of specialized, purposed quantum computing that is not universal.
Boson sampling is about simultaneously firing~$n$ single photons
so they arrive simultaneously at detectors behind an $m$-channel passive optical interferometer. 

\textsc{BosonSampling} is a hard classical problem and easy quantum problem in both exact and approximate formulations subject to assumptions. Recall from \S\ref{subsec:interferometry} that $\Phi_{m,n}$ is the set of $m$-tuples $(s_1,\ldots,s_m)$ so that $s_1+\ldots s_m=n$. When the bosons are indistinguishable, the probability distribution 
$\mathcal{B}[{\mathcal{A}}]$ over $\Phi_{m,n}$ is given by~\cite{AA11} 
\begin{align}
\label{def:bosonsampling}
\underset{\mathcal{B}[{\mathcal{A}}]}{\hbox{Pr}[s]} = \frac{|\operatorname{per}(A(s))|^2}{s_1!s_2!\dots s_m!}.
\end{align}

\begin{definition}
Given $\mathcal{A}$ as input, the problem of \textsc{BosonSampling} is to sample, either exactly or approximately, from the distribution $\mathcal{B}[{\mathcal{A}}]$. 
\end{definition}

The exact \textsc{BosonSampling} problem is not efficiently solvable by a classical computer, unless  the polynomial hierarchy collapses to the third level, which is an unlikely scenario based on existing results in computational complexity~\cite{AA11}.
 Despite the simplicity behind the complexity of \textsc{BosonSampling} in 
theory, experimental \textsc{BosonSampling} has limitations on size and on validity
of assumptions needed to trust hardness results. For example, the
assumptions on the perfectly indistinguishable photons are
impossible to
achieve in experiments: not only are photons distinguishable in their
temporal and frequency profile due to imperfect single-photon source,
experimental optical interferometers also introduce additional added
imperfections due to unequal path length, or unwanted phase errors. Notice
that such experimental imperfections, are different from the approximate
\textsc{BosonSampling}, which permits the sampling to deviate from the
perfect sampling of an ideal bosonic interferometer with indistinguishable
photon sources.  The computational hardness of imperfect
\textsc{BosonSampling} is still an ongoing subject of
discussion~\cite{PhysRevA.85.022332}.

When fermions are indistinguishable, the probability distribution $\mathcal{F}[{\mathcal{A}}]$ over $\Phi_{m,n}$ is given, up to a constant, by~\cite{AA13}
\begin{align}
\underset{\mathcal{F}[{\mathcal{A}}]}{\hbox{Pr}[s]} = \left|\operatorname{det}(A(s))\right|^2.
\end{align}
\noindent As a result of the Pauli exclusion 
principle,
two indistinguishable fermions cannot exit from the same channel
so the $s_1 ! s_2 ! \dots s_m !$ factor 
in the denominator of Eq.~(\ref{def:bosonsampling}) is replaced here by~$1$.
From the notion of fermion interferometry,
we define the associated problem of fermion sampling.
\begin{definition}
Given $\mathcal{A}$ as input, the problem of \textsc{FermionSampling} is to sample, either exactly or approximately, from the distribution $\mathcal{F}[{\mathcal{A}}]$. 
\end{definition}

\noindent The key application of \textsc{FermionSampling} is solving the same types of computational problems as for \textsc{BosonSampling} by opening the door to integrated semiconductor circuitry as an alternative to scaling challenges inherent in photonic approaches \cite{rudolph2017optimistic, ODMZ20}.

For simultaneous arrival times at the detectors,
fermion coincidence rates are easy to calculate~\cite{TD02,DT05}.
As a result, fermion interferometry has not been treated as a viable contender for quantum advantage~\cite{Pre12} since \textsc{FermionSampling} can be solved in classical polynomial time~\cite{AA13}
(unless a quantum resource is added~\cite{ODMZ20}).
Mathematically,
the ease of \textsc{FermionSampling} arises because the hard-to-solve
expressions for matrix permanents arising in boson coincidence calculations are replaced by matrix determinants for fermion coincidences~\cite{dGS17}.  The trivial nature of quantum vs classical algorithms for \textsc{FermionSampling} should 
not discourage exploiting fermion sampling for quantum computing provided that we 
generalize to nonsimultaneous arrival times.

In fact, in view of Proposition~\ref{proposition:hardness} the lack of simultaneity removes, in part or in totality, the argument
that fermion sampling is uninteresting because it is efficiently
simulatable. Indeed Proposition~\ref{proposition:hardness} suggests that, if we are to use the state-of-the-art algorithm to evaluate group functions and immanants, the evaluation of coincidence rates is exponentially expensive under reasonable conditions as to the maximum number of indistinguishable fermions.  However, while the series of Eq.~(\ref{eq:fermrate}) for the \emph{exact} rate will contain expensive group functions, we cannot guarantee that these function will have a significant contribution to the final rate.

Heyfron \cite{heyfron1988immant,heyfron1991some} and separately Pate \cite{pate1991partitions} have obtained some results on immanant inequalities which follow dominance ordering for positive semi-definite matrices, like the delay matrix $r(\boldsymbol{\tau})$ of Eq.~(\ref{eq:delaymatrix}), but these results don't encompass the witness partition. Stembridge \cite{stembridge1991immanants} has also shown that immanants of totally positive matrices, like our delay matrix, are necessarily positive (see also the appendix of \cite{huber2021matrix} for some immanant inequalities).  Nevertheless, to understand the contribution to the rates from group functions of $A(s)$, one would also need 
"anticoncentration"-type results for these functions or associated immanants for these expensive functions or immanants.

Nevertheless, as a result of Proposition~\ref{proposition:hardness}, it is natural to generalize both \textsc{BosonSampling} and \textsc{FermionSampling} 
by allowing nonsimultaneous arrival times specified by the
vector $\bar{\bm\tau}$. 
The set $G_{m,n}\subset \Phi_{m,n}$ contains only the strings $s$ such
that $s_i=0$ or $1$ for all $i$. The distribution
$\mathcal{B}\left[{\mathcal{A};\bar{\bm\tau}}\right]$ over $G_{m,n}$ is
defined by
\begin{align}
    \underset{\mathcal{B}\left[{\mathcal{A};\bar{\bm\tau}}\right]}{\hbox{Pr}[s]} = \operatorname{rate}(\bar{\bm \tau},s;\text{b})
\end{align}
where $\operatorname{rate}(\bar{\bm \tau},s;\text{b})$
is given in Eq.~(\ref{eq:rateuRu}).

\begin{problem}[Exact \textsc{GenBosonSampling}]
Given the $m \times n$ matrix $\mathcal{A}$ and a length~$n$ arrival-time vector
$\bar{\bm\tau}\in\mathbb{R}^n$ as inputs,
the problem of \textsc{GenBosonSampling} is to sample exactly from the distribution $\mathcal{B}[{\mathcal{A};\bar{\bm\tau}]}$.
\end{problem}

In the case of fermions, the distribution $\mathcal{F}\left[{\mathcal{A};\bar{\bm\tau}}\right]$ over $G_{m,n}$ is defined by

 \begin{align}
    \underset{\mathcal{F}\left[{\mathcal{A};\bar{\bm\tau}}\right]}{\hbox{Pr}[s]} = \operatorname{rate}(\bar{\bm \tau},s;\text{f})
\end{align}
where 
$\operatorname{rate}(\bar{\bm \tau},s;\text{f})$
is given in Eq.~(\ref{eq:fermionrateuRu}).

\begin{problem}[Exact \textsc{GenFermionSampling}]\label{def:Gfermionsampling}
Given the $m \times n$ matrix $\mathcal{A}$ and a length~$n$ arrival-time vector
$\bar{\bm\tau}\in\mathbb{R}^n$ as inputs, the problem of \textsc{GenFermionSampling} is to sample exactly from the distribution $\mathcal{F}[{\mathcal{A};\bar{\bm\tau}]}$.
\end{problem}

Comparing with the original definition of \textsc{BosonSampling} and
\textsc{FermionSampling}, the exact generalized definitions above admit, in
addition to the inputs~$n$ for the number of fermions, 
the arrival-time vector
$\bar{\bm\tau}\in\mathbb{R}^n$
but with the restriction of $s$ to strings with $s_i=1$ or $0$, and 
the proviso that real- and complex-number entries are admitted as floating-point numbers up to machine precision,
which is important to note because we are focused on \emph{exact} computation.
This definition differs from the one proposed in
\cite{ODMZ20} where simultaneity remains but the input state is no longer a single product 
state.

\subsection{Calibration and Fermion Sampling}
\label{subsec:calibration}

Every interferometry experiment has the problem of non-simultaneity due to ``lengths \dots not [being] well-calibrated''~\cite{AA11}.
This problem of calibration is generally regarded as `just' a technical step.
Geometrically, we will refer to the \emph{landscape} 
for a given set of detector positions and a given
interferometer as an
$(n-1)$-dimensional surface representing the various coincidence rates in 
those detectors 
as a function of 
the necessary $n-1$ relative arrival times of the~$n$ particles.
Calibration is achieved by knowing the functional form, hence appearance,
of this landscape,
choosing some points on the domain by experimental adjustments
that control arrival times,
and determining the rate at the chosen points.
{When sufficient resolution of the times of arrival are possible,}
doing so over approximately~$n!/2$ points (the exact number is given in Eq.~(\ref{eq:exactnumber}))
in the domain, the rate for simultaneity is then inferred 
(in the perfect case) because enough information is available to shift and rotate the landscape so that the `zero' element of the domain is known. This way of calibrating requires the fewest number of points to estimate the simultaneous-arrival coincidence rate, which is assumed in most treatments of sampling but unfortunately not completely justified in practice.

As \textsc{BosonSampling} is computationally hard,
simulating calibration only adds a smaller computational overhead so it does not make the computational problem any easier.
\textsc{FermionSampling} is another beast altogether.
After calibration is completed, \textsc{FermionSampling} is computationally efficient so calibration is now a key theoretical issue,
not `just' technical.
If a fermion source ejected particles at known, definite times and all fermion paths were calibrated,
then non-simultaneity would be obviated,
but fermion interferometers do not work that way.

We now explain how the input delay channels for the interferometer are 
calibrated~\cite{pi-corrected}.
Intuitively,
consider a source that generates~$n$ particles such that only one particle
is injected into each of~$n$ channels.
Suppose that the source injects all~$n$ particles simultaneously,
but the channel length is unknown so the particles eventually arrive at the~$n$ detectors at uniformly random times despite the promise that all particles are injected simultaneously into each channel.

Calibration is the exercise of adjusting each channel length so that they match, and successful calibration ensures that the simultaneously generated particles are guaranteed to arrive at the detectors simultaneously. If a coincidence-rate model, which depends on arrival times, is trusted, then calibration is achieved by using coincidence-rate data for different choices of input-channel delay increments and fitting coincidence rate data to the model. Then, by interpolation, the appropriate delay-increments can be inferred from the fitted model.
In practice, coincidence rates for simultaneous arrival are extrema so smart search for channel delays that yield a minimum or maximum coincidence rate, as a function of channel delays, can calibrate channel delays without having to resort to model fitting.

For our exact-rate theory, discretized control of delay times between source and interferometer is accommodated here as time bins for arrival.
For our analysis, time-bin width is fixed in all channels.
Thus, we consider discrete arrival times rather than treating arrival time
as a continuum. The number of time bins per channel is~$b$,
which is the same for each channel,
and~$b$ is the ratio of the total run time for the sampling procedure 
to the photodetector arrival time.

We can thus think of the arrival time as a digit in base~$b$.
As~$n$ particles are in play,
the arrival time is a length~$n$ string of digits in base~$b$.
For example, if~$b$ is 16,
a digit could be expressed in hexadecimal,
so an example of an arrival time for particles in 8 channels could be expressed as the length~8 hexadecimal number~12B9B0A1,
which is the hexadecimal representation of~314159265.
Total ignorance about channel delays corresponds to the uniform prior (distribution) over all hexadecimal numbers from~0 to~FFFFFFFF,
which is 4294967295 in decimal.

The trusted model computes the expected coincidence rate for each choice of length-8 hexadecimal strings,
and the calibration task is to adjust each channel's delay so that particle arrivals all have the same digit;
in our length-8 hexadecimal-string example,
the arrivals are all 00000000 or 11111111 or 22222222 and so on up to FFFFFFFF.
All these repeating sequences are equivalent because the model prediction is based only on relative input-channel delays so all digits being the same yield the same predicted coincidence rate.

For calibration, a difficult question concerns how many instances of channel-delay choices must be tested to enable solving of the model parameters.
From Eq.~(\ref{eq:exactnumber}),  we know that the number of samples must scale as~$n!/2$,
and the exact expression is known.
Consequently,   to reduce the overhead required in calibrating the fermionic interferometers deterministic and on-demand single fermion sources are necessary. 

\section{Discussion}
\label{sec:discussion}

In studying this problem,
we have obtained some interesting, instructive results along the way, which we now summarize. Instead of simultaneity,
we showed that computing exact multi-particle coincidence rates,
whether bosonic or fermionic,
incorporates mutual pairwise particle distinguishability,
which can be expressed in terms of mode-overlap fidelity.
This mode-overlap fidelity is what is controlled by channel delays between sources and interferometer,
i.e., by `arrival time'.

Our second claim, obtained section \S\ref{sec:exactrates}, is the exact coincidence-rate mathematical expression for any configuration of arrival times, and for any number of particles.
By configuration,
we refer to an array of relative arrival times for the incoming particles, with empty values~$\emptyset$ for empty input channels. In deriving the coincidence rate equations we introduce the delay matrix, which contains all information on the pairwise levels of distinguishability between particles.

As our third claim, we devised in 
\S\ref{subsec:block} an algorithmic method to simplify
the calculation of coincidence rates, given possibly non-simultaneous arrival times.
For any $s\in G_{m,n}$ our method first computes irreducible matrix 
representations of the submatrix $A(s)$ of a Haar-random 
unitary matrix $U$; we then compute  rates from these 
irreducible matrix representations. For $n$ mutually partially distinguishable particles, our method is 
computationally elegant and advantageous because it is independent of the arrival times and of the numerical entries in $A(s)$, but instead leverages permutation symmetries to block diagonalize an $n! \times n!$ matrix into a sum of smaller blocks, thereby focusing the 
computation on the non-zero entries of the block-diagonal 
form. It is a clear counterpoint to 
the more common approach of using only 
permanents~\cite{shchesnovich2015partial,tamma2016boson,tichy2015sampling}.

Our next claim, presented in \S\ref{subsec:timebin}, is that we use this notion of controllable delays to formulate discretized-time coincidence rates in terms of time bins whose temporal width is the precision of channel time-delay control.
Experimentally, and even philosophically,
time steps are not infinitesimally tunable:
a precision limit exists in practice,
and this practical limit determines our coarse-graining scale to establish time bins.
Throughout this paper we have referred to these time bins as discrete arrival times alluding,
somewhat loosely, to controlled relative (to an arbitrary zero reference) arrival times of particles at the interferometer.  Moreover, for a known set of discrete arrival times $\bm\tau$, we further
established in \S\ref{sec:restricting}
that specific blocks in the general sum are 
automatically $0$ as a result of Gamas's theorem; this leads to considerable 
simplifications when computing coincidence rates for a 
known set of discrete arrival times.

The next claim that we highlight in \S\ref{subsec:witnesscomplexity} is our algorithm, which
shows that our full exact coincidence-rate expression 
contains a configuration instance that requires exponentially many arithmetic operations with respect to total fermion
number~$n$.  
We dubbed the partition associated with this configuration instance
the ``witness partition''.
For~$n$ even, this witness partition is $(n/2,n/2)$ and will appear in experiments when the~$n$ fermions are equally distributed over exactly two distinct time bins, or for any delay partition dominated by $(n/2,n/2)$, i.e. experimentally when at most $n/2$ fermions occupy a single time bin. The group functions or alternatively immanants associated with the witness partition needs to be evaluated and the cost of the computation scales at least exponentially with $n$.

For~$n$ odd, the witness partition is $((n+1)/2,(n-1)/2)$ and appears in experiments where $(n+1)/2$ 
{fermions} arrive in a single time bin, while the remaining $(n-1)/2$ fermions arrive in a single but distinct time bin; again for any delay partition dominated by $((n+1)/2,(n-1)/2)$, i.e. experimentally when at most $(n+1)/2$ fermions occupy a single time bin, the group functions or alternatively immanants for the witness partition needs to be evaluated and the cost of the computation also scales
at least exponentially with $n$, where $n$ is the number of mutually partially distinguishable particles.
 
Our sixth claim is that we calculate the probability for nonzero contribution of the above witness partition 
to the exact coincidence-rate calculation given uniformly 
random arrival times, as is expected when a calibration procedure is required.
We show that this instance occurs with probability going to $1$ as~$n$ 
increases. This is discussed in \S\ref{subsec:p}.

Finally, we have formulated the computational problem of \textsc{GenFermionSampling} in Definition \ref{def:Gfermionsampling}. This formulation is simple but important in that simultaneity is not intrinsic to the definition. By discarding any requirement of simultaneity, boson sampling becomes meaningfully extendable to generalized fermion sampling, and we speculate that hard-to-compute complexity can now arise in both problems.

\section{Conclusions}
\label{sec:conclusions}


Our study provides a more complete understanding of the interference of partially-distinguishable particles, but the motivation for our work is stronger:
our result provides an incentive
to investigate the actual computational complexity of the problem defined
in this work.
Thus far, the focus has been on bosons but fermion sampling could extend the range of possible experiments to reach 
quantum advantage.
Multipartite fermionic interferometry is conceivable in various settings \cite{bauerle2018coherent} such as two-dimensional electron gases,
currently restricted to two-electron interferometry.
Perhaps scaling up fermion interferometry could prove to be more feasible, for many particles,
than for photons, which are the currently favoured particle experimentally.

Finally, our work on the $n$-particle interference of bosons and fermions immediately raises questions about more exotic particles.
How can our methods be applied to coincidence rates for anyons, supersymmetric particles or even strings?
We do not broach these challenging topics here;
rather we think that creating a common foundation for fermions and bosons creates well-posed questions that transcend these types of particles.
\section*{Acknowledgements}
  HdG would like to thank David Amaro Alcal\'{a} and 
Ari Boon for help with 
computer diagonalizations of some low-$n$ cases,
and Olivia Di~Matteo for helpful discussions. DS thanks Andrew Berget for helpful discussions.
BCS and HdG acknowledge financial support from NSERC of Canada.
DS acknowledges financial support from the Ontario Graduate 
Scholarship. 

\bibliography{testbib.bib}   

\end{document}